\newtheorem{theorem}{Theorem}[section]
\newtheorem{proposition}[theorem]{Proposition}
\theoremstyle{definition}
\theoremstyle{remark}
\numberwithin{equation}{section}
\newcommand{\D}{{\mathrm{d}}}
\newcommand{\aaa}{\mbox{\boldmath$a$}}
\newcommand{\kk}{\mbox{\boldmath$k$}}
\newcommand{\cc}{\mbox{\boldmath$c$}}
\newcommand{\e}{\mbox{\boldmath$e$}}
\newcommand{\ff}{\mbox{\boldmath$f$}}
\newcommand{\uu}{\mbox{\boldmath$u$}}
\newcommand{\qq}{\mbox{\boldmath$q$}}
\newcommand{\rr}{\mbox{\boldmath$r$}}
\newcommand{\vv}{\mbox{\boldmath$v$}}
\newcommand{\xx}{\mbox{\boldmath$x$}}
\newcommand{\II}{\mbox{\bf  1}}
\newcommand{\VV}{\mbox{\boldmath$V$}}
\newcommand{\tr}{\mbox{tr}}
\newcommand{\s}{\mbox{\boldmath$\sigma$}}
\begin{document}

\title[Hilbert's 6th Problem: Exact Hydrodynamics]{Hilbert's 6th Problem: Exact and Approximate Hydrodynamic Manifolds for Kinetic Equations}


\author{Alexander N. Gorban}
\address{Department of Mathematics, University of Leicester, LE1 7RH, Leicester, UK}
\curraddr{}
\email{ag153@le.ac.uk}
\thanks{}

\author{Ilya Karlin}
\address{Department of Mechanical and Process Engineering, ETH Z\"urich
                 8092 Z\"urich, Switzerland}
\curraddr{}
\email{karlin@lav.mavt.ethz.ch}
\thanks{}

\subjclass[2010]{76P05, 82B40, 35Q35}

\date{}

\dedicatory{}

\begin{abstract}
The problem of the derivation of hydrodynamics from the Boltzmann equation and related
dissipative systems is formulated as the problem of slow invariant manifold in the space
of distributions. We review a few instances where such hydrodynamic manifolds were found
analytically both as the result of summation of the Chapman--Enskog asymptotic expansion
and by the direct solution of the invariance equation. These model cases, comprising
Grad's moment systems, both linear and nonlinear, are studied in depth in order to gain
understanding of what can be expected for the Boltzmann equation. Particularly, the
dispersive dominance and saturation of dissipation rate of the exact hydrodynamics in the
short-wave limit and the viscosity modification at high divergence of the flow velocity
are indicated as severe obstacles to the resolution of Hilbert's 6th Problem.
Furthermore, we review the derivation of the approximate hydrodynamic manifold for the
Boltzmann equation using Newton's iteration and avoiding smallness parameters, and
compare this to the exact solutions. Additionally, we discuss the problem of projection
of the Boltzmann equation onto the approximate hydrodynamic invariant manifold using
entropy concepts. Finally, a set of hypotheses is put forward where we describe open
questions and set a horizon for what can be derived exactly or proven about the
hydrodynamic manifolds for the Boltzmann equation in the future.
\end{abstract}

\maketitle

\tableofcontents

\section{Introduction}
\subsection{Hilbert's 6th Problem}
The 6th Problem differs significantly from the other 22 Hilbert's problems
\cite{HilbertProblems}.  The title of the problem  itself is mysterious: ``Mathematical
treatment of the axioms of physics''. Physics, in its essence, is a special activity for
the creation, validation and destruction of theories for real-world phenomena, where ``We
are trying to prove ourselves wrong as quickly as possible, because only in that way can
we find progress'' \cite{FeynmanCHaracter}. There exist no mathematical tools to
formalize relations between Theory and Reality in live Physics. Therefore the 6th Problem
may be viewed as a tremendous challenge in deep study of ideas of physical reality in
order to replace vague philosophy by a new logical and mathematical discipline. Some
research  in quantum observation theory and related topics can be viewed as steps in that
direction, but it seems that at present we are far from an understanding of the most
logical and mathematical problems here.

The first explanation of the 6th Problem given by Hilbert reduced the level of challenge
and made the problem more tractable:  ``The investigations on the foundations of geometry
suggest the problem: To treat in the same manner, by means of axioms, those physical
sciences in which mathematics plays an important part; in the first rank are the theory
of probabilities and mechanics''. This is definitely ``a programmatic call''
\cite{Corry1997} for the axiomatization of the formal parts of {\em existent} physical
theories and no new universal logical framework for the representation of reality is
necessary. In this context, the axiomatic approach is a tool for the  retrospective
analysis of well-established and elaborated physical theories \cite{Corry1997} and not
for live physics.

For the general statements of the 6th Problem it seems unclear now how to formulate
criteria of solutions. In a further explanation Hilbert proposed two specific problems:
(i) axiomatic treatment of probability with limit theorems for foundation of statistical
physics and (ii) the rigorous theory of limiting processes ``which lead from the
atomistic view to the laws of motion of continua''. For complete resolution of these
problems Hilbert has set no criteria either but some important parts of them have been
already claimed as solved. Several axiomatic approaches to probability have been
developed and the equivalence of some of them has been proven \cite{Gnedenko1952}.
Kolmogorov's axiomatics (1933) \cite{Kolmogorov1933} is now accepted as standard. Thirty
years later, the complexity approach to randomness was invented by Solomonoff and
Kolmogorov (see the review \cite{ZvonkinLevin1970} and the textbook
\cite{LiVitannyi1997}). The rigorous foundation of equilibrium statistical physics of
many particles based on the central limit theorems was proposed
\cite{Khinchin1949,Dobrushin1977}. The modern development of the limit theorems in high
dimensions is based on the geometrical ideas of the measure concentration effects
\cite{Gromov2000,Talagrand1996} and gives new insights into the foundation of statistical
physics (see, for example, \cite{GorbanOrder2007,Talagrand2000}). Despite many open
questions, this part of the Hilbert programme is essentially fulfilled -- the probability
theory and the foundations of equilibrium statistical physics are now well-established
chapters of mathematics.

The way from the ``atomistic view to the laws of motion of continua'' is not so well
formalized. It includes at least two steps: (i) from mechanics to kinetics (from Newton
to Boltzmann) and (ii) from kinetics to mechanics and non-equilibrium thermodynamics of
continua (from Boltzmann to Euler and Navier--Stokes--Fourier).

The first part of the problem, the transition from the reversible--in--time equations of mechanics to
irreversible kinetic equations, is still too far from a complete rigorous theory. The highest achievement
here is the proof that rarefied gas of hard spheres will follow the Boltzmann equation
during a fraction of the collision time, starting from a non-correlated initial
state \cite{Lanford1975,GallagherS-R}. The BBGKY hierarchy \cite{Bogoliubov1946} provides the general
framework for this problem. For the systems close to global thermodynamic equilibrium
the global in time estimates are available and the validity of the linearized
Boltzmann equation is proven recently in this limit for rarefied gas of hard spheres \cite{BodineauS-R2013}.

The second part, model reduction in dissipative systems, from kinetics to macroscopic dynamics, is ready
for a mathematical treatment. Some limit theorems about this model reduction are already
proven (see the review book \cite{Saint-RaymondBook} and the companion paper by L.
Saint-Raymond \cite{Saint-RaymondCompanion} in this volume), and open questions can be
presented in a rigorous mathematical form. Our review is focused on this model reduction
problem, which is important in many areas of kinetics, from the Boltzmann equation to
chemical kinetics. There exist many similar heuristic approaches for different
applications \cite{GorKarLNP2005,Maas,RouFra,SeSlem}.

It seems that Hilbert presumed the kinetic level of description (the ``Boltzmann level'')
as an intermediate step between the microscopic mechanical description and the continuum mechanics. Nevertheless, this
intermediate description may be omitted. The transition from the microscopic to the macroscopic
description without an intermediate kinetic equation is used in many physical theories like
the Green--Kubo formalism \cite{Kubo1957}, the Zubarev method of a nonequilibrium
statistical operator \cite{Zubarev1974}, and the projection operator techniques \cite{Grabert1982}.
This possibility is demonstrated rigorously for a rarefied gas near global equilibrium \cite{BodineauS-R2013}.

The reduction from the Boltzmann kinetics to hydrodynamics may be split into three
problems: existence of hydrodynamics, the form of the hydrodynamic equations and the
relaxation of the Boltzmann kinetics to hydrodynamics. Formalization of these problems is
a crucial step in the analysis.

Three questions arise:
\begin{enumerate}
\item{Is there hydrodynamics in the kinetic equation, i.e., is it possible to lift
    the hydrodynamic fields to the relevant one-particle distribution functions in
    such a way that the projection of the kinetics of the relevant distributions
    satisfies some hydrodynamic equations?}
\item{Do these hydrodynamics have the conventional Euler and Navier--Stokes--Fourier
    form?}
\item{Do the solutions of the kinetic equation degenerate to the hydrodynamic regime
    (after some transient period)?}
\end{enumerate}

The first question is the problem of existence of a hydrodynamic invariant manifold for
kinetics (this manifold should be parameterized by the hydrodynamic fields). The second
one is about the form of the hydrodynamic equations obtained by the natural projection of
kinetic equations from the invariant manifold. The third question is about the
intermediate asymptotics of the relaxation of kinetics to equilibrium: do the solutions
go fast to the hydrodynamic invariant manifold and then follow this manifold on the path
to equilibrium?

The answer to all three questions is essentially positive in the asymptotic regime when
the Mach number $M\!a$ and the Knudsen number $K\!n$ tend to zero
\cite{BardosGolLever1991,GolseS-R2004} (see
\cite{Saint-RaymondBook,Saint-RaymondCompanion}). This is a limit of very slow flows with
very small gradients of all fields, i.e. almost no flow at all. Such a flow changes in
time very slowly and  a rescaling of time $t_{\rm old} = t_{\rm new}/\varepsilon$ is
needed to return it to non-trivial dynamics (the so-called diffusive rescaling). After
the rescaling, we approach in this limit the Euler and Navier--Stokes--Fourier
hydrodynamics of {\em incompressible} liquids.

Thus in the limit $M\!a,K\!n \to 0$ and after rescaling the 6th Hilbert Problem is
essentially resolved and the result meets Hilbert's expectations: the continuum equations
are rigorously derived from the Boltzmann equation. Besides the limit the answers are
known partially. To the best of our knowledge, now the answers to these three questions
are: (1) sometimes; (2) not always; (3) possibly.

Some hints about the problems with hydrodynamic asymptotics can be found in the series of
works about the small dispersion limit of the Korteweg--de Vries equation
\cite{LaxLever1983}. Recently, analysis of the exact solution of the model reduction
problem for a simple kinetic model \cite{GKPRL96,SlemCAMWA2013} has demonstrated that a
hydrodynamic invariant manifold may exist and produce non-local hydrodynamics. Analysis
of more complicated kinetics
\cite{KGAnPh2002,JPA00,KarColKroe2008PRL,ColKarKroe2007_1,ColKarKroe2007_2} supports and
extends these observations: the hydrodynamic invariant manifold may exist but sometimes
does not exist, and the hydrodynamic equations when $M\!a \nrightarrow 0$ may differ
essentially from the Euler and Navier--Stokes--Fourier equations.

At least two effects prevent us from  giving positive answers to the first two  questions
outside of the limit $M\!a, K\!n \to 0$:
\begin{itemize}
\item{Entanglement between the hydrodynamic and non-hydrodynamic modes may destroy
    the hydrodynamic invariant manifold.}
\item{Saturation of dissipation at high frequencies is a universal effect that is
    impossible in the classical hydrodynamic equations.}
\end{itemize}
These effects appear already in simple linear kinetic models and are studied in detail
for the exactly solvable reduction problems. The {\em entanglement between the
hydrodynamic and non-hydrodynamic modes} manifests itself in many popular moment
approximations for the Boltzmann equation. In particular, it exists for the
three-dimensional 10-moment and 13-moment Grad systems
\cite{JPA00,KGAnPh2002,GorKarLNP2005,ColKarKroe2007_1,ColKarKroe2007_2} but the numerical
study of the hydrodynamic invariant manifolds for the BGK model equation
\cite{KarColKroe2008PRL} demonstrates the absence of such an entanglement. Therefore, our
conjecture is that for the Boltzmann equation the exact hydrodynamic modes are separated
from the non-hydrodynamic ones if the linearized collision operator has a spectral gap
between the five times degenerated zero and the rest of the spectrum.

The {\em saturation of dissipation} seems to be a universal phenomenon
\cite{RosenauSat1989,GorKarTTSPreg1993,GKTTSP94,KurganovRosenau1997,Slem2,KGAnPh2002,GorKarLNP2005}.
It appears in all exactly solved reduction problems for kinetic equations
\cite{KGAnPh2002} and in the Bhatnagar--Gross--Krook \cite{BGK}  (BGK) kinetics
\cite{KarColKroe2008PRL} and is also proven for various regularizations of the
Chapman--Enskog expansion \cite{RosenauSat1989,GorKarTTSPreg1993,Slem2,GorKarLNP2005}.

The answer to Hilbert's 6th  Problem concerning transition from the Boltzmann equation to
the classical equations of motion of compressible continua ($M\!a \nrightarrow 0$) may
turn out to be negative. Even if we can overcome the first difficulty, separate the
hydrodynamic modes from the non-hydrodynamic ones (as in the exact solution
\cite{GKPRL96} or for the BGK equation \cite{KarColKroe2008PRL}) and produce the
hydrodynamic equations from the Boltzmann equation, the result will be manifestly
different from the conventional equations of hydrodynamics.

\subsection{The main equations\label{Sec:MainEq}}

We discuss here two groups of examples. The first of them consists of kinetic equations
which describe the evolution of a one-particle gas distribution function $f(t, \xx; \vv)$
\begin{equation}\label{BOL0}
    \partial_t f+ \vv \cdot \nabla_x f= \frac{1}{\epsilon}Q(f),
\end{equation}
where $Q(f)$ is the collision operator. For the Boltzmann equation, $Q$ is a quadratic
operator and, therefore, the notation $Q(f,f)$ is often used.

The second group of examples are the systems of Grad moment equations
\cite{Grad,Bob,EIT,GorKarLNP2005}. The system of 13-moment Grad equations linearized near
equilibrium is
\begin{equation}
\begin{split}
\label{balanceequations}
\partial_t \rho &=-\nabla\cdot{\uu},\\
\partial_t {\uu} &=-\nabla \rho-\nabla T -\nabla\cdot
\s ,\\
\partial_t T &=-\frac{2}{3}(\nabla\cdot{\uu}+\nabla\cdot{\qq})
,
\end{split}
\end{equation}
\begin{equation}
\label{Grad133e}
\begin{split}
\partial_t \s &=-2\overline{\nabla {\uu}}-\frac{4}{5}
\overline{\nabla {\qq}}-\frac{1}
{\epsilon}\s,\\
\partial_t{\qq}&=-\frac{5}{2}\nabla T-\nabla\cdot\s -\frac{2}{3\epsilon}{\qq}.
\end{split}
\end{equation}

In these equations, $\s({\xx},t)$ is the dimensionless stress tensor, $\s = (\sigma_{ij})$,
and ${\qq({\xx},t)}$ is the dimensionless vector of heat flux, $\qq=(q_i)$. We use the
system of units in which Boltzmann's constant $k_{\rm B}$ and the particle mass $m$ are
equal to one, and the system of dimensionless variables:
\begin{equation}
\label{AnnPhysvar} {\uu}=\frac{\delta{\uu}}{\sqrt{T_0}},\ \rho=\frac{\delta\rho}{\rho_0},\
T=\frac{\delta T}{T_0}, \xx=\frac{\rho_0}{\eta(T_0)\sqrt{T_0}}\xx^{\prime}, \
t=\frac{\rho_0}{\eta(T_0 )}t^{\prime},
\end{equation}
where ${\xx}^{\prime}$ are spatial coordinates, and $t^{\prime}$ is time.

The dot denotes the standard scalar product, while the overline indicates the symmetric
traceless part of a tensor. For a tensor $\aaa=(a_{ij })$ this part is
$$\overline{\aaa}=\frac{1}{2}(\aaa + \aaa^T)- \frac{1}{3}{I} \tr(\aaa), $$ where ${I}$ is unit matrix.
In particular,
$$
\overline {{\nabla}{\uu}}=\frac{1}{2}({\nabla}{\uu}+({\nabla}{\uu})^T
-\frac{2}{3}{I}{\nabla}\cdot{\uu}),
$$
where $I=(\delta_{ij}$ is the identity matrix.

We also study a simple model of a coupling of the hydrodynamic variables, ${\uu}$ and $p$
($p(\xx,t)=\rho(\xx,t)+T(\xx,t)$), to the non-hydrodynamic variable ${\s}$, the 3D
linearized Grad equations for 10 moments $p$, ${\uu}$, and $\s $:
\begin{equation}
\label{Grad103}
\begin{split}
\partial_t p &=-\frac{5}{3}\nabla \cdot {\uu},\\
\partial_t {\uu} &=-\nabla p -\nabla\cdot\s ,
\\
\partial_t  \s &=-2\overline{\nabla{\uu}}
-\frac{1}{\epsilon}\s.
\end{split}
\end{equation}
Here, the coefficient $\frac{5}{3}$ is the adiabatic exponent of the 3D ideal gas.

The simplest model and the starting point in our analysis is the reduction of the system
(\ref{Grad103}) to the functions that depend on one space coordinate $x$ with the
velocity $\uu$ oriented along the $x$ axis:
\begin{equation}
\label{Grad101}
\begin{split}
\partial_t p &=-\frac{5}{3}\partial_x u,\\
\partial_t u &=-\partial_x p -\partial_x \sigma,\\
\partial_t \sigma &=-\frac{4}{3}\partial_x u
-\frac{1}{\epsilon}\sigma ,
\end{split}
\end{equation}
where $\sigma$ is the dimensionless $xx$-component of the stress tensor and the equation
describes the unidirectional solutions of the previous system (\ref{Grad103}).

These equations are elements of the staircase of simplifications, from the Boltzmann
equation to moment equations of various complexity, which was introduced by Grad
\cite{Grad} and elaborated further by many authors. In particular, Levermore proved hyperbolicity
of the properly constructed moment equations \cite{LeverMoments}.
This staircase forms the basis of the
Extended Irreversible Thermodynamics (EIT \cite{EIT}).

\subsection{Singular perturbation and separation of times in kinetics\label{Sec:TimeSep}}

The kinetic equations are singularly perturbed with a small parameter $\epsilon$ (the
``Knudsen number'') and we are interested in the asymptotic properties of solutions when
$\epsilon$ is small. The physical interpretation of the Knudsen number is the ratio of
the ``microscopic lengths'' (for example, the mean free path) to the ``macroscopic
scale'', where the solution changes significantly. Therefore, its definition depends on
the properties of solutions. If the space derivatives are uniformly bounded, then we can
study the asymptotic behavior $\epsilon \to 0$. But for some singular solutions this
problem statement may be senseless. The simple illustration of  rescaling with the
erasing of $\epsilon$ gives the set of travelling automodel  solutions for  (\ref{BOL0}).
If we look for them in a form $f=\varphi (\boldsymbol{\xi},\vv)$ where
$\boldsymbol{\xi}=(\xx-\cc t)/\epsilon$ then the equation for $\varphi
(\boldsymbol{\xi},\vv)$ does not depend on $\epsilon$:

$$    (\vv-\cc)\cdot \nabla_{\xi} \phi= Q(\phi).$$

In general, $\epsilon$ may be considered as a variable that is neither small nor large and the problem is to analyze the dependence of
solutions on $\epsilon$.

For the Boltzmann equation (\ref{BOL0}) the collision term $Q(f)$ does not enter directly
into the time derivatives of the hydrodynamic variables, $\rho=\int f \D \vv$, $\uu =
\int \vv f \D \vv$ and $T= \int (\vv-\uu)^2 f \D \vv$ due to the mass, momentum
and energy conservation laws
$$\int \{1; \vv; (\vv-\uu)^2 \} Q(f) \D \vv =0.$$
The following dynamical system point of view is valid  for smooth solutions in a bounded
region with no-flux and equilibrium boundary conditions, but it is used with some success
much more widely. The collision term is ``fast'' (includes the large parameter
$1/\varepsilon$) and does not affect the macroscopic hydrodynamic variables directly.
Therefore, the following qualitative picture is expected for the solutions: (i) the
collision term goes quickly almost to its equilibrium (the system almost approaches a
local equilibrium) and during this fast initial motion the changes of hydrodynamic
variables are small, (ii) after that the distribution function is defined with high
accuracy by the hydrodynamic variables (if they have bounded space derivatives). The
relaxation of the collision term almost to its equilibrium is supported by monotonic
entropy growth (Boltzmann's $H$-theorem). This qualitative picture is illustrated in
Fig.~\ref{fig1SLw}.

Such a ``nonrigorous picture of the Boltzmann dynamics'' \cite{DescvilVill2005} which
operates by the manifolds in the space of probability distributions is a seminal tool for
production of qualitative hypotheses. The points (`states') in Fig.~\ref{fig1SLw}.
correspond to the distributions $f(\xx,\vv)$, and the points in the projection correspond to the
hydrodynamic fields in space.

\begin{figure}
\begin{centering}
\boxed{\includegraphics[width=0.8 \textwidth]{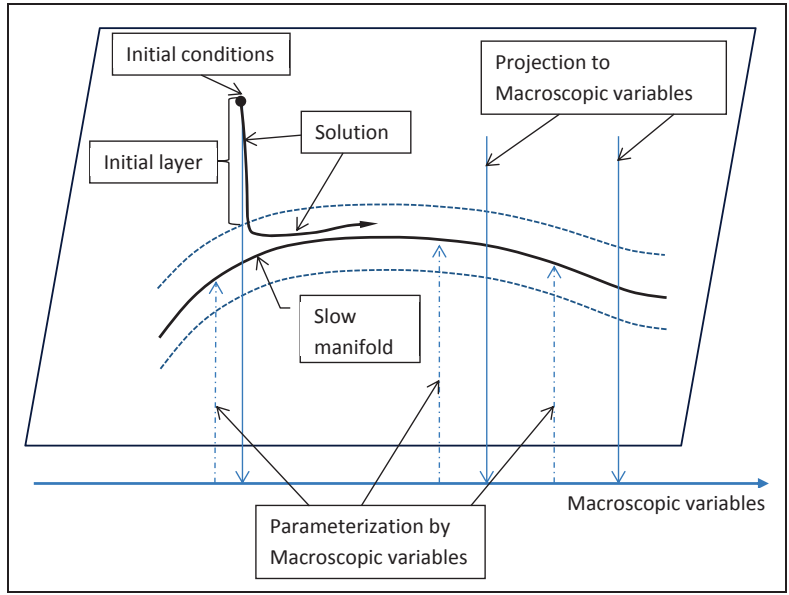}}
\caption {\label{fig1SLw} Fast--slow decomposition. Bold dashed lines outline the
vicinity of the slow manifold where the solutions stay after initial layer.
The projection of the distributions onto the hydrodynamic fields and the
parametrization of this manifold by the hydrodynamic fields are represented. }
\end{centering}
\end{figure}

For the Grad equations (\ref{balanceequations})-(\ref{Grad133e}), (\ref{Grad103}) and
(\ref{Grad101}) the hydrodynamic variables $\rho, \uu, T$ are explicitly separated from
the fluxes and the projection onto the hydrodynamic fields is just the selection of the
hydrodynamic part of the set of all fields. For example, for (\ref{Grad101}) this is just
the selection of  $p(\xx), \uu(\xx)$ from the  whole set of fields $p(\xx), \uu(\xx),
\boldsymbol{\sigma} (\xx)$. The expected qualitative picture for smooth solutions is the
same as in Fig.~\ref{fig1SLw}.

For finite-dimensional ODEs, Fig.~\ref{fig1SLw} represents the systems which satisfy the
Tikhonov singular perturbation theorem \cite{Tikhonov1952}. In some formal sense, this picture
for the Boltzmann equation is also rigorous when $\epsilon \to 0$
and is proven in \cite{BardosGolLever1991}. Assume that $f^{\epsilon}(t,\xx,\vv)$ is
a sequence of nonnegative solutions of the Boltzmann equation (\ref{BOL0}) when $\epsilon \to 0$ and there exists
a limit $f^{\epsilon}(t,\xx,\vv) \to f^0(t,\xx,\vv)$. Then (under some additional regularity conditions), this limit
$ f^0(t,\xx,\vv)$ is a local Maxwellian and the corresponding moments satisfy the compressible Euler equation. According to
\cite{Saint-RaymondBook}, this is ``the easiest of all hydrodynamic limits of the Boltzmann
equation at the formal level''.

The theory of singular
perturbations was developed starting from complex systems, from the Boltzmann equation (Hilbert
\cite{HilbertKinetishe}, Enskog \cite{Ens}, Chapman \cite{Chapman}, Grad
\cite{Grad,GradENC}) to ODEs. The recently developed geometric theory of singular
perturbation \cite{Fenichel0,Fenichel9,JonesGSP1995} can be considered as a formalization
of the Chapman--Enskog approach for the area where complete rigorous theory is
achievable.

A program  of the derivation of (weak) solutions of the Navier--Stokes equations from the (weak) solutions of the Boltzmann equation was formulated
in 1991 \cite{BardosGolLever1991} and finalized in 2004 \cite{GolseS-R2004} with the answer: the incompressible Navier--Stokes (Navier--Stokes--Fourier) equations appear in a limit of appropriately scaled solutions of the Boltzmann equation.

We use the geometry of time-separation (Fig.~\ref{fig1SLw}) as a guide for formal
constructions and present further development of this scheme using some ideas from
thermodynamics and dynamics.

\subsection{The structure of the paper\label{Sec:Structure}}

In  Sec.~\ref{Sec:InvEq} we introduce the invariance equation for invariant manifolds. It
has been studied by Lyapunov (Lyapunov's auxiliary theorem \cite{Lya1992},
Theorem~\ref{LyaAuxTHeo} below). We describe the structure of the invariance equations
for the Boltzmann and Grad equations and in Sec.~\ref{Sec:ChEnsk} construct the
Chapman--Enskog expansion for the solution of the invariance equation.

It may be worth stressing that the invariance equation is a {\em nonlinear equation} and
there is no known general method to solve them even for {\em linear} differential
equations. The main construction is illustrated on the simplest kinetic equation
(\ref{Grad101}): in Sec.~\ref{Sec:SimplestEuler...SuperBur} the Euler, Navier--Stokes,
Burnett, and super--Burnett terms are calculated for this equation and the ``ultraviolet
catastrophe'' of the Chapman--Enskog series is demonstrated (Fig.~\ref{Attenuation}).

The first example of the exact summation of the Chapman--Enskog series is presented in
detail for the simplest system (\ref{Grad101}) in  Sec.~\ref{Sec:Exact}. We analyze the
structure of the Chapman--Enskog series and find the pseudodifferential representation of
the stress tensor on the hydrodynamic invariant manifold. Using this representation, in
Sec.~\ref{Sec:caplillarity} we represent the energy balance equation in the
``capillarity--viscosity'' form proposed by Slemrod \cite{SlemCAMWA2013}. This form
explains the macroscopic sense of the dissipation saturation effect: the attenuation rate
does not depend on the wave vector $k$ for short waves (it tends to a constant value when
$k^2 \to \infty$). In the highly non-equilibrium gas the capillarity energy becomes
significant and it tends to infinity for high velocity gradients.

In the Fourier representation, the invariance equation for (\ref{Grad101}) is a system of
two coupled quadratic equations with linear in $k^2$ coefficients (Sec.~\ref{IEFourier}).
It can be solved in radicals and the corresponding hydrodynamics has the acoustic waves
decay with saturation (Sec.~\ref{stabSat}). The hydrodynamic invariant manifold for
(\ref{Grad101}) is analytic at the infinitely-distant point  $k^2=\infty$. Matching of
the first terms of the Taylor series in powers of $1/k^2$ with the first terms of the
Chapman--Enskog series gives simpler hydrodynamic equations with qualitatively the same
effects and even quantitatively the same saturation level of attenuation of acoustic
waves (Sec.~\ref{HighFreqAs}). We may guess that the matched asymptotics of this type
include all the essential information about hydrodynamics both at low and high
frequencies.

The construction of the invariance equations in the Fourier representation remains the
same for a general linear kinetic equation (Sec.~\ref{GenLinIE}). The exact hydrodynamics
on the invariant manifolds always inherits many important properties of the original
kinetics, such as dissipation and conservation laws. In particular,  if the original
kinetic system is hyperbolic then for bounded hydrodynamic invariant manifolds the
hydrodynamic equations are also hyperbolic (Sec.~\ref{Sec:Hyperbol}).

In Sec.~\ref{Sec:ExactLinKin}, we study the invariance equations for three systems: 1D
solutions of the 13 moment Grad system (Sec.~\ref{Sec:GradDestroy}), the full 3D 13
moment Grad system (Sec.~\ref{Sec:Grad3DDestroy}), and the linearized BGK kinetic
equation (Sec.~\ref{Sec:BGKexist}). The 13 moment Grad system demonstrates an important
effect: the invariance equation may lose the physically meaningful solution for short
waves. Therefore, existence of the exact hydrodynamic manifold is not compulsory for all
the usual kinetic equations. Nevertheless, for the BGK equation with the complete
advection operator $\vv \cdot \nabla$ the invariance equation exists for short waves too
(as is  demonstrated  numerically in \cite{KarColKroe2008PRL}).

For nonlinear kinetics, the exact solutions to the invariance equations are not known. In
Sec.~\ref{Sec:NlinIM} we demonstrate two approaches to approximate invariant manifolds.
First, for the nonlinear Grad equation we find  the leading terms of the Chapman--Enskog
series in the order of the Mach number and exactly sum  them. For this purpose, we
construct the approximate invariant manifold and find the solution for the nonlinear
viscosity in the form of an ODE (Sec.~\ref{Sec:NlinVis}). For the 1D solutions of the
Boltzmann equation we construct the invariance equation and demonstrate the result of the
first Newton--Kantorovich iteration for the solution of this equation
(Sec.~\ref{IMBoltzmann} and \cite{GKTTSP94,GorKarLNP2005}). Use of the {\em approximate}
invariant manifolds causes a problem of dissipativity preservation in the hydrodynamics
on these manifolds. There exists a unique modification of the projection operator that
guarantees the preservation of entropy production for hydrodynamics produced by
projection of kinetics onto an approximate invariant manifold even for rough
approximations \cite{UNIMOLD}. This construction is presented in
Sec.~\ref{Sec:Projector}. In Conclusion, we discuss solved and unsolved problems and
formulate several hypotheses.

\section{Invariance equation and Chapman--Enskog expansion\label{Sec:InvEq}}

\subsection{The idea of invariant manifold in kinetics\label{Sec:IdeaIMkinetics}}

Very often, the Chapman--Enskog expansion for the Boltzmann equation is introduced as an
asymptotic expansion in powers of $\epsilon$ of the solutions of equation
(\ref{BOL0}), which should depend on time only through time dependence of the macroscopic
hydrodynamic fields. Historically, the definition of the method is ``procedure
oriented'': an expansion is created step by step with the leading idea that solutions
should depend on time only through the macroscopic variables  and their derivatives. In
this approach what we are looking for often remains  hidden.

The result of the Chapman--Enskog method is not a solution  of the kinetic equation but
rather the proper parametrization of microscopic variables (distribution functions) by
the macroscopic (hydrodynamic) fields. It is a {\em lifting procedure}: we take the
hydrodynamic fields and find for them the corresponding distribution function. This
lifting should be consistent with the kinetics, i.e. the set of the corresponding
distributions (collected for all possible hydrodynamic fields) should be invariant with
respect to a shift in time. Therefore, the Chapman--Enskog procedure looks for an
invariant manifold for the kinetic equation which is close to the local equilibrium for a
small Knudsen number and smooth hydrodynamic fields with bounded derivatives. This is the
``object oriented'' description of the Chapman--Enskog procedure.

The puzzle in the statement of the problem of transition from kinetics to hydrodynamics
has been so deep that Uhlenbeck called it the ``Hilbert paradox'' \cite{Uhlenbeck1980}.
In the reduced hydrodynamic description, the state of a gas is completely determined if
one knows initially the space dependence of the five macroscopic variables $p$, $\uu$,
and $T$. Uhlenbeck has found this impossible: ``On the one hand it couldn't be true,
because the initial-value problem for the Boltzmann equation (which supposedly gives a
better description of the state of the gas) requires the knowledge of the initial value
of the distribution function $f(\rr,\vv,t)$ of which $p$, $\uu$, and $T$ are only the
first five moments in v. But on the other hand the hydrodynamical equations surely give a
causal description of the motion of a fluid. Otherwise how could fluid mechanics be
used?''

Perhaps, McKean gave the first clear explanation of the problem as a construction of a
`nice submanifold' where `the hydrodynamical equations define the same flow as the (more
complicated) Boltzmann equation does' \cite{McKean1965}. He presented the problem by a
partially commutative diagram and we use this idea in slightly revised form in
Fig.~\ref{McKeanDiag}.

\begin{figure}
\begin{centering}
\boxed{\includegraphics[width=0.8 \textwidth]{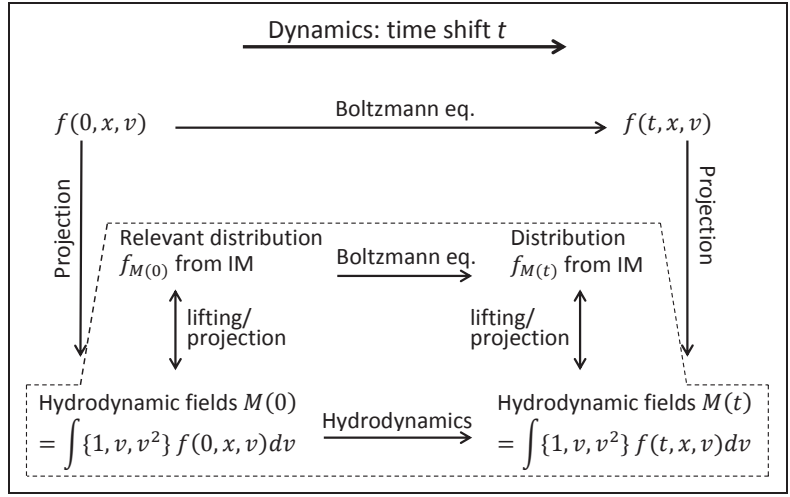}}
\caption {\label{McKeanDiag}{\em McKean diagram.} The Chapman--Enskog procedure aims to
create a lifting operation, from the hydrodynamic variables to
the corresponding distributions on the invariant manifold. IM stands for Invariant Manifold.
The part of the diagram in the dashed polygon is commutative.}
\end{centering}
\end{figure}

The invariance equation just expresses the fact that the vector field is tangent to the
manifold. The invariance equation has the simplest form for manifolds parameterized by
moments, i.e. by the values of the given linear functionals. Let us consider an equation
in a domain $U$ of a normed space $E$ with analytical (at least, Gateaux-analytical)
right hand sides
\begin{equation}\label{AbstractEq}
\partial_t f = J(f).
\end{equation}
A space of macroscopic variables (moment fields) is defined with a surjective linear map
to them $m: f \mapsto M$ ($M$ are macroscopic variables). Below when referring to a
manifold parameterized with the macroscopic fields $M$ we use the notation $\ff_M$. We
are looking for an invariant manifold  $\ff_{M}$ parameterized by the value of $M$, with
the self-consistency condition $m(\ff_{M})=M$.

The invariance equation is
\begin{equation}\label{IM}
\boxed
{J(\ff_M)=(D_M \ff_M) m(J(\ff_M)).}
\end{equation}
Here, the differential $D_M$ of $\ff_M$ is calculated at the point $M=m(\ff_M)$.

Equation (\ref{IM}) means that the time derivative of $\ff$ on the manifold $\ff_M$ can be calculated by a simple chain rule:
calculate the derivative of $M$ using the map $m$, $\dot{M}=m(J(\ff_M))$, and then write that the time dependence of $\ff$
can be expressed through the time dependence of $M$.
If we find the approximate solution to eq. (\ref{IM}) then the approximate reduced model (hydrodynamics) is
\begin{equation}\label{projectedMIM}
\partial_t M=m(J(\ff_M)) .
\end{equation}
The invariance equation can be represented in the form
$$\partial^{\rm micro}_t \ff_M = \partial^{\rm macro}_t \ff_M,$$
where the microscopic time derivative, $\partial^{\rm micro}_t \ff_M$ is just a value of
the vector field $J(\ff_M)$ and the macroscopic time derivative is calculated by the
chain rule, $$\partial^{\rm macro}_t \ff_M= (D_M \ff_M) \partial_t M $$ under the
assumption that dynamics of $M$ follows the projected  equation (\ref{projectedMIM}).

We use the natural (and naive) moment-based projection (\ref{projectedMIM}) till
Sec.~\ref{Sec:Projector} where we demonstrate that in many situations the modified
projectors are more suitable from thermodynamic point of view. In addition, the flexible
choice of projectors allows us to treat various nonlinear functionals (like scattering
rates) as macroscopic variables \cite{GorKarPRE1996,GKPhysA2006}.

If $\ff_M$ is a solution to the invariance equation (\ref{IM})  then the reduced model
(\ref{projectedMIM}) has two important properties:
\begin{itemize}
\item{{\bf Preservation of conservation laws.} If a differentiable functional $U(f)$
    is conserved due to the initial kinetic equation (\ref{AbstractEq}) then the
    functional $U_M=U(\ff_M)$ conserves due to reduced system  (\ref{projectedMIM}),
    i.e. it has zero time derivative due to this system.}
\item{{\bf Preservation of dissipation.} If the time derivative of a differentiable
    functional $H(f)$ is non-positive due to the initial kinetic equation, then the
    time derivative of the functional $H_M=H(\ff_M)$ is also
 non-positive due to reduced system.}
 \end{itemize}
These elementary properties are the obvious consequences of the invariance equation
(\ref{IM}) and the chain rule for  differentiation. Indeed, for every differentiable
functional $S(f)$ we introduce a functional $S_M=S(\ff_M)$. Then for the time derivative
of $S_M$ due to projected equation (\ref{projectedMIM}) coincides with the time
derivative of $S(f)$ at point $f=\ff_M$ due to (\ref{AbstractEq}).  (Preservation of time
derivatives.) Despite the very elementary character of these properties, they may be
extremely important in the construction of the energy and entropy formulas for the projected
equations (\ref{projectedMIM}) and in the proof of the $H$-theorem and hyperbolicity.

The difficulties with preservation of conservation laws and dissipation inequalities may
occur when one uses the approximate solutions of the invariance equation. For these
situations, two techniques are invented: modification of the projection operation (see
\cite{GK1,UNIMOLD} and Sec.~\ref{Sec:Projector} below) and modification of the entropy
functional \cite{GrmelaCAMWA2013,GrmelaAdv2010}. They allow to retain the dissipation
inequality for the approximate equations.

It is obvious that the invariance equation (\ref{IM}) for dynamical systems usually has
too many solutions, at least locally, in a vicinity of any non-singular point. For
example, every trajectory of (\ref{AbstractEq}) is a 1D invariant manifold and if a
manifold $\mathcal{L}$ is transversal to a vector field $J$ then the trajectory of
$\mathcal{L}$ is invariant.

Lyapunov used the {\em analyticity} of the invariant manifold for finite-dimensional
analytic vector fields $J$ to prove its existence and uniqueness near a fixed point
$\ff_0$ if $\ker m$ is a invariant subspace of the Jacobian $(DJ)_0$ of $J$ at this point
and under some ``no resonance'' conditions (the Lyapunov  auxiliary theorem
\cite{Lya1992}). Under these conditions, there exist many smooth non-analytical
manifolds, but the analytical one is unique.
\begin{theorem}[Lyapunov auxiliary theorem]\label{LyaAuxTHeo}
Let $\ker m$ have a $(DJ)_0$-invariant supplement $(\ker m)'$, $E=\ker m\oplus (\ker
m)'$. Assume that  the restriction $(DJ)_0$ onto $\ker m$ has  the spectrum $\kappa_1,
\ldots , \kappa_j$ and the restriction of this operator on the supplement $(\ker m)'$ has
the spectrum $\lambda_1, \ldots , \lambda_l$. Let the two following conditions hold:
\begin{enumerate}
\item{$0 \notin {\rm conv} \{\kappa_1, \ldots , \kappa_j\}$;}
\item{The spectra $\{\kappa_1, \ldots , \kappa_j\}$ and $\{\lambda_1, \ldots , \lambda_l\}$ are not related by any equation of the form
$$\sum_i n_i\kappa_i= \lambda_k$$ with integer $n_i$.}
\end{enumerate}
Then there exists a unique analytic solution $\ff_M$ of the invariance equation
(\ref{IM}) with condition $\ff_M=\ff_0$ for $M=m(\ff_0)$, and in a sufficiently small
vicinity of $m(\ff_0)$.
\end{theorem}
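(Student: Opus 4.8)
The plan is to prove this as an instance of the classical power-series-plus-majorants argument for analytic invariant manifolds, specialized to the invariance equation (\ref{IM}). First I would normalize: by translation assume $\ff_0=0$, so $J(0)=0$ and $A:=(DJ)_0$ is the linearization at the origin. Using the splitting $E=\ker m\oplus(\ker m)'$ with $(\ker m)'$ being $A$-invariant, I would write $f=(w,M')$ with $w\in\ker m$ and $M'\in(\ker m)'$. Since $m$ vanishes on $\ker m$ and restricts to an isomorphism $(\ker m)'\to\mathcal{M}$, the self-consistency condition $m(\ff_M)=M$ forces the sought manifold to be a graph
$$\ff_M=\bigl(\Phi(M),\ \iota(M)\bigr),\qquad \iota=(m|_{(\ker m)'})^{-1},$$
with unknown $\Phi:\mathcal{M}\to\ker m$ and $\Phi(m(\ff_0))=0$. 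Applying $m$ to both sides of (\ref{IM}) is a tautology once $m(\ff_M)=M$ is imposed, so the genuine content sits in the $\ker m$-component: projecting (\ref{IM}) onto $\ker m$ (along $(\ker m)'$) yields a closed first-order quasilinear equation for $\Phi$, which is what I would actually solve.

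Next I would build the Taylor expansion of $\Phi$ at the base point degree by degree. Writing $\Phi=\sum_{p\ge1}\Phi_p$ with $\Phi_p$ homogeneous of degree $p$ and collecting terms of degree $p$, the equation becomes a sequence of linear \emph{homological} equations $\mathcal{L}_p\Phi_p=R_p$, where $R_p$ depends only on $\Phi_1,\dots,\Phi_{p-1}$ and the Taylor coefficients of $J$, and $\mathcal{L}_p$ is the linear operator on degree-$p$ homogeneous maps assembled from the two diagonal blocks of $A$. On monomials $\mathcal{L}_p$ acts diagonally, its eigenvalues being the differences between a single eigenvalue of one block and an integer combination of total degree $p$ of the eigenvalues of the other block; these are exactly the quantities that the non-resonance hypothesis~(2) keeps away from zero. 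Hence each $\mathcal{L}_p$ is invertible, $\Phi_p=\mathcal{L}_p^{-1}R_p$ is uniquely determined, and the formal series — and therefore the full Taylor jet of any analytic solution — is unique. I would also record here that the degree-one equation, together with (2) applied with a single nonzero index, fixes the correct tangency: the tangent space of the manifold at the origin is the prescribed invariant complement.

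The real work is proving that this formal series converges in a neighborhood, and this is where I expect the main obstacle. The classical danger is the accumulation of small denominators $\|\mathcal{L}_p^{-1}\|$ as $p\to\infty$, and condition~(1), $0\notin\mathrm{conv}\{\kappa_1,\dots,\kappa_j\}$, is precisely what rules it out. I would pick a linear functional strictly separating $0$ from the $\kappa_i$, so that the relevant combinations of eigenvalues cannot approach the $\kappa_i$; this produces a uniform lower bound on the homological denominators and hence a uniform bound on $\|\mathcal{L}_p^{-1}\|$, eliminating the small-divisor problem. With that bound in hand I would set up a scalar majorant: dominate the Taylor coefficients of $J$ by those of an explicit analytic function, and show by induction on $p$ that $\|\Phi_p\|$ is dominated by the coefficients of the solution of an associated scalar equation with strictly positive radius of convergence. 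Standard Cauchy/implicit-function estimates for that scalar problem then give a common lower bound on the radius of convergence of $\Phi$, yielding analyticity of $\ff_M$.

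Uniqueness among analytic solutions is then immediate, since the Taylor coefficients were determined with no freedom at every order, and this completes the proof. The one step I expect to be delicate is the convergence argument: arranging the majorant so that the growth of the inhomogeneities $R_p$ is beaten by the denominator control furnished by condition~(1). The formal algebra of Step two (identifying $\mathcal{L}_p$ and solving order by order) is routine once the block structure of $A$ relative to the splitting $E=\ker m\oplus(\ker m)'$ is written out.
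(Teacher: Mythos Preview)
The paper does not give a proof of this theorem; it is quoted as a classical result of Lyapunov \cite{Lya1992} and invoked only as background for the invariant-manifold programme, after which the text immediately moves on to the Chapman--Enskog expansion. There is therefore no argument in the paper to compare your proposal against.

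Your outline is the standard Lyapunov--Poincar\'e route (graph ansatz, homological equations solved order by order under non-resonance, then a majorant argument for convergence using the Poincar\'e-domain hypothesis), and it is structurally sound. One place to be more precise: with the graph $\Phi:(\ker m)'\to\ker m$ that self-consistency forces, the eigenvalues of $\mathcal{L}_p$ on monomials come out as $\kappa_i-\sum_k n_k\lambda_k$, i.e.\ a single transverse eigenvalue minus a degree-$p$ integer combination of the tangent ones. For your small-divisor step to work as written, the Poincar\'e-domain condition must constrain the eigenvalues that enter the \emph{combination} (so that $|\sum_k n_k\lambda_k|\to\infty$), and the non-resonance condition should read $\kappa_i\ne\sum_k n_k\lambda_k$. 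As the paper states (1) and (2), the roles of $\kappa$ and $\lambda$ appear reversed relative to this. This is most likely a labeling slip in the review's formulation rather than a defect in your plan, but you should pin the direction down explicitly rather than leaving ``one block'' and ``the other block'' vague.
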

This solution is tangent to $(\ker m)'$ at point $\ff_0$.

Recently, the approach to invariant manifolds based on the invariance equation in
combination with the Lyapunov auxiliary theorem were used for the reduction of kinetic
systems \cite{Kaz1,Kaz2,Kaz3}.

\subsection{The Chapman--Enskog expansion\label{Sec:ChEnsk}}

The Chapman--Enskog and geometric singular perturbation approach assume the special
singularly perturbed structure of the equations and look for the invariant manifold in a
form of the series in the powers of a small parameter $\epsilon$. A one-parametric system
of equations is considered:
\begin{equation}\label{AbstractEqSingPert}
\partial_t f +A(f)=\frac{1}{\epsilon}Q(f).
\end{equation}
The following assumptions connect the macroscopic variables to the singular perturbation:
\begin{itemize}
\item{$m(Q(f))=0$;}
\item{for each $M\in m(U)$ the system of equations
$$Q(f)=0, \;\; m(f)=M$$
has a unique solution $\ff^{\rm eq}_M$ (in Boltzmann kinetics it is the local
Maxwellian);}
\item{$\ff^{\rm eq}_M$ is asymptotically stable and globally attracting for the fast system
$$\partial_t f =\frac{1}{\epsilon}Q(f)$$
in $(\ff^{\rm eq}_M+\ker m)\cap U$. }
\end{itemize}
Let the differential of the fast vector field $Q(f)$ at equilibrium $\ff^{\rm eq}_M$  be
$\mathcal{Q}_M$. For the Chapman--Enskog method it is important that $\mathcal{Q}_M$ is
invertible in $\ker m$. For the classical kinetic equations this assumption can be
checked using the symmetry of $\mathcal{Q}_M$  with respect to the {\em entropic inner
product} (Onsager's reciprocal relations).

The invariance equation for the singularly perturbed system (\ref{AbstractEqSingPert}) with the moment parametrization $m$
is:
\begin{equation}\label{AbstractEqSingIM}
\boxed{
\frac{1}{\epsilon}Q(\ff_M)= A(\ff_M)-(D_M \ff_M) (m(A(\ff_M))).
}
\end{equation}
The fast vector field vanishes on the right hand side of this equation because
$m(Q(\ff))=0$. The self-consistency condition $m(\ff_M)=M$ gives $$m(D_M
\ff_M)m(J)=m(J)$$ for all $J$, hence,
\begin{equation}\label{self-consistency}
m[A(\ff_M)-(D_M \ff_M) m(A(\ff_M))]=0.
\end{equation}
If we find an approximate solution of (\ref{AbstractEqSingIM}) then the corresponding
macroscopic (hydrodynamic) equation (\ref{projectedMIM}) is

\begin{equation}\label{projectedMIMSing}
\partial_t M+m(A(\ff_M))=0 .
\end{equation}

Let us represent all the operators in (\ref{AbstractEqSingIM}) by the Taylor series (for
the Boltzmann equation $A$ is the linear free flight operator, $A=v\cdot \nabla$,    and
$Q$ is the quadratic collision operator).  We look for the invariant manifold in the form
of the power series:
\begin{equation}\label{ChapmanEnskogGeneral}
\ff_M=\ff^{\rm eq}_M +\sum_{i=1}^{\infty} \epsilon^i \ff^{(i)}_M
\end{equation}
with the self-consistency condition $m(\ff_M)=M$, which implies $m(\ff^{\rm eq}_M)=M$,
$m(\ff^{(i)}_M )=0$ for $i\geq 1$. After matching the coefficients of the series in
(\ref{AbstractEqSingIM}), we obtain for every $\ff^{(i)}_M$ a linear equation
\begin{equation}\label{ChEeq}
\mathcal{Q}_M \ff^{(i)}_M = P^{(i)}(\ff^{\rm eq}_M,\ff^{(1)}_M, \ldots , \ff^{(i-1)}_M),
\end{equation}
where the polynomial operator $ P^{(i)}$ at each order $i$ can be obtained by
straightforward calculations from (\ref{AbstractEqSingIM}). Due to the self-consistency,
$m(P^{(i)})=0$ for all $i$ and the equation (\ref{ChEeq}) is solvable. The first term of
the Chapman--Enskog expansion has a simple form
\begin{equation}\label{ChE1}
\boxed{
\ff^{(1)}_M=\mathcal{Q}_M^{-1} (1-(D_M \ff^{\rm eq}_M)m) (A(\ff^{\rm eq}_M)) .
}
\end{equation}
A detailed analysis of explicit versions of this formula for the Boltzmann equation and
other kinetic equations is presented in many books and papers
\cite{Chapman,Hirschfelder}. Most of the physical applications of kinetic theory, from
the transport processes in gases to modern numerical methods (lattice Boltzmann models
\cite{LB2}) give examples of the practical applications and deciphering of this formula.
For the Boltzmann kinetics, the zero-order approximation, $\ff^{(0)}_M\approx\ff^{\rm
eq}_M$ produces  in projection on the hydrodynamic fields (\ref{projectedMIMSing}) the
compressible Euler equation. The first-order approximate invariant manifold,
$\ff^{(1)}_M\approx\ff^{\rm eq}_M+\epsilon \ff^{(1)}_M$,  gives  the compressible
Navier-Stokes equation and provides the explicit dependence of the transport coefficients
from the collision model. This bridge from the ``atomistic view to the laws of motion of
continua'' is, in some sense, the main result of the Boltzmann kinetics and follows
precisely Hilbert's request but not as rigorously as it is desired.

The calculation of higher order terms needs nothing but differentiation and calculation
of the inverse operator $\mathcal{Q}_M^{-1}$.  (Nevertheless these calculations may be
very bulky and one of the creators of the method, S. Chapman, compared reading his book
\cite{Chapman} to ``chewing glass'', cited by \cite{brush1976}). Differentiability is
needed also because the transport operator $A$ should be bounded to provide  strong sense
to the manipulations (see the discussion in \cite{Saint-RaymondCompanion}). The second
order in $\epsilon$ hydrodynamic equations (\ref{projectedMIM}) are called Burnett
equations (with $\epsilon^2$ terms) and super-Burnett equations for higher orders.

\subsection{Euler, Navier--Stokes, Burnett, and super--Burnett terms for a simple kinetic equation\label{Sec:SimplestEuler...SuperBur}}

Let us illustrate the basic construction on the simplest example (\ref{Grad101}).
\begin{eqnarray*}
\ff=\left(
\begin{array}{c} p(x)\\u(x)\\ \sigma(x) \end{array}
\right), \;
m=\left(
\begin{array}{ccc} 1 & 0 & 0 \\ 0 & 1 & 0 \end{array}
\right), \;
M=\left(
\begin{array}{c} p(x)\\u(x) \end{array}
\right), \;
\ker m=\left\{\left(\begin{array}{c} 0\\ 0\\ y \end{array}
\right)\right\},
\end{eqnarray*}
\begin{eqnarray*}
A(\ff)=\left(
\begin{array}{c} \frac{5}{3}\partial_x u \\ \partial_x p +\partial_x \sigma \\ \frac{4}{3}\partial_x u \end{array}
\right),\;
Q(\ff)=\left(
\begin{array}{c} 0\\ 0\\ -\sigma \end{array}
\right),  \;
\mathcal{Q}_M^{-1}=\mathcal{Q}_M=-1 \mbox{ on } \ker m, \\
\end{eqnarray*}
\begin{eqnarray*}
\ff^{\rm eq}_M=\left(
\begin{array}{c} p(x)\\u(x)\\ 0 \end{array}
\right), \;
D_M \ff^{\rm eq}_M=
\left(
\begin{array}{cc} 1 & 0  \\ 0 & 1  \\ 0 & 0  \end{array}
\right), \;
\ff^{(1)}_M=\left(
\begin{array}{c} 0 \\ 0 \\ -\frac{4}{3}\partial_x u \end{array}
\right).\;
\end{eqnarray*}
We hasten to remark that (1.6) is a simple linear system and can be integrated
immediately in explicit form. However, that solution contains both the fast and slow
components and it does not readily reveal the slow hydrodynamic manifold of the system.
Instead, we are interested in extracting this slow manifold by a direct method. The
Chapman-Enskog expansion is thus the tool for this extracting which we shall address
first.

The projected equations in the zeroth (Euler) and the first (Navier--Stokes) order  of
$\epsilon$ are
\begin{equation*}
\label{Grad101Eu}
\mbox{ (Euler) } \begin{array}{ll}\partial_t p = -\frac{5}{3}\partial_x u,\\
\partial_t u =-\partial_x p;
\end{array} \;\;\;
\mbox{ (Navier-Stokes) } \begin{array}{ll}
\partial_t p =-\frac{5}{3}\partial_x u,\\
\partial_t u =-\partial_x p +\epsilon \frac{4}{3}\partial_x^2 u.
\end{array}
\end{equation*}
It is straightforward to calculate the two next terms (Burnett and super-Burnett ones) but
let us introduce convenient notations to represent the whole Chapman-Enskog series for
(\ref{Grad101}). Only the third component of the invariance equation
(\ref{AbstractEqSingIM}) for (\ref{Grad101}) is non-trivial because of self-consistency
condition (\ref{self-consistency}). and we can write
\begin{equation}\label{Drad101IM}
-\frac{1}{\epsilon}\sigma_{(p,u)}= \frac{4}{3}\partial_x u-\frac{5}{3}(D_p \sigma_{(p,u)})(\partial_x u)-(D_u \sigma_{(p,u)}) (\partial_x p +\partial_x \sigma_{(p,u)}).
\end{equation}
Here, $M={(p,u)}$ and the differentials are calculated by the elementary rule: if a
function $\Phi$ depends on values of $p(x)$ and its derivatives, $\Phi=\Phi(p,\partial_x
p,\partial^2_x p,\ldots)$   then  $D_p \Phi$ is a differential operator,
$$D_p \Phi = \frac{\partial \Phi}{\partial p}+\frac{\partial \Phi}{\partial (\partial_x p)}\partial_x+\frac{\partial \Phi}{\partial (\partial_x^2 p)}\partial^2_x+\ldots$$

The equilibrium of the fast system (the Euler approximation) is known,
$\sigma_{(p,u)}^{(0)}=0$. We have already found
$\sigma_{(p,u)}^{(1)}=-\frac{4}{3}\partial_x u$ (the Navier--Stokes approximation). In
each order of the Chapman--Enskog expansion $i\geq 1$ we  get from (\ref{Drad101IM}):
\begin{equation}\label{Grad101CE}
\sigma_{(p,u)}^{(i+1)}= \frac{5}{3}(D_p \sigma_{(p,u)}^{(i)})(\partial_x u)+(D_u \sigma_{(p,u)}^{(i)})(\partial_x p)
+ \sum_{j+l=i} (D_u \sigma_{(p,u)}^{(j)})(\partial_x \sigma_{(p,u)}^{(l)})
\end{equation}

This chain of equations is nonlinear but every $\sigma_{(p,u)}^{(i+1)}$ is a linear
function of derivatives of $u$ and $p$ with constant coefficients because this sequence
starts from $ -\frac{4}{3}\partial_x u $  and the induction step in $i$ is obvious. Let
$\sigma_{(p,u)}^{(i)}$ be a linear function of derivatives of $u$ and $p$ with constant
coefficients. Then its differentials $D_p \sigma_{(p,u)}^{(i)}$ and $D_u
\sigma_{(p,u)}^{(i)}$ are linear differential operators with constant coefficients and
all terms in (\ref{Grad101CE}) are again  linear functions of derivatives of $u$ and $p$
with constant coefficients.

For $\sigma_{(p,u)}^{(2)}$ ($i+1=2$) the operators in the right hand part of
(\ref{Grad101CE}) are: $(D_p \sigma_{(p,u)}^{(1)})\allowbreak =0$,
$(D_u\sigma_{(p,u)}^{(1)})=-\frac{4}{3}\partial_x$, and in the third term in each summand
either $l=0$, $j=1$ or $l=1$, $j=0$. Therefore, for the Burnett term,
$$\sigma_{(p,u)}^{(2)}=-\frac{4}{3}\partial^2_x p .$$
For the super Burnett term in $\sigma_{(p,u)}^{(3)}$ ($i+1=3$) the operators in the right
hand part of (\ref{Grad101CE}) are $(D_p \sigma_{(p,u)}^{(2)})=-\frac{4}{3}\partial^2_x$,
$(D_u\sigma_{(p,u)}^{(2)})=0$ and in the third term, only summand with $l=j=1$ may take
non-zero value:
$$(D_u \sigma_{(p,u)}^{(1)})(\partial_x \sigma_{(p,u)}^{(1)})=(-\frac{4}{3}\partial^2_x)
(-\frac{4}{3}\partial_x u)=\frac{16}{9}\partial^3_x u .$$ Finally, $\sigma_{(p,u)}^{(3)}= -\frac{4}{9}\partial^3_x u$ and the projected equations have the form
\begin{eqnarray}
\label{Grad101Burn}
&\begin{array}{ll}\partial_t p = -\frac{5}{3}\partial_x u,\\
\partial_t u =-\partial_x p +\epsilon \frac{4}{3}\partial_x^2 u + \epsilon^2 \frac{4}{3}\partial_x^3 p
\end{array} \mbox{ (Burnett). } \\
&\begin{array}{ll}
\partial_t p =-\frac{5}{3}\partial_x u,\\
\partial_t u =-\partial_x p +\epsilon \frac{4}{3}\partial_x^2 u + \epsilon^2 \frac{4}{3}\partial_x^3 p + \epsilon^3 \frac{4}{9}\partial_x^4 u
\end{array} \mbox{ (super Burnett). }\label{Grad101supBurn}
\end{eqnarray}

To see the properties of the resulting equations, we compute the dispersion relation for
the hydrodynamic modes. Using a new space-time scale, $x^{\prime}=\epsilon^{-1} x$, and
$t^{\prime}=\epsilon^{-1} t$, and representing $u=u_k \varphi(x^{\prime},t^{\prime})$,
and $p=p_k \varphi(x^{\prime},t^{\prime})$, where
$\varphi(x^{\prime},t^{\prime})=\exp(\omega t^{\prime}+ikx^{\prime})$, and $k$ is a
real-valued wave vector, we obtain the following dispersion relations $\omega(k)$ from
the condition of a non-trivial solvability of the corresponding linear system with
respect to $u_k$ and $p_k $:
\begin{equation}
\label{dispersionns101}
\omega_{\pm}=-\frac{2}{3}k^2 \pm \frac{1}{3}i|k|\sqrt{15-4k^2},
\end{equation}
for the Navier--Stokes approximation,

\begin{equation}
\label{dispersionburnett101}
\omega_{\pm}=-\frac{2}{3}k^2 \pm \frac{1}{3}i|k|\sqrt{15+16 k^2},
\end{equation}
for the Burnett approximation (\ref{Grad101Burn}), and
\begin{equation}
\label{dispersionsburnett101}
\omega_{\pm}=\frac{2}{9}k^2 (k^2 -3) \pm \frac{1}{9}i|k|\sqrt{135+144k^2+24k^4- 4k^6 },
\end{equation}
for the super-Burnett approximation (\ref{Grad101supBurn}).

\begin{figure}
\begin{centering}\boxed{
\includegraphics[width=0.8\textwidth]{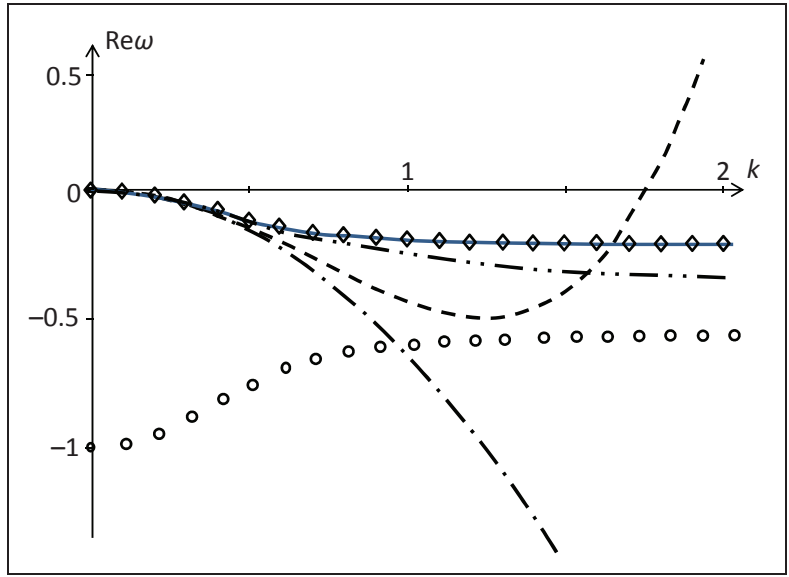}}
\caption{\label{Attenuation} Attenuation rates \cite{KGAnPh2002}. Solid: exact summation; diamonds: hydrodynamic
modes of the kinetic equations with $\epsilon=1$ (\ref{Grad101})
(they match the solid line per construction); circles: the non-hydrodynamic mode of (\ref{Grad101}), $\epsilon=1$; dash dot line: the Navier--Stokes
approximation; dash: the super--Burnett approximation; dash double dot line: the first Newton's iteration (\ref{N1E101}).
The result for the second iteration (\ref{1Nreg1101}) is indistinguishable from the exact solution at this scale.}
\end{centering}
\end{figure}

These examples demonstrate that the real part is non-positive,
${\rm Re}( \omega_{\pm}(k))\le 0$ (Fig.~\ref{Attenuation}), for the Navier--Stokes (\ref{dispersionns101}) and for
the Burnett (\ref{dispersionburnett101}) approximations, for all wave vectors. Thus,
these approximations describe attenuating acoustic waves. However, for the super-Burnett
approximation, the function ${\rm Re}( \omega_{\pm}(k))$ (\ref{dispersionsburnett101})
becomes positive as soon as $|k|>\sqrt{3}$. The equilibrium is stable within the
Navier--Stokes and the Burnett approximation, and it becomes  unstable within the
super-Burnett approximation for sufficiently short waves. Similar to the case of the
Bobylev instability of the Burnett hydrodynamics for the Boltzmann equation, the latter
result contradicts the dissipative properties of the Grad system (\ref{Grad101}): the
spectrum of the kinetic system (\ref{Grad101}) is stable for arbitrary $k$ (see
Fig.~\ref{Attenuation}). For the 13-moment system
(\ref{balanceequations})-(\ref{Grad133e}) the instability of short waves appears already
in the Burnett approximation \cite{GorKarLNP2005,KGAnPh2002} (see section \ref{Sec:Exact}
below). For the Boltzman equation this effect was discovered by Bobylev \cite{Bob}. In
Fig.~\ref{Attenuation}, we also represent the attenuation rates of the hydrodynamic and
non-hydrodynamic mode of the kinetic equations (\ref{Grad101}). The
characteristic equation of these kinetic equations reads:
\begin{equation}
\label{dispersiongrad101}
3\omega^3 +3\omega^2 + 9k^2 \omega +5k^2 =0.
\end{equation}
The two complex-conjugate roots of this equation correspond to the hydrodynamic modes,
while for the non-hydrodynamic real mode, $\omega_{nh}(k)$, $\omega_{nh}(0)=-1$, and
$\omega_{nh}\rightarrow -0.5$ as $|k|\rightarrow\infty$. The non-hydrodynamic modes of
the Grad equations are characterized by the common property that for them $\omega(0)\ne
0$. These modes  are irrelevant to the Chapman--Enskog branch of the invariant manifold.

Thus, the Chapman--Enskog expansion:
\begin{itemize}
\item{Gives  excellent, but already known on phenomenological grounds, zero and first
    order approximations -- the Euler and Navier--Stokes equations;}
\item{Provides a bridge from microscopic models of collisions to macroscopic
    transport coefficients in the known continuum equations;}
\item{Already the next correction, not known phenomenologically and hence of
    interest, does not exist because of non-physical behavior.}
\end{itemize}

The first term of the Chapman--Enskog expansion gives the possibility to evaluate the
coefficients in the phenomenological equations (like viscosity, thermal conductivity and
diffusion coefficient) from the microscopic models of collisions. The success of the
first order approximation (\ref{ChE1}) is compatible with the failure of the higher order
terms. The Burnett and Super-Burnett equations have non-physical properties, negative
viscosity for high gradients and instability for short waves. The Chapman--Enskog
expansion has to be truncated after the first order term or not truncated at all.

Such a situation when the first approximations are useful but the higher terms become
senseless is not very novel. There are at least three famous examples:
\begin{itemize}
 \item{The ``ultraviolet catastrophe'' in higher order terms because of physical
     phenomena at very short distances \cite{Dyson1952} and the perturbation series
     divergencies \cite{Zinn-JustinDFTandSF2002} are well known in quantum field
     theory, and many approaches have been developed to deal with these singularities
     \cite{Weinberg1995};}
 \item{Singularities and divergence in  the semiclassical Wentzel-Kramers-Brillouin
     (WKB) approach \cite{WKBFroman1965,WKBVoros1983,WKBHyou2004};}
 \item{The small denominators affect the convergence of the Poincar\'e series in the
     classical many body problem and the theory of nearly integrable systems. They
     may even make the perturbation series approach senseless \cite{KAMAr2}. }
\end{itemize}

Many ideas have been proposed and implemented to deal with these singularities: use of
the direct iteration method instead of power series in KAM \cite{KAM,KAM1,KAMAr2},
renormalization
 \cite{FisherRenorm1974,CollinsRenorm1984,MorrisRenorm1994}, summation and partial summation and rational approximation
of the perturbation series \cite{Mattuck1976,EllisPadeSum} and
string theories \cite{VenezString1968,DeligneString1999} in quantum field theory \cite{Weinberg1995}. Various ad hoc analytical and
numerical regularization tricks have been proposed too. Exactly solvable models give the
possibility of exhaustive analysis of the solutions. Even in the situation when they
are not applicable directly to reality we can use them as benchmarks for all perturbation
and approximation methods and for regularization tricks.

We follow this stream of ideas with the modifications required for kinetic theory. In the
next section we describe algebraic invariant manifolds for the kinetic equations
(\ref{balanceequations})-(\ref{Grad133e}), (\ref{Grad103}), (\ref{Grad101}) and
demonstrate the exact summation approach for the Chapman--Enskog series for these models.
We use these models to demonstrate the application of the Newton method to the invariance
equation (\ref{AbstractEqSingIM}).

\section{Algebraic hydrodynamic invariant manifolds and
exact summation of the Chapman--Enskog series for the simplest kinetic model
\label{Sec:Exact}}

\subsection{Grin of the vanishing cat: $\epsilon$=1\label{Sec:e=1Cat}}

At the end of the previous section we introduced a new space-time scale,
$x^{\prime}=\epsilon^{-1} x$, and $t^{\prime}=\epsilon^{-1} t$. The rescaled equations do
not depend on $\epsilon$ at all and are, at the same time, equivalent to the original
systems. Therefore, the presence of the small parameter in the equations is virtual.
``Putting $\epsilon$ back $=1$, you hope that everything will converge and single out a
nice submanifold'' \cite{McKean1965}.

In this section, we find the invariant manifold for the equations with  $\epsilon$=1.
Now, there is no fast--slow decomposition of motion. It is natural to ask: what is the
remainder of the qualitative picture of slow invariant manifold presented in
Fig.~\ref{fig1SLw}? Or an even sharper question: what we are looking for?

The rest of the fast-slow decomposition is the zeroth term in the Chapman--Enskog
expansion (\ref{ChapmanEnskogGeneral}). It starts from the equilibrium of the fast
motion, $\ff^{\rm eq}_M $. This (locally) equilibrium manifold corresponds to the limit
$\epsilon=0$. The first terms of the series for $\sigma$ for (\ref{Grad101}),
\begin{equation}\label{sigmaSimpleFirst}
\sigma=-\epsilon \frac{4}{3}\partial_x u - \epsilon^2 \frac{4}{3}\partial_x^2 p -
\epsilon^3 \frac{4}{9}\partial_x^3 u+\ldots,
\end{equation}
also bear the offprint of the zeroth approximation, $\sigma^{(0)}=0$, even when we take
$\epsilon=1$. The Chapman--Enskog procedure derives recurrently terms of the series from
the starting term, $\ff^{\rm eq}_M $.

The problem of the invariant manifold includes two difficulties: (i) it is difficult to
find any global solution or even prove its existence and (ii) there often  exists too
many different local solutions. The auxiliary Lyapunov theorem gives the first solution
of the problem near an equilibrium and several seminal hints for the further attempts.
One of them is: use the analyticity as a selection criterion. The Chapman--Enskog method
demonstrates that the inclusion of the system in the one-parametric family (parameterized
by $\epsilon$) and the requirement of analyticity up to the limit $\epsilon=0$ allows us
to select a sensible solution to the invariance equation. Even if we return to a single
system with $\epsilon=1$, the structure of the constructed invariant manifold remembers
the limit case  $\epsilon=0$... This can be considered as a manifestation of the effect
of ``the grin of the vanishing cat'': `I've often seen a cat without a grin,' thought
Alice: `but a grin without a cat! It's the most curious thing I ever saw in my life!'
(Lewis Carroll, Alice's Adventures in Wonderland.)  The small parameter disappears (we
take $\epsilon=1$) but the effect of its presence persists in the analytic invariant
manifold. There are some other effects of such a grin in kinetics \cite{GorbanYab2011}.

The use of the term ``slow manifold'' for the case $\epsilon=1$ seems to be an abuse of
language. Nevertheless, this manifold has some offprints of slowness, at least for smooth
solutions bounded by small number. The definition of slow manifolds for a single system
may be a non-trivial task \cite{Debush,GorKarLNP2005}. There is a problem with a local
definition because for a given vector field the ``slowness'' of a submanifold
cannot be invariant with respect to diffeomorphisms in a vicinity of a regular point.
Therefore we use the term ``hydrodynamic invariant manifold''.

\subsection{The pseudodifferential form of the stress tensor\label{Sec:PseodDiffStress}}

Let us return to the simplest kinetic equation (\ref{Grad101}). In order to construct the exact
solution, we first analyze the global structure of the Chapman--Enskog series given by
the recurrence formula (\ref{Grad101CE}). The first three terms (\ref{sigmaSimpleFirst})
give us a hint: the terms in the series alternate. For odd $i=1,3,\ldots$ they are
proportional to $\partial_x^i u$ and for even $i=2,4,\ldots$ they are proportional to
$\partial_x^i p$. Indeed, this structure can be proved by induction in $i$ starting in
(\ref{Grad101CE}) from the first term $-\frac{4}{3}\partial_x u$. It is sufficient
to notice that  $(D_p  \partial^{(i)}_x p  )=\partial_x^{(i)}$, $(D_p \partial^i_x u
)=0$, $(D_u \partial^i_x p )=0$, $(D_u \partial^i_x u )=\partial_x^{(i)}$ and to use the
induction assumption in (\ref{Grad101CE}).

The global structure of the Chapman--Enskog series gives the following representation of the stress
$\sigma$ on the hydrodynamic invariant manifold
\begin{equation}\label{Structure1D10M}
\sigma(x)=A(-\partial_x^2)\partial_x u(x) +B(-\partial_x^2) \partial_x^2 p(x),
\end{equation}
where $A(y)$, $B(y)$ are yet unknown functions and the sign `$-$' in the arguments is
adopted for simplicity of formulas in the Fourier transform.

It is easy to prove the structure (\ref{Structure1D10M}) without any calculation or
induction. Let us use the symmetry property of the kinetic equation (\ref{Grad101}): it
is invariant with respect to the transformation $x\mapsto -x$, $u \mapsto -u$, $p \mapsto
p$ and $\sigma \mapsto \sigma$ which transforms solutions into solutions. The invariance
equation inherits this property, the initial equilibrium ($\sigma=0$) is also symmetric
and, therefore, the expression for $\sigma(x)$ should be even. This is exactly
(\ref{Structure1D10M}) where $A(y)$ and $B(y)$ are arbitrary even functions. (If they
are, say, twice differentiable at the origin then we can represent them as functions of
$y^2$).

\subsection{The energy formula and `capillarity' of ideal gas\label{Sec:caplillarity}}

Traditionally, $\sigma$ is considered as a viscous stress tensor but the second term,
$B(-\partial_x^2) \partial_x^2 p(x)$, is proportional to second derivative of $p(x)$ and
it does not meet usual expectations ($\sigma \sim \nabla u$). Slemrod
\cite{SlemQuaterly2012,SlemCAMWA2013} noticed that the proper interpretation of  this
term is the capillarity tension rather than viscosity. This is made clear by inspection
of the energy balance formula. Let us derive  the Slemrod energy formula for the simple
model (\ref{Grad101}). The time derivative of the kinetic energy due to the first two
equations (\ref{Grad101}) is
\begin{equation}
\begin{split}
\frac{1}{2}  \partial_t \int_{-\infty}^{\infty} u^2 \,\D x &= \int_{-\infty}^{\infty} u \partial_t u \, \D x
=-\int_{-\infty}^{\infty} u\partial_x p \, \D x- \int_{-\infty}^{\infty} u\partial_x \sigma \, \D x \\ &=
- \frac{1}{2} \partial_t \frac{3}{5} \int_{-\infty}^{\infty} p^2 \, \D x + \int_{-\infty}^{\infty}  \sigma \partial_x u\, \D x
\end{split}
\end{equation}
Here we used integration by parts and assumed that all the fields with their derivatives
tend to $0$ when $x \to \pm \infty$.

In $x$-space the energy formula is
\begin{equation}\label{energy101x}
\frac{1}{2} \partial_t \left(\frac{3}{5}\int_{-\infty}^{\infty} p^2 \, \D x+\int_{-\infty}^{\infty} u^2 \,\D x\right)=
\int_{-\infty}^{\infty} \sigma  \partial_x  u\, \D x
\end{equation}
This form of the energy equation is standard. Note that the usual factor $\rho$ in front of
$u^2$ is absent because we work with the linearized equations.

Let us use in (\ref{energy101x}) the representation (\ref{Structure1D10M}) for $\sigma$
and notice that $\partial_x   u= -\frac{3}{5}\partial_t p$:
$$\int_{-\infty}^{\infty} \sigma  \partial_x   u\, \D x
=\int_{-\infty}^{\infty} A(-\partial_x^2)\partial^2_x u \, \D x
-\frac{3}{5}\int_{-\infty}^{\infty} (\partial_t p) [ B(-\partial_x^2) \partial_x^2 p ] \,
\D x$$  The operator $B(-\partial_x^2) \partial_x^2$ is symmetric, therefore,
$$\int_{-\infty}^{\infty} (\partial_t p) [ B(-\partial_x^2) \partial_x^2 p ] \, \D
x=\frac{1}{2}\partial_t\left(\int_{-\infty}^{\infty} p [ B(-\partial_x^2) \partial_x^2 p
] \, \D x\right)$$ The quadratic form,
\begin{equation}\label{CapillarEnergyX}
U_c=\frac{3}{5}\int_{-\infty}^{\infty} p ( B(-\partial_x^2)
\partial_x^2 p ) \, \D x= - \frac{3}{5}\int_{-\infty}^{\infty} (\partial_x p )( B(-\partial_x^2)
\partial_x p )\, \D x
\end{equation}
may be considered as a part of the energy. Moreover, if the function $B(y)$ is negative then
this form is  positive. Due to Parseval's identity we have
\begin{equation}\label{CapillarEnergyK}
U_c=-\frac{3}{5}\int_{-\infty}^{\infty} k^2 B(k^2) |p_k|^2 \, \D k.
\end{equation}
Finally, the energy formula in $x$-space is
\begin{equation}\label{EnergyXfin}
\boxed{\begin{aligned}
 \frac{1}{2} \partial_t \int_{-\infty}^{\infty} \left(\frac{3}{5} p^2+u^2 -
\frac{3}{5} (\partial_x p )( B(-\partial_x^2)
\partial_x p ) \right) \, \D x \\
=\int_{-\infty}^{\infty} (\partial_x u)(A(-\partial_x^2)\partial_x u) \, \D x
\end{aligned}}
\end{equation}
In $k$-space it has the form
\begin{equation}\label{energy101k}
\begin{split}
\frac{1}{2} \partial_t \int_{-\infty}^{\infty}  \left(\frac{3}{5}|p_k|^2+|u_k|^2
-\frac{3}{5} k^2
B(k^2)|p_k|^2 \right) \, \D k =\int_{-\infty}^{\infty} k^2A(k^2)|u_k|^2 \, \D k
\end{split}
\end{equation}
It is worth mentioning that the functions $A(k^2)$ and $B(k^2)$ are negative (see
Sec.~\ref{IEFourier}). If we keep only the first non-trivial terms, $A=B=-\frac{4}{3}$,
then the energy for\-mula becomes
\begin{eqnarray}\label{EnergyXKlinfin}
&& \frac{1}{2}\partial_t \int_{-\infty}^{\infty}\left(\frac{3}{5} p^2 + u^2
+\frac{4}{5}(\partial_x p)^2 \right) \, \D x
=-\frac{4}{3}\int_{-\infty}^{\infty} (\partial_x u)^2 \, \D x ;\\
&&\frac{1}{2}\partial_t \int_{-\infty}^{\infty} \left(\frac{3}{5} |p_k|^2 \, \D k+ |u_k|^2
+\frac{4}{5} k^2 |p_k|^2\right)  \, \D k
=-\frac{4}{3}\int_{-\infty}^{\infty} k^2|u_k|^2 \, \D k .
\end{eqnarray}

Slemrod represents the structure of the obtained energy formula as
\begin{equation}\label{CapillIdea}
\begin{split}
\partial_t ({\rm MECHANICAL\ ENERGY})+\partial_t
({\rm CAPILLARITY\ ENERGY})\\ = {\rm VISCOUS\
DISSIPATION}.
\end{split}
\end{equation}
The capillarity terms $(\partial_x p)^2$ in the energy density are standard in the
thermodynamics of phase transitions.

The bulk capillarity terms in fluid mechanics were introduced into the Navier--Stokes
equations by Korteweg \cite{Korteweg1901} (for a review of some further results see
\cite{Dunn1985}). Such terms appear naturally in theories of the phase transitions such
as van der Waals liquids \cite{SlemrodVdW},  Ginzburg--Landau \cite{AransonGinLand2002}
and Cahn--Hilliard equations \cite{CahnHilliard1958,Cahn1959}, and phase fields models
\cite{ChenPhaseField2002}. Surprisingly, such terms are also found in the ideal gas
dynamics as a consequence of the Chapman--Enskog expansion \cite{Slem1,Slem2}. In
higher-order approximations, the viscosity is reduced by the terms which are similar to
Korteweg's capillarity. Finally, in the energy formula for the exact sum of the
Chapman--Enskog expansion we see terms of the same form: the viscous dissipation is
decreased and the additional term  appears in the energy (\ref{EnergyXfin}),
(\ref{energy101k}).

\subsection{Algebraic invariant manifold in Fourier representation\label{IEFourier}}

It is convenient to work with the pseudodifferential operators like
(\ref{Structure1D10M}) in Fourier space. Let us denote $p_k$, $u_k$ and
$\sigma_k$, where $k$ is the `wave vector' (space frequency).

The Fourier-transformed kinetic equation (\ref{Grad101}) takes the form
($\epsilon=1$):
\begin{equation}\label{Grad101F}
\begin{split}
\partial_t p_k &=-\frac{5}{3}ik u_k ,\\
\partial_t u_k &=-ik p_k -ik \sigma_k ,\\
\partial_t \sigma_k &=-\frac{4}{3}ik u_k
-\sigma_k .
\end{split}
\end{equation}

We know already that the result of the reduction should be a function $\sigma_k (u_k ,
p_k ,k)$ of the following form:
\begin{equation}\label{Parametriz1D10M}
\sigma_{k}(u_k , p_k ,k) =ikA(k^2)u_k -k^2 B(k^2)p_k ,
\end{equation}
where $A$ and $B$ are unknown real-valued functions of $k^2$.

The question of the summation of the Chapman--Enskog series amounts to finding the two
functions, $A(k^2 )$ and $B(k^2 )$. Let us write the invariance equation for unknown
functions $A$ and $B$. We can compute the time derivative of $\sigma_{k}(u_k , p_k ,k)$
in { two } different ways. First, we use the right hand side of the third equation in
(\ref{Grad101F}). We find the {\em microscopic} time derivative:
\begin{equation}
\label{101microderivat}
 \partial_t^{\rm micro}\sigma_k =-ik\left(\frac{4}{3}+A\right)u_k
+k^2 Bp_k .
\end{equation}

Secondly, let us use chain rule
and the first two equations in (\ref{Grad101F}). We find
 the {\em macroscopic} time derivative:
\begin{equation}
\begin{split}\label{101macroderivat}
\partial_t^{\rm macro} \sigma_k & =\frac{\partial
\sigma_k }{\partial u_k }\partial_t u_k +\frac{\partial \sigma_k
}{\partial p_k }\partial_t p_k \\  &=ikA\left(-ikp_k
-ik\sigma_k \right)-k^2 B\left(-\frac{5}{3}iku_k \right)\\
&=ik\left(\frac{5}{3}k^2 B+k^2 A\right)u_k +k^2 \left(A-k^2
B\right)p_k .
\end{split}
\end{equation}

The microscopic time derivative should coincide with the
macroscopic time derivative for all values of $u_k$ and $p_k$.
This is the invariance equation:
\begin{equation}\label{FourierInvEq1D10M}
\partial_t^{\rm macro} \sigma_k =\partial_t^{\rm
micro}\sigma_k.
\end{equation}
 For the kinetic system (\ref{Grad101F}), it reduces to a system of two
quadratic equations for functions $A(k^2)$ and $B(k^2)$:
\begin{equation}\label{AlgebraicIM1D10M}
\begin{split}
F(A,B,k)&=-A-\frac{4}{3}-k^2 \left(\frac{5}{3}B+A^2
\right)=0,\\  G(A,B,k)&=-B+A\left(1-k^2 B\right)=0.
\end{split}
\end{equation}

The Taylor series for $A(k^2)$, $B(k^2)$ correspond exactly to the Chapman--Enskog
series: if we look for these functions in the form $A(y)=\sum_{l\geq 0}a_l y^l$ and
$B(y)=\sum_{l\geq 0}b_l y^l$ then from (\ref{AlgebraicIM1D10M}) we find immediately
$a_0=b_0=-\frac{4}{3}$ (these are exactly the Navier--Stokes and Burnett terms) and the
recurrence equation for $a_{i+1}$, $b_{i+1}$:
\begin{equation}
\label{recurrent101}
\begin{split}
a_{n+1}&=\frac{5}{3}b_n +\sum_{m=0}^{n}a_{n-m}a_m ,\\
b_{n+1}&=a_{n+1}+\sum_{m=0}^{n}a_{n-m}b_m .
\end{split}
\end{equation}
The initial condition for this set of equations are the Navier--Stokes and the Burnett
terms $a_0=b_0=-\frac{4}{3}$.

The Newton method for the invariance equation (\ref{AlgebraicIM1D10M}) generates the
sequence $A_i(k^2)$, $B_i(k^2)$, where the differences, $\delta A_{i+1} = A_{i+1}-A_i$
and $\delta B_{i+1} = B_{i+1}-B_i$ satisfy the system of linear equations

\begin{equation*}
\label{1Nitertation101}
\left( \begin{array}{cc}
\frac{\partial F(A,B,k^2)}{\partial A}\left.\right|_{(A_i , B_i) }
 &
\frac{\partial F(A,B,k^2)}{\partial B}\left.\right|_{(A_i , B_i)}
\\
\frac{\partial G(A,B,k^2)}{\partial A}\left.\right|_{(A_i , B_i)}
&
\frac{\partial G(A,B,k^2)}{\partial B}\left.\right|_{(A_i , B_i)}
\\
\end{array} \right)
\left(\begin{array}{c}\delta A_{i+1} \\ \delta B_{i+1} \\\end{array}\right)
+\left(\begin{array}{c}F(A_i , B_i ,k^2)\\G(A_i , B_i , k^2)\\\end{array}\right)=0.
\end{equation*}
Rewrite this system in the explicit form:
\begin{equation*}
\left(\begin{array}{cc} -(1+2k^2 A_{i} )
 &
-\frac{5}{3}k^2
\\
1-k^2 B_{i}
&
-(1+k^2 A_{i} )
\\
\end{array} \right)
\left(\begin{array}{c}\delta A_{i+1} \\ \delta B_{i+1} \\\end{array}\right)
+\left(\begin{array}{c}F(A_i , B_i ,k^2)\\G(A_i , B_i ,
k^2)\\\end{array}\right)=0.
\end{equation*}
Let us start from the zeroth-order term of the Chapman--Enskog expansion (Euler's
approximation), $A_0=B_0=0$. Then, the first Newton's iteration gives
\begin{equation}
\label{N1E101}
A_1 = B_1 = -\frac{4}{3+5k^2 }.
\end{equation}
The second Newton's iteration also gives the negative rational functions
\begin{equation}
\label{1Nreg1101}
\begin{split}
A_2 &=-\frac{4(27+63k^2 +153k^2 k^2 +125k^2 k^2 k^2 )}{3(3+5k^2 )
(9+9k^2 +67k^2 k^2 +75 k^2 k^2 k^2 )},\\
B_2 &=-\frac{4(9+33k^2 +115k^2 k^2 +75 k^2 k^2 k^2 )}
{(3+5k^2 )(9+9k^2 +67k^2 k^2 +75 k^2 k^2 k^2 )} .
\end{split}
\end{equation}
The corresponding attenuation rates are shown in Fig.~\ref{Attenuation}. They are stable
and converge fast to the exact solutions. At the infinity, $k^2 \to \infty$, the second
iteration has the same limit, as the exact solution: $k^2 A_2 \to - \frac{4}{9}$ and $k^2
B_2 \to -\frac{4}{5}$ (compare to Sec.~\ref{HighFreqAs}).

Thus, we made three steps:
\begin{enumerate}
\item{We used the invariance equation, Chapman--Enskog procedure and the
    symmetry properties to find a linear space where the hydrodynamic invariant
    manifold is located. This space is parameterized by two functions of one
    variable (\ref{Parametriz1D10M});}
\item{We used the invariance equation and defined an algebraic manifold in this
    space. For the simple kinetic system (\ref{Grad101}), (\ref{Grad101F}) this
    manifold is given by the system of two quadratic equations which depends linearly
    on $k^2$ (\ref{AlgebraicIM1D10M}).}
\item{We found that Newton's iterations for the invariant manifold demonstrate
    much better approximation properties than the truncated Chapman--Enskog.}
\end{enumerate}

\subsection{Stability of the exact hydrodynamic system and saturation of dissipation for short waves\label{stabSat}}

Stability is one of the first questions to analyze. There exists a series of simple
general statements about the preservation of stability, well-posedness and hyperbolicity
in the exact hydrodynamics. Indeed, any solution of the exact hydrodynamics is the projection
of a solution of the initial equation from the invariant manifold onto the hydrodynamic
moments (Figs.~\ref{fig1SLw}, \ref{McKeanDiag}) and the projection of a bounded solution
is bounded. (In infinite dimension we have to assume that the projection is continuous
with respect to the chosen norms.) Several statements of this type are discussed in
Sec.~\ref{Sec:ExactLinKin}. Nevertheless, a direct analysis of dispersion relations and
attenuation rates is instructive. Knowing $A(k^2)$ and $B(k^2)$, the dispersion relation
for the hydrodynamic modes can be derived:
\begin{equation}
\label{dispersion101}
\omega_{\pm}=\frac{k^2 A}{2} \pm
i\frac{|k|}{2}\sqrt{\frac{20}{3} (1-k^2 B)-k^2 A^2}.
\end{equation}
It is convenient to reduce the consideration to a single function. Solving the system
(\ref{AlgebraicIM1D10M}) for $B$, and introducing a new function,  $X(k^2 )=k^2 B(k^2 )$,
we obtain an equivalent cubic equation:
\begin{equation}
\label{factor101}
-\frac{5}{3}(X-1)^2 \left(X+\frac{4}{5}\right)=\frac{X}{k^2 }.
\end{equation}
Since the hydrodynamic manifold should be represented  by the real-valued functions
$A(k^2)$ and $B(k^2)$ (\ref{Parametriz1D10M}), we are only interested in the real-valued roots of (\ref{factor101}).

An elementary analysis gives:  the real-valued root $X(k^2 )$ of (\ref{factor101}) is
unique and { negative} for all finite values $k^2 $. Moreover, the function $X(k^2 )$ is
a monotonic function of $k^2 $. The limiting values are:
\begin{equation}
\label{limits101}
\lim_{|k|\rightarrow0}X(k^2 )=0, \quad
\lim_{|k|\rightarrow\infty}X(k^2 )=-0.8.
\end{equation}

Under the conditions just mentioned, the function under the root in (\ref{dispersion101})
is negative for all values of the wave vector $k$, including the limits, and we come to
the following dispersion law:
\begin{equation}
\label{frequency101}
\omega_{\pm}=\frac{X}{2(1-X)}\pm i\frac{|k|}{2}\sqrt{\frac{5X^2
-16X +20}{3}},
\end{equation}
where $X=X(k^2 )$ is the real-valued root of equation (\ref{factor101}). Since $0 > X(k^2
)>-1$ for all $|k|>0$, the attenuation rate, ${\rm Re}( \omega_{\pm})$, is negative for
all $|k|>0$, and the exact acoustic spectrum of the Chapman--Enskog procedure {\it is
stable for arbitrary wave lengths} (Fig.~\ref{Attenuation}, solid line). In the
short-wave limit, from (\ref{frequency101}) we obtain:
\begin{equation}
\label{limit101}
\lim_{|k|\rightarrow\infty}{\rm Re} \omega_{\pm}=-\frac{2}{9}, ;\ \;\; \lim_{|k|\rightarrow\infty} \frac{{\rm Im} \omega_{\pm}}{|k|} = \pm
\sqrt{3}.
\end{equation}

\subsection{Expansion at $k^2=\infty$ and matched asymptotics\label{HighFreqAs}}

For large values of $k^2$, a version of the Chapman--Enskog expansion at an
infinitely-distant point is useful. Let us rewrite the algebraic equation for the
invariant manifold (\ref{AlgebraicIM1D10M}) in the form
\begin{equation}\label{AlgebraicIM1D10M1/k^2}
\begin{split}
\frac{5}{3}B+A^2&=-\varsigma (\frac{4}{3}+A),\\
AB&=\varsigma(A-B),
\end{split}
\end{equation}
where $\varsigma=1/k^2$. For the analytic solutions near the point $\varsigma=0$ the
Taylor series is: $A=\sum_{l=1}^{\infty} \alpha_l \varsigma^l$, $B=\sum_{l=1}^{\infty}
\beta_l \varsigma^l$, where $\alpha_1=-\frac{4}{9}$, $\beta_1=-\frac{4}{5}$,
$\alpha_2=\frac{80}{2187}$, $\beta_2=\frac{4}{27}$, ... . The first term gives for the
frequency (\ref{dispersion101}) the same limit:
\begin{equation}
\label{dispersion101infty}
\omega_{\pm}=-\frac{2}{9} \pm i {|k|}{\sqrt{3}},
\end{equation}
and the higher terms give some corrections.

Let us match the Navier--Stokes term and the first term in the $1/k^2$ expansion. We get:
\begin{equation}\label{MergedAsymptotics}
A\approx-\frac{4}{3+9 k^2}, \;\; B\approx-\frac{4}{3+5k^2}
\end{equation}
and
\begin{equation}
\sigma_{k}=ikA(k^2)u_k -k^2 B(k^2)p_k \approx -\frac{4ik}{3+9k^2}u_k +\frac{4k^2}{3+5k^2}p_k .
\end{equation}

This simplest non-locality captures the main effects: the asymptotic for short waves
(large $k^2$) and the Navier--Stokes approximation for hydrodynamics for smooth solutions
with bounded derivatives and small Knudsen and Mach numbers (small $k^2$).

The saturation of dissipation at large $k^2$ is a universal effect and hydrodynamics that
do not take this effect into account cannot pretend to be a universal asymptotic
equation.

This section demonstrates that for the simple kinetic model (\ref{Grad101}):
\begin{itemize}
\item{The Chapman--Enskog series amounts to an algebraic invariant manifold, and the
    ``smallness" of the Knudsen number $\epsilon$ used to develop the Chap\-man--Enskog
    procedure {\it is no longer necessary}}.
\item{The exact dispersion relation (\ref{frequency101}) on the algebraic invariant
    manifold {\em is stable for all wave lengths}.}
\item{The exact result of the Chapman--Enskog procedure has a clear non-poly\-nomial
    character. The resulting exact hydrodynamics are {\it essentially  nonlocal} in
    space. For this reason, even if the hydrodynamic equations of a certain level of
    the approximation { are} stable, they cannot reproduce the non-polynomial
    behavior for sufficiently short waves.}
\item{The Newton iterations for the invariance equations provide much better results
    than the Chapman--Enskog expansion. The first iteration gives the Navier--Stokes
    asymptotic for long waves and the qualitatively correct behavior with saturation
    for short waves. The second iteration gives the proper higher order approximation
    in the long wave limit and the quantitatively proper asymptotic for short waves.}
\end{itemize}

In the next section we extend these results to a general linear kinetic equation.

\section{Algebraic invariant manifold for general linear kinetics in 1D\label{Sec:ExactLinKin}}

\subsection{General form of the invariance equation for 1D linear kinetics\label{GenLinIE}}

For linearized kinetic equations, it is convenient to start directly with the Fourier
transformed system.

Let us consider two sets of variables: macroscopic variables $M$ and microscopic
variables $\mu$. The corresponding vector spaces are $E_M$ ($M\in E_M$) and $E_{\mu}$
($\mu\in E_{\mu}$), $k$ is the wave vector and  the initial kinetic system in the Fourier
space for functions $M_k(t)$ and $\mu_k(t)$ has the following form:
\begin{equation}\label{genLinKIn}
\begin{split}
\partial_t M_k &= ik L_{MM}M_k+ik L_{M\mu}\mu_k; \\
\partial_t \mu_k&=ik L_{\mu M}M_k+ik L_{\mu \mu}\mu_k + C\mu_k,
\end{split}
\end{equation}
where $L_{MM}: E_M\to E_M$, $ L_{M\mu} E_{\mu} \to E_M $, $L_{\mu M}: E_M \to  E_{\mu} $,
$L_{\mu \mu}:  E_{\mu} \to E_{\mu}$, and $C : E_{\mu} \to  E_{\mu}$ are constant linear
operators (matrices).

The only requirement for the following algebra is: the operator $C : E_{\mu} \to E_{\mu}$
is invertible. (Of course, for further properties like stability of reduced equations we
need more assumptions like stability of the whole system (\ref{genLinKIn}) and negative definiteness of $C$.)

We look for a hydrodynamic invariant manifold in the form
\begin{equation}\label{generalAnz}
\mu_k=\mathcal{X}(k)M_k,
\end{equation}
where $\mathcal{X}(k): E_M \to E_{\mu}$ is a linear map for all $k$.

The corresponding exact hydrodynamic equation is
\begin{equation}\label{exactLinHyd}\boxed{
\partial_t M_k=ik [L_{MM}+L_{M\mu}\mathcal{X}(k)]M_k.}
\end{equation}
Calculate the micro- and macroscopic derivatives of $\mu_k$ (\ref{generalAnz}) exactly as
in (\ref{101microderivat}), (\ref{101macroderivat}):
\begin{equation}
\begin{split}\label{micrimacrogen}
\partial_t^{\rm micro} \mu_k &= [i k L_{\mu M} + i k L_{\mu \mu} \mathcal{X}(k) + C \mathcal{X}(k)]M_k ; \\
\partial_t^{\rm macro} \mu_k &= [i k \mathcal{X}(k) L_{MM} + ik \mathcal{X}(k) L_{M\mu} \mathcal{X}(k)] M_k.
\end{split}
\end{equation}
The invariance equation for $\mathcal{X}(k)$ is again a system of algebraic equations (a quadratic matrix equation):
\begin{equation}\label{generalIE}\boxed{
\mathcal{X}(k)=i k C^{-1}[- L_{\mu M} + (\mathcal{X}(k) L_{MM} - L_{\mu \mu} \mathcal{X}(k)) + \mathcal{X}(k) L_{M\mu} \mathcal{X}(k)].}
\end{equation}
This is a general invariance equation for linear kinetic systems (\ref{genLinKIn}). The
Chapman--Enskog series is a Taylor expansion for the solution of this equation at $k=0$.
Thus, immediately we get the first terms:
\begin{equation*}
\mathcal{X}(0)=0, \;  \mathcal{X}'(0)=-i C^{-1}L_{\mu M} , \;
\mathcal{X}''(0)= 2 C^{-1}(C^{-1}L_{\mu M}L_{MM} - L_{\mu \mu}C^{-1} L_{\mu M}).
\end{equation*}
The sequence of the  Euler, Navier--Stokes and Burnett approximations is:
\begin{equation}\label{LinChEnChain}
\begin{split}
\partial_t M_k=&ik L_{MM}M_k \;\; \mbox{(Euler)};\\
\partial_t M_k=&ik L_{MM}M_k+k^2 L_{M\mu}C^{-1}L_{\mu M}M_k\;\; \mbox{(Navier--Stokes)};\\
\partial_t M_k=&ik L_{MM}M_k+k^2 L_{M\mu}C^{-1}L_{\mu M}M_k\\
&+ik^3 L_{M\mu} C^{-1}(C^{-1}L_{\mu M}L_{MM} - L_{\mu \mu}C^{-1} L_{\mu M})M_k\;\;
\mbox{(Burnett)}.
\end{split}
\end{equation}
Let us use the identity $\mathcal{X}(0)=0 $ and the fact that the functions in the $x$-space are real-valued. We can separate
odd and even parts of $\mathcal{X}(k)$ and write
\begin{equation}\label{liftingGen}
\mathcal{X}(k)=ik \mathcal{A}(k^2)+k^2\mathcal{B}(k^2),
\end{equation}
where $\mathcal{A}(y)$ and $\mathcal{B}(y)$ are real-valued matrices. For these unknowns, the
invariance equation is even closer to the simple example (\ref{AlgebraicIM1D10M}):
\begin{equation}\label{generalIEreal}
\begin{split}
\mathcal{A}(k^2)= &C^{-1}[- L_{\mu M} + k^2 (\mathcal{B}(k^2) L_{MM} -
L_{\mu \mu} \mathcal{B}(k^2))\\ &- k^2 \mathcal{A}(k^2) L_{M\mu} \mathcal{A}(k^2)
+ k^4 \mathcal{B}(k^2) L_{M\mu} \mathcal{B}(k^2)],\\
\mathcal{B}(k^2)= &-C^{-1}[ (\mathcal{A}(k^2) L_{MM} - L_{\mu \mu} \mathcal{A}(k^2))\\
& +k^2 \mathcal{A}(k^2) L_{M\mu} \mathcal{B}(k^2) + k^2 \mathcal{B}(k^2) L_{M\mu}
\mathcal{A}(k^2)].
\end{split}
\end{equation}

\subsection{Hyperbolicity of exact hydrodynamics\label{Sec:Hyperbol}}

Hyperbolicity is an important property of the exact hydrodynamics. Let us recall that the
linear system represented in Fourier space by the equation
$$\partial_t u_k = -i A(k) u_k$$
is  {\em hyperbolic} if for every $t\geq 0$ the operator $\exp(-it A(k))$ is uniformly
bounded as a function of $k$ (it is sufficient to take $t=1$). This means that the
Cauchy problem for this system is well-posed {\it forward } in time.

This system is {\em strongly hyperbolic} if for every $t\in \mathbb{R}$ the operator
$\exp(-it A(k))$ is uniformly bounded as a function of $k$ (it is sufficient to take
$t=\pm 1$). This means that the Cauchy problem for this system is well-posed both {\em forward
and backward} in time.

\begin{proposition}[Preservation of hyperbolicity]
Let the original system (\ref{genLinKIn}) be (strongly) hyperbolic. Then the reduced
system (\ref{exactLinHyd}) is also (strongly) hyperbolic if the lifting operator
$\mathcal{X}(k)$ (\ref{generalAnz}) is a bounded function of $k$.
\end{proposition}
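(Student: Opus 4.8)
The plan is to exhibit the reduced evolution as a \emph{compression} of the full one, using the hydrodynamic manifold itself as an intertwiner, so that the uniform-in-$k$ bound on the full propagator descends directly to the reduced propagator. Write the full system (\ref{genLinKIn}) as $\partial_t w_k = T(k)w_k$ with $w_k=(M_k,\mu_k)$ and generator
\begin{equation*}
T(k)=\begin{pmatrix} ik L_{MM} & ik L_{M\mu} \\ ik L_{\mu M} & ik L_{\mu \mu}+C \end{pmatrix},
\end{equation*}
and the reduced system (\ref{exactLinHyd}) as $\partial_t M_k = S(k)M_k$ with $S(k)=ik[L_{MM}+L_{M\mu}\mathcal{X}(k)]$. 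In the notation of the proposition, $T(k)=-i\tilde A(k)$ and $S(k)=-iA_{\rm red}(k)$, so hyperbolicity means uniform boundedness in $k$ of $e^{tT(k)}$ (resp.\ $e^{tS(k)}$) for each $t\ge 0$, and strong hyperbolicity the same for all $t\in\mathbb{R}$. Introduce the lifting $\Lambda(k):E_M\to E_M\oplus E_\mu$, $\Lambda(k)M=(M,\mathcal{X}(k)M)$, and the coordinate projection $\Pi:(M,\mu)\mapsto M$, so that $\Pi\,\Lambda(k)=\mathrm{Id}_{E_M}$.

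First I would check that the invariance equation (\ref{generalIE}) is \emph{exactly} the intertwining identity $T(k)\,\Lambda(k)=\Lambda(k)\,S(k)$. Applying $T(k)$ to $(M,\mathcal{X}(k)M)$, the first block returns $S(k)M$ automatically, while equality of the second block reads $ik L_{\mu M}+ik L_{\mu\mu}\mathcal{X}(k)+C\mathcal{X}(k)=\mathcal{X}(k)S(k)$, which rearranges term-by-term into precisely (\ref{generalIE}). Since $\Lambda(k)$ intertwines the generators, it intertwines the (semi)groups they generate, $e^{tT(k)}\Lambda(k)=\Lambda(k)\,e^{tS(k)}$ for every $t$. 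Composing on the left with $\Pi$ and using $\Pi\Lambda(k)=\mathrm{Id}$ yields the compression formula
\begin{equation*}
e^{tS(k)}=\Pi\,e^{tT(k)}\,\Lambda(k).
\end{equation*}

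It then remains only to estimate norms. Fixing mutually compatible norms on $E_M$, $E_\mu$ and $E_M\oplus E_\mu$ so that $\Pi$ is a contraction and $\|\Lambda(k)M\|^2\le\|M\|^2+\|\mathcal{X}(k)M\|^2$, we get $\|\Lambda(k)\|\le(1+\|\mathcal{X}(k)\|^2)^{1/2}$, which is bounded uniformly in $k$ by the standing hypothesis that $\mathcal{X}(k)$ is a bounded function of $k$; write $\Lambda_\infty=\sup_k\|\Lambda(k)\|<\infty$. The compression formula then gives, for each fixed $t$,
\begin{equation*}
\|e^{tS(k)}\|\le\|\Pi\|\,\|e^{tT(k)}\|\,\|\Lambda(k)\|\le\Lambda_\infty\,\sup_{k'}\|e^{tT(k')}\|.
\end{equation*}
If the full system is hyperbolic, the right-hand side is finite and independent of $k$ for every $t\ge 0$, so $e^{tS(k)}$ is uniformly bounded in $k$ and the reduced system is hyperbolic; running the identical estimate for $t\in\mathbb{R}$ settles the strongly hyperbolic case. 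The only delicate point—the sole ``obstacle''—is the choice of compatible norms making $\Pi$ and $\Lambda(k)$ simultaneously bounded; in finite dimensions this is automatic, and in infinite dimensions it is exactly the continuity-of-projection proviso already stipulated in Sec.~\ref{stabSat}, whereupon the boundedness of $\mathcal{X}(k)$ is precisely what transfers the uniform-in-$k$ bound from the full to the reduced propagator.
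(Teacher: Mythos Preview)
Your proof is correct and follows essentially the same route as the paper's: lift via $\Lambda(k)$, evolve by the full flow, project by $\Pi$, so that the reduced propagator is the compression $\Pi\,e^{tT(k)}\,\Lambda(k)$; boundedness of $\mathcal{X}(k)$ makes $\Lambda(k)$ uniformly bounded in $k$, and the result follows. The paper presents this argument at the level of solutions and the commutative diagram (\ref{McKeanDigr}), whereas you have written it out in operator form and made explicit that the invariance equation (\ref{generalIE}) is precisely the intertwining relation $T(k)\Lambda(k)=\Lambda(k)S(k)$, which is a nice clarification but not a different idea.
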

\begin{proof}
Hyperbolicity (strong hyperbolicity) is just a requirement of the  uniform boundedness in
$k$ of the solutions of (\ref{genLinKIn}) for each $t>0$ (or for all $t$) with uniformly
bounded in $k$ initial conditions. For the exact hydrodynamics, solutions of the
projected equations are projections of the solutions of the original system. Let the
original system (\ref{genLinKIn}) be (strongly) hyperbolic. If the lifting operator
$\mathcal{X}(k)$ is a bounded function of $k$ then for the uniformly bounded initial
condition $M_k$ the corresponding initial value $\mu_k=\mathcal{X}(k)M_k$ is also bounded
and, due to the hyperbolicity of (\ref{genLinKIn}), the projection of the solution is
uniformly bounded in $k$ for all $t\geq0$. In the following commutative diagram, the
upper horizontal arrow and the vertical arrows are the bounded operators, hence the lower
horizontal arrow is also a bounded operator.
\begin{equation}\label{McKeanDigr}
  \begin{CD}
(M_k(0),\mu_k(0))@> {\mbox{Time shift (initial eq.)}} >> (M_k(t), \mu_k(t))\\
@A\mbox{Lifting}AA                                        @VV\mbox{Projection}V\\
M_{k}(0)  @> {\mbox{Exact hydrodynamics}}>> M_k(t)
  \end{CD}
\end{equation}
\end{proof}

To analyze the boundedness of the lifting operator we have to study the asymptotics of
the solution of the invariance equation at the infinitely-distant point $k^2=\infty$. If
this is a regular point then  we can find the Taylor expansion in powers of
$\varsigma=\frac{1}{k^2}$, $A=\sum_l \alpha_l \varsigma^l$ and $B=\sum_l \beta_l
\varsigma^l$. For the boundedness of $\mathcal{X}(k)$ (\ref{liftingGen}) we should take
in these series $\alpha_0=\beta_0=0$. If the solution of the invariance equation is a
real analytic function for $0\geq k^2 \geq \infty$ then the condition is sufficient for
the hyperbolicity of the projected equation (\ref{exactLinHyd}). If $\mathcal{X}(k)$ is an
exact solution of the algebraic invariance equation (\ref{generalIE}) then the
hydrodynamic equation (\ref{exactLinHyd}) gives the exact reduction of
(\ref{genLinKIn}). Various approximations give the approximate reduction like the
Chapman--Enskog approximations (\ref{LinChEnChain}).

The expansion near an infinitely-distant point is useful but may be not so
straightforward. Nevertheless, if such an expansion exists then we can immediately
produce the matched asymptotics.

Thus, as we can see, the summation of the Chapman--Enskog series to an algebraic manifold
is not just a coincidence but a typical effect for kinetic equations. For a specific
kinetic system we have to make use of all the existing symmetries like parity and
rotation symmetry in order to reduce the dimension of the invariance equation and to
select the proper physical solution. Another simple but important condition is that all
the kinetic and hydrodynamic variables should be real-valued. The third selection rule is
the behavior of the spectrum near $k=0$: the attenuation rate should go to zero when
$k\to 0$.

The Chapman--Enskog expansion is a Taylor series (in $k$) for the solution of the
invariance equation. In general, there is no reason to believe that the  first few terms
of the Taylor series at $k=0$ properly describe the asymptotic behavior of the solutions
of the invariance equation (\ref{generalIE}) for all $k$. Already the simple examples
such as (\ref{Grad101F}) reveal that the exact hydrodynamic is essentially nonlocal and
the behavior of the attenuation rate at $k\to \infty$ does not correspond to {\em any}
truncation of the Chapman--Enskog series.

Of course, for a numerical solution of (\ref{generalIE}), the Taylor series expansion is
not the best approach. The Newton method gives much better results and even the first
approximation may be very close to the solution \cite{ColKarKroe2007_2}.

In the next section we show that for more complex kinetic equations the situation may be
even more involved and both the truncation and the summation of the whole series may become meaningless
for sufficiently large $k$. In these cases, the hydrodynamic solution of the
invariance equations does not exist for large $k$ and the whole problem of hydrodynamic
reduction has no solution. We will see how the hydrodynamic description is destroyed and
the coupling between hydrodynamic and non-hydrodynamic modes becomes permanent and
indestructible. Perhaps, the only advice in this situation may be to change the set of
variables or to modify the projector onto these variables: if hydrodynamics exist, then
the set of hydrodynamic variables or the projection onto these variables should be
different.

\subsection{Destruction of hydrodynamic invariant manifold for short waves in the moment equations\label{Sec:GradDestroy}}

In this section we study the one-dimensional version
of the Grad  equations (\ref{balanceequations}) and (\ref{Grad133e})
in the $k$-representation:
\begin{equation}
\label{Grad131}
\begin{split}
\partial_t \rho_k &=-iku_k ,\\
\partial_t u_k &=-ik\rho_k -ikT_k -ik\sigma_k ,\\
\partial_t T_k &=-\frac{2}{3}iku_k -\frac{2}{3}ikq_k ,\\
\partial_t   \sigma_k   &=-\frac{4}{3}iku_k   -\frac{8}{15}ikq_k
-\sigma_k ,\\
\partial_t q_k &=-\frac{5}{2}ikT_k -ik\sigma_k -\frac{2}{3}q_k .
\end{split}
\end{equation}
The Grad system (\ref{Grad131})  provides  the  simplest  coupling
of  the hydrodynamic  variables  $\rho_k  $,  $u_k   $,   and $T_k
$   to   the non-hydrodynamic variables, $\sigma_k $ and $q_k $,
the latter  is the heat flux. We need to  reduce the Grad system (\ref{Grad131}) to
the  three  hydrodynamic  equations  with respect to  the
variables  $\rho_k  $,  $u_k  $,  and  $T_k  $. That  is, in the general notations of the previous section,
$M=\rho_k, u_k, T_k $ $\mu=\sigma_k, q_k $ and    we have
to express the functions $\sigma_k $ and $q_k $ in terms of $
\rho_k  $,  $u_k   $,   and   $T_k   $:
\begin{equation*}
\sigma_k =\sigma_k (\rho_k ,u_k , T_k , k),\; \; \;
q_k =q_k (\rho_k ,u_k , T_k , k).
\end{equation*}

The derivation of the invariance equation for the system
(\ref{Grad131}) goes along the same lines as in the previous
sections. The quantities $\rho$ and $T$ are scalars, $u$ and $q$ are (1D) vectors,
and the (1D) stress `tensor' $\sigma$ is again a scalar.
The vectors and scalars transform differently under the parity transformation $x \mapsto -x$, $k \mapsto -k$.
We use this symmetry property and find the representation (\ref{generalAnz}) of $\sigma, q$ similar to (\ref{Parametriz1D10M}):
\begin{equation}\label{form131}
\begin{split}
   \sigma_{k} & =  ikA(k^2)u_{k}-k^2B(k^2)\rho_{k}-k^2C(k^2)T_{k},\\
     q_{k} & =  ikX(k^2)\rho_{k}+ikY(k^2)T_{k}-k^2Z(k^2)u_{k},
\end{split}
\end{equation}
where the functions $A,\dots, Z$ are the unknowns in the invariance equation.
By the nature of the CE recurrence procedure for the real-valued in $x$-space kinetic equations,
$A,\dots,Z$ are real-valued functions.

Let us find the microscopic and macroscopic time derivatives (\ref{micrimacrogen}). Computing the {\em microscopic} time  derivative
of the functions (\ref{form131}), due to the two last equations of the  Grad
system (\ref{Grad131}) we derive:
\begin{equation*}
\label{micro131}
\begin{split}
\partial_t^{\rm micro}\sigma_k          &=-ik\left(\frac{4}{3}-\frac{8}{15}k^2
Z+A\right)u_k \\
&+k^2 \left(\frac{8}{15}X+B\right)\rho_k
+ k^2 \left(\frac{8}{15}Y+C\right)T_k ,\\
\partial_t^{\rm micro}q_k &= k^2 \left(A+\frac{2}{3}Z\right)u_k +
ik\left(k^2   B-\frac{2}{3}X\right)\rho_k \\
& -ik \left(\frac{5}{2}-k^2   C -\frac{2}{3}Y\right)T_k .
\end{split}
\end{equation*}
On the other hand, computing the {\em macroscopic} time  derivative of the functions (\ref{form131})
due  to  the first three equations of the system (\ref{Grad131}), we obtain:
\begin{equation*}
\label{macro131sigma}
\begin{split}
\partial_t^{\rm macro}\sigma_k    &=\frac{\partial\sigma_k       }{\partial
u_k }\partial_t u_k                 +
\frac{\partial\sigma_k       }{\partial
\rho_k }\partial_t \rho +
\frac{\partial\sigma_k       }{\partial
T_k }\partial_t T_k \\
&=ik\left(k^2 A^2 +k^2 B+\frac{2}{3}k^2 C-\frac{2}{3}k^2 k^2  CZ\right)u_k
\\
&+\left(k^2 A-k^2 k^2 AB-\frac{2}{3}k^2 k^2 CX\right)\rho_k\\
&+\left(k^2 A-k^2 k^2 AC-\frac{2}{3}k^2 k^2 CY\right)T_k ;\\
\end{split}
\end{equation*}
\begin{equation*}
\label{macro131q}
\begin{split}
\partial_t^{\rm macro}q_k       &=\frac{\partial q_k       }{\partial
u_k }\partial_t u_k                 +
\frac{\partial q_k       }{\partial
\rho_k }\partial_t \rho u_k +
\frac{\partial q_k       }{\partial
T_k }\partial_t T_k \\
&=\left(-k^2 k^2 ZA+k^2 X+\frac{2}{3}k^2 Y-\frac{2}{3}k^2 k^2 YZ\right)u_k
\\
&+ik\left(k^2 Z-k^2 k^2 ZB +\frac{2}{3}k^2 YX\right)\rho_k\\
&+ik\left(k^2 Z-k^2 k^2 ZC+\frac{2}{3}k^2 Y^2 \right)T_k .\\
\end{split}
\end{equation*}

The invariance equation (\ref{generalIE}) for this case is a system of six coupled
quadratic equations with quadratic in $k^2$ coefficients:
\begin{equation}
\begin{split}\label{invariance131}
&F_1= -\frac{4}{3}-A-k^2(A^2+B-\frac{8Z}{15}+\frac{2C}{3})+\frac{2}{3}k^4CZ  =  0,\\
&F_2=                                   \frac{8}{15}X+B-A+k^2AB+\frac{2}{3}k^2CX  = 0, \\
& F_3=                                   \frac{8}{15}Y+C-A+k^2AC+\frac{2}{3}k^2CY  = 0, \\
& F_4=                       A+\frac{2}{3}Z+k^2ZA-X-\frac{2}{3}Y+\frac{2}{3}k^2YZ  = 0, \\
& F_5=                                 k^2B-\frac{2}{3}X-k^2Z+k^4ZB-\frac{2}{3}YX  = 0, \\
& F_6=                 -\frac{5}{2}-\frac{2}{3}Y+k^2(C-Z)+k^4ZC-\frac{2}{3}k^2Y^2  = 0.
\end{split}
\end{equation}
There are several approaches to to deal with this system. One can easily calculate the
Taylor series for $A,B,C,X,Y,Z$ in powers of $k^2$ at the point $k=0$. In application to
(\ref{form131}) this is exactly the Chapman--Enskog series (the Taylor series for
$\sigma$ and $q$). To find the linear and quadratic in $k$ terms in (\ref{form131}) we
need just a zeroth approximation for $A,B,C,X,Y,Z$ from (\ref{invariance131}):
\begin{equation}
A=B=-\frac{4}{3}, \; C=\frac{2}{3}, \; X=0, \; Y=-\frac{15}{4},\; Z=\frac{7}{4}.
\end{equation}
This is the Burnett approximation:
\begin{eqnarray*}\label{Burnett131}
   \sigma_{k} & = & -\frac{4}{3}iku_{k}+\frac{4}{3}k^2\rho_{k}-\frac{2}{3}k^2T_{k},\\
     q_{k} & = &-\frac{15}{4}ikT_{k}-\frac{7}{4}k^2 u_{k}
\end{eqnarray*}
The dispersion relation for this Burnett approximation coincides with the one obtained by Bobylev \cite{Bob} from
the Boltzmann equation for Maxwell molecules, and the short waves are unstable in this approximation.

Direct Newton's iterations produce more sensible results. Thus, starting from $A=B=C=X=Y=Z=0$ we get the
first iteration
\begin{eqnarray*}
\label{1N131}
A_1&=&-20\frac{141k^2 +20}{867k^2k^2 + 2105k^2 +300},\\\nonumber
B_1&=&-20\frac{459k^2 k^2 +810k^2 +100}{3468k^2 k^2 k^2
+12755k^2 k^2 +11725k^2 +1500},
\\\nonumber
C_1&=&-10\frac{51k^2 k^2 -485 k^2 -100}{3468k^2 k^2 k^2
+12755k^2 k^2 +11725k^2 +1500},
\\\nonumber
X_1&=&-\frac{375k^2 (21k^2 -5)}{2(3468k^2 k^2 k^2
+12755k^2 k^2 +11725k^2 +1500)},\\\nonumber
Y_1&=&-\frac{225(394k^2 k^2 +685 k^2 +100)}{4(
3468k^2 k^2 k^2 +12755k^2 k^2 +11725k^2 +1500)},\\\nonumber
Z_1&=&-15\frac{153k^2 +35}{867k^2k^2 + 2105k^2 +300}.
\end{eqnarray*}
The corresponding hydrodynamics are non-local but stable and were first obtained by a
partial summation (regularization) of the Chapman--Enskog series \cite{GKJETP91}.

A numerical solution of the invariance equation (\ref{invariance131}) is also
straightforward and does not produce any serious problem. Selection of the proper
(Chapman--Enskog) branch of the solution, is set by the asymptotics: $\omega\to 0$ when
$k\to 0$.

The dispersion equation for frequency $\omega$ is
\begin{equation}
\label{dispersion131}
\begin{split}
\omega^3&-k^2\left(\frac{2}{3} Y+ A\right) \omega^2 \\
&+k^2\left(\frac{5}{3}
-\frac{2}{3} k^2 Z-\frac{2}{3}k^2 C-k^2 B+\frac{2}{3}k^2 AY+\frac{2}{3} k^2 k^2 CZ
\right) \omega \\&+\frac{2}{3}k^2 (k^2 X-k^2 Y+k^2 k^2 BY -k^2 k^2 XC)=0.
\end{split}
\end{equation}

\begin{figure}[t]
\boxed{\includegraphics[width=0.8\textwidth]{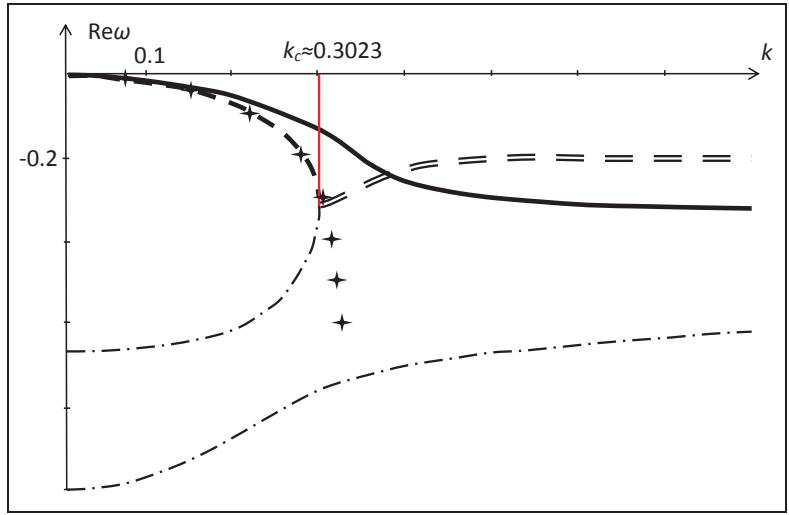}}
\caption{The dispersion relation for the linearized 1D
Grad system (\ref{Grad131}).
The solution for the whole kinetic system  (\ref{Grad131}) features five
$\omega$'s, while the motions on the hydrodynamic invariant manifold
has three of them for each $k<k_c$ and destroys
for $k\ge k_c$.  The bold solid line shows the hydrodynamic acoustic mode (two complex conjugated roots).
The bold dashed line for $k<k_c$ is the hydrodynamic diffusion mode (a real root).
At  $k=k_c$ this line meets a real root of nonhydrodynamic mode (thin dash-dot line) and
for  $k>k_c$ they turn into a couple of complex conjugated roots (bold double-dashed line at $k>k_c$).
The four-point stars correspond to the third Newton iteration for the diffusion mode.
A dash-and-dot line at the bottom of the plot shows the
isolated non-hydrodynamic mode (single real root of (\ref{dispersiongrad101}).\label{mkfigover}}
\end{figure}

The real-valued solution to the invariance equation (\ref{invariance131}) does not exist
for sufficiently large $k$. (A telling simple example of such a behavior of real
algebraic sets gives the equation $k^2(1-k^2)+A^2=0$.) Above a critical value $k_c\approx
0.3023$, the Chapman--Enskog branch in (\ref{invariance131}) disappears, and two complex
conjugated solutions emerge. This situation becomes clear if we look at the dispersion
curves (Fig.~\ref{mkfigover}). For $k<k_c$ the  Chapman--Enskog branch of the dispersion
relation consists of three hydrodynamic modes starting from 0 at $k=0$. Two
non-hydrodynamic modes start from strictly negative values at $k=0$ and are real-valued.
They describe the relaxation to the hydrodynamic invariant manifold from the initial
conditions outside this manifold. (This is, in other words, relaxation of the
non-hydrodynamic variables, $\sigma_k$ and $q_k$ to their values $\sigma_k (\rho_k ,u_k ,
T_k , k)$ and $q_k (\rho_k ,u_k , T_k , k)$.) For $k<k_c$ the non-hydrodynamic modes are
real-valued, the relaxation goes exponentially, without damped oscillations. At $k=k_c$,
one root from the non-hydrodynamic branch crosses a real-valued root of the hydrodynamic
branch and they together transform into a couple of complex conjugated roots when
$k>k_c$. It is impossible to capture two pairs of complex modes by an equation for three
macroscopic variables and, at the same time, it is impossible to separate two complex
conjugated modes between two systems of real-valued equations.

For small $k$, when the separation of time between the ``fast'' collision term and the
``not-so-fast'' advection is significant, there is an essential difference between the
relaxation of hydrodynamic and non-hydrodynamic variables: $\rho$ and $u$ do not change
in collision and their relaxation is relatively slow, but $\sigma$ and $q$ are directly
affected by collisions and their relaxation to  $\sigma_k (\rho_k ,u_k , T_k , k)$ and
$q_k (\rho_k ,u_k , T_k , k)$ is fast. Nevertheless, when $k$ grows and achieves $k_c$
the difference between the hydrodynamic and non-hydrodynamic variables becomes less
pronounced. In such a case, the 4-dimensional invariant manifold may describe the
relaxation better. For this purpose, we can create the invariance equation for an
extended list of four `hydrodynamic variables' and repeat the construction. Instead of
the selection of the Chapman--Enskog branch only, we have to select a continuous branch
which includes the roots with $\omega\to 0$ when $k\to 0$.

The 2D algebraic manifold given by the dispersion equation (\ref{dispersion131}) and the
invariance equation (\ref{invariance131})  represents the important properties of the
hydrodynamic invariant manifold (see Fig~\ref{mkfigover}). In particular, the crucial
question is the existence of the Chapman--Enskog branch and the description of the
connected component of this curve which includes the germ of the Chapman--Enskog branch
near $k=0$.

Iterations of the Newton method for the invariance equation converge fast to the solution
with singularity. For $k<k_c$ the corresponding attenuation rates converge to the exact
solution and for  $k> k_c$ the real part of the diffusion mode ${\rm Re} \omega \to
-\infty$ with Newton's iterations (Fig.~\ref{mkfigover}). The corresponding limit system
has the infinitely fast decay of the diffusion mode when $k> k_c$. This regularization of
singularities by the infinite dissipation is quite typical for the application of the
Newton method to solution of the invariance equation. The `solid jet' limit for the
extremely fast compressions gives us another example \cite{GKsolidliq1995} (see also
Sec.~\ref{IMBoltzmann}).

\subsection{Invariant manifolds, entanglement of hydrodynamic and non-hydro\-dynamic modes
and saturation of dissipation for the 3D 13 moments Grad system
\label{Sec:Grad3DDestroy}}

The thirteen moments linear Grad system consists of 13 linearized PDE's
(\ref{balanceequations}), (\ref{Grad133e}) giving the time evolution of the hydrodynamic
fields (density $\rho$, velocity vector field $\uu$, and temperature $T$) and of
higher-order distinguished moments: five components of the symmetric traceless stress
tensor $\s$ and three components of the heat flux $\qq$ \cite{Grad}. With this example,
we conclude the presentation of exact hydrodynamic manifolds for linearized Grad models.

A point of departure is the Fourier transform of the linearized three-dimensional Grad's
thirteen-moment system:
\begin{eqnarray*}
 \partial_t \rho_k &=& -i\kk\cdot\uu_k, \\
 \partial_t \uu_k &=& -i\kk \rho_k - i \kk T_k - i \kk\cdot\s_k,  \\
 \partial_t T_k &=& -\frac{2}{3}i \kk\cdot(\uu_k + \qq_k), \\
 \partial_t \s_k &=& -2 i \overline{\kk\uu_k} -
 \frac{4}{5} i\overline{\kk \qq_k} -
 \s_k , \\
 \partial_t \qq_k &=& -\frac{5}{2} i \kk T_k - i \kk\cdot\s_k -
 \frac{2}{3}\qq,
\end{eqnarray*}
where $\kk$ is the wave vector, $\rho_k$, $\uu_k$ and $T_k$ are the Fourier images
for density, velocity and temperature, respectively, and $\s_k$ and $\qq_k$ are the nonequilibrium traceless
symmetric stress tensor ($\overline{\s}=\s$) and heat flux vector components,
respectively.

Decompose the vectors and tensors into the parallel (longitudinal) and orthogonal (lateral)
parts with respect to the wave vector $\kk$, because the fields are rotationally
symmetric around any chosen direction $\kk$. A unit vector in the direction of the wave
vector is $\e=\kk/k$, $k=|\kk|$, and the corresponding decomposition is $\uu_k = u_k^\|
\,\e + \uu_k^\perp$, $\qq_k = q_k^\| \,\e + \qq_k^\perp$, and $\s_k = \frac{3}{2}
\sigma_k^\| \overline{\e\e} + 2\s_k^\perp$, where $\e\cdot\uu_k^\perp=0$,
$\e\cdot\qq_k^\perp=0$, and $\e\e:\s_k^\perp=0$.

In these variables, the linearized 3D 13-moment Grad system decomposes into two closed
sets of equations, one  for the longitudinal and another for the lateral modes. The
equations for $\rho_k$, $u_k^\|$, $T_k$, $\sigma_k^\|$, and $q_k^\|$ coincide with the 1D
Grad system (\ref{Grad131}) from the previous section (the difference is just in the
superscript $^\|$). For the lateral modes we get
\begin{equation}
\begin{split}
 \partial_t \uu_k^\perp &= - i k\, \e\cdot\s_k^\perp, \\
 \partial_t \s_k^\perp &= -i k \overline{\e\uu_k^\perp} - \frac{2}{5} i k \overline{\e\qq_k^\perp} -
  \s_k^\perp, \\
 \partial_t \qq_k^\perp &= - i k \,\e\cdot\s_k^\perp - \frac{2}{3} \qq_k^\perp.
 \label{reducedsetB}
\end{split}
\end{equation}
The hydrodynamic invariant manifold for these decoupled systems is a direct product of
the invariant manifolds for (\ref{Grad131}) and for (\ref{reducedsetB}). The
parametrization (\ref{form131}), the invariance equation (\ref{invariance131}), the
dispersion equation for exact hydrodynamics (\ref{dispersion131}) and the plots of the
attenuation rates (Fig.~\ref{mkfigover}) for (\ref{Grad131}) are presented in the
previous section.

For the lateral modes the hydrodynamic variables consist of the 2D vector $\uu_k^\perp$.
We use the general expression (\ref{liftingGen}) and take into account the rotational
symmetry for the parametrization of the non-hydrodynamic variables $\s_k^\perp$ and
$\qq_k^{\bot}$ by the hydrodynamic ones:
\begin{equation} \label{introDU}
 \s_k^\perp = i k D(k^2) \overline{\e\uu_k^\perp}, \;\;
 \qq_k^{\bot} = -k^2 U(k^2) \uu_k^\perp .
\end{equation}

\begin{figure}[t]
\boxed{\includegraphics[width=0.8\textwidth]{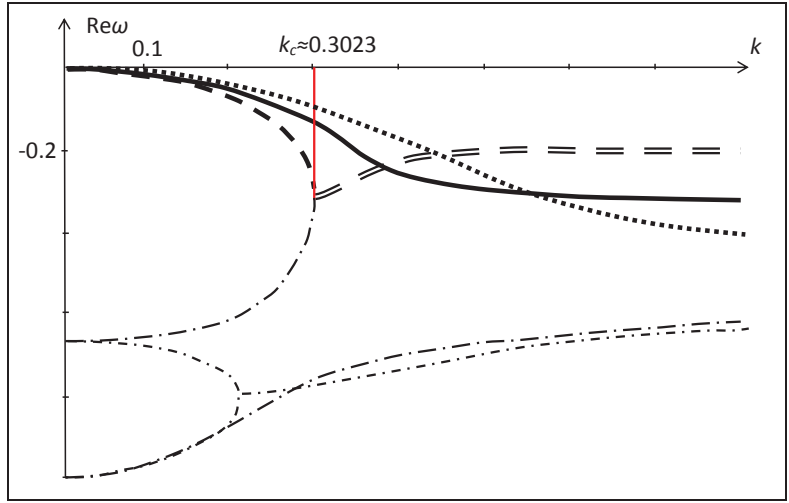}}
\caption{The dispersion relation for the linearized 3D 13 moment
Grad system (\ref{balanceequations}), (\ref{Grad133e}).
 The bold solid line shows the hydrodynamic acoustic mode (two complex conjugated roots). The bold dotted line
represents the shear mode (double degenerated real-valued root).
The bold dashed line for $k<k_c$ is the hydrodynamic diffusion mode (a real-valued root).
At  $k=k_c$ this line meets a real-valued root of non-hydrodynamic mode (thin dash-and-dot line) and
for  $k>k_c$ they turn into a couple of complex conjugated roots (bold double-dashed line at $k>k_c$).
Dash-and-dot lines at the bottom of the plot show the
separated non-hydrodynamic modes. All the modes demonstrate the saturation of dissipation.\label{FigGrad3D13M}}
\end{figure}

There are two unknown scalar real-valued functions here:  $D(k^2)$ and $U(k^2)$. We
equate the microscopic and macroscopic time derivatives of the non-hydrodynamic
variables and get the invariance conditions:
\begin{equation}
\begin{split}
&\frac{\partial  \s_k^ \perp}{\partial
\uu_k^\perp}\cdot(-ik\e\cdot \s_k^\perp)=-i k \overline{\e\uu_k^\perp} - \frac{2}{5} i k \overline{\e\qq_k^\perp} -
  \s_k^\perp,\\
&\frac{\partial  \qq_k^\perp}{\partial \uu_k^\perp}\cdot(-ik\e\cdot
\s_k^\perp)=- i k \,\e\cdot\s_k^\perp - \frac{2}{3} \qq_k^\perp, \label{invDU}
\end{split}
\end{equation}
We substitute here $\s_k^\perp$ and  $\qq_k^\perp$ by the expressions (\ref{introDU}) and
derive the algebraic invariance equation for  $D$ and $U$, which can be transformed into
the form:
\begin{equation}
\begin{split}
& 15 k^4 D^3 + 25 k^2 D^2 + (10+21 k^2) D + 10 = 0, \\
&U=-\frac{3D}{2+3k^2D}. \label{DU}
\end{split}
\end{equation}
The solution of the cubic equation (\ref{DU}) with the additional condition $D(0)=-1$
matches the Navier-Stokes asymptotics and  is real-valued for all $k^2$
\cite{ColKarKroe2007_2}. The dispersion equation gives a twice-degenerated real-valued
shear mode. All 13 modes for the three-dimensional, 13 moment linearized Grad system are
presented in Fig.~\ref{FigGrad3D13M} with 5 hydrodynamic and 8 non-hydrodynamic modes.
This plot includes also 8 modes (3 hydrodynamic and 5 non-hydrodynamic ones) for the
one-dimensional system (\ref{Grad131}). Entanglement between hydrodynamic and
non-hydrodynamics modes appears at the same critical value of $k\approx 0.3023$ and the
exact hydrodynamics does not exist for larger $k$.

\subsection{Algebraic hydrodynamic invariant manifold for the linearized Boltzmann and BGK
equations: separation of hydrodynamic and non-hydro\-dynamic modes\label{Sec:BGKexist}}

The entanglement of the hydrodynamic and non-hydrodynamic modes  at large wave vectors
$k$ destroys the exact hydrodynamic for the Grad moment equations. We conjecture that
this is the catastrophe of the applicability of the moment equations and the hydrodynamic
manifolds are destroyed together with the Grad approximation. It is plausible that if the
linearized collision operator has a spectral gap (see a review in \cite{Mouhot2006})
between the five time degenerated zero and other eigenvalues then the algebraic
hydrodynamic invariant manifold exists for all $k$. This remains an open question but the
numerical calculations of the hydrodynamic invariant manifold available  for the
linearized kinetic equation (\ref{BOL0}) with the  BGK collision operator
\cite{BGK,GKMod} support this conjecture \cite{KarColKroe2008PRL}.

The incompressible hydrodynamic limit for the scaled solutions of the BGK equation was proven in 2003
\cite{Saint-RaymondBGK2003}.

The linearized kinetic equation (\ref{BOL0}) has the form
\begin{equation}
    \partial_t f + \vv\cdot \nabla_x f= Lf,
\end{equation}
where $f(t,\vv,\xx)$ is the {\em deviation} of the distribution function from its
equilibrium value $f^*(\vv)$, $L$ is the linearized kinetic operator. Operator $L$ is
symmetric with respect to the entropic inner product
\begin{equation}\label{LinKin}
\langle\varphi, \psi\rangle_{f^*}= \int \frac{\varphi (\vv) \psi (\vv)}{f^*(\vv)}\, \D^3 \vv.
\end{equation}
In the $L_2$ space with this inner product, $\ker L=({\rm im} L)^\bot$ is a finite
dimensional subspace. It is spanned by five functions
$$f^*(\vv), \vv f^*(\vv), v^2 f^*(\vv).$$
The hydrodynamic variables (for the given $t$ and $x$) are the inner products of these
functions on $f(t,\xx,\vv)$, but it is more convenient to use the orthonormal basis with
respect to the product $\langle \cdot , \cdot \rangle_{f*}$, $\varphi_1(\vv), \ldots,
\varphi_5(\vv)$. The macroscopic variables are $M_i=\langle\varphi_i, f\rangle_{f^*}$
($i=1,2,\ldots, 5$).

It is convenient to represent $f$ in the form of the direct sum of the macroscopic
and microscopic components
$$f=P_{\rm macro} f + P_{\rm micro} f,$$
where $$P_{\rm macro} f =\sum_i \varphi_i \langle \varphi_i, f \rangle_{f^*}, \;\; P_{\rm
micro} f= (f-\sum_i \varphi_i \langle \varphi_i, f \rangle_{f^*})$$

After the Fourier transformation the linearized kinetic equation is
\begin{equation}
\partial_t f_k =-i (\kk,\vv) f_k +Lf_k .
\end{equation}
The lifting operation $\mathcal{X}(\kk):M_k \mapsto f_k$ (\ref{generalAnz}) should have
the form
$$\mathcal{X}(\kk)(M)=\sum_i M_{ik} \varphi_i (\vv) + \sum_i M_{ik}\psi_i(\kk,\vv),$$
where  $\langle \varphi_i, \psi_j \rangle_{f^*}=0$ for all $i,j=1,2,\ldots, 5$. We equate
the microscopic and macroscopic time derivatives (\ref{micrimacrogen}) of $f$ and get the
invariance equation (\ref{generalIE}):
\begin{equation}\label{BGKinvariance}
\begin{split}
L\psi_j=&i\kk\cdot [P_{\rm micro} (\vv\varphi_j)+P_{\rm micro} (\vv \psi_j)\\
&- \sum_l \psi_l\langle\varphi_l,\vv\varphi_j\rangle_{f^*}
-\sum_l \psi_l \langle \varphi_l,\vv \psi_j \rangle_{f^*}].
\end{split}
\end{equation}
For the solution of this equation, it is important that  ${\rm im} L={\rm im} P_{\rm micro}$ and the
both operators $L$ and $L^{-1}$ are defined and bounded on this microscopic subspace. The
linearized BGK collision integral is simply $L=- P_{\rm micro}$ (the relaxation parameter
$\epsilon=1$) and the invariance equation has in this case an especially simple form.

In \cite{KarColKroe2008PRL} the form of this equation has been analyzed further and it
has been solved numerically by several methods: the Newton iterations and continuation in
parameter $k$. The attenuation rates for the Chapman--Enskog branch have been analyzed.
All the methods have produced the same results: (i) the real-valued hydrodynamic invariant
manifold exists  for all range of  $k$, from zero to large values, (ii) hydrodynamic
modes are always separated from the  non-hydrodynamic modes (no entanglement effects),
and (iii) the saturation of dissipation exists for large $k$.

\section{Hydrodynamic invariant manifolds for nonlinear kinetics\label{Sec:NlinIM}}

\subsection{1D nonlinear Grad equation and nonlinear viscosity\label{Sec:NlinVis}}

In the preceding sections we represented the hydrodynamic invariant manifolds for {\em
linear } kinetic equations. The algebraic equations for these manifolds in $k$-space have
a relatively simple closed form and can be studied both analytically and numerically. For
nonlinear kinetics, the situation is more difficult for a simple reason: it is impossible
to cast the problem of the invariant manifold in the form of a system of decoupled
finite--dimensional problems by the Fourier transform. The equations for the invariant
manifolds for the finite--dimensional nonlinear dynamics have been published by Lyapunov
in 1892 \cite{Lya1992} but even for ODEs this is a nonlinear and rather non-standard
system of PDEs.

There are several ways to study the hydrodynamic invariant manifolds for the nonlinear
kinetics. In addition to the classical Chapman--Enskog series expansion, we can solve the
invariance equation numerically or semi--analytically, for example, by the iterations
instead of the power series. In the next section, we demonstrate this method for the
Boltzmann equation. In this section, we follow the strategy that seems to be promising:
to evaluate the asymptotics of the hydrodynamic invariant manifolds at large gradients
and frequencies and to match these asymptotics with the first Chapman--Enskog terms. For
this purpose, we use exact summation of the ``leading terms" in the Chapman--Enskog
series.

The starting point is the set of the 1D nonlinear Grad equations for the
hydrodynamic variables $\rho $, $u$ and $T$, coupled with the
non-hydrodynamic variable $\sigma$, where $\sigma$ is the $xx$-component of the stress
tensor:
\begin{eqnarray}
\label{Gradnl}
\partial_t \rho &=&-\partial_x (\rho u); \label{Ga}\\
\partial_t u &=&-u\partial_x u -\rho^{-1}\partial_x p
-\rho^{-1}\partial_x \sigma; \label{Gb}\\
\partial_t T &=&-u\partial_x T-(2/3)T\partial_x u
-(2/3)\rho^{-1}\sigma\partial_x u; \label{Gc}\\
\partial_t \sigma &=&-u\partial_x \sigma -(4/3)p\partial_x u
-(7/3)\sigma\partial_x u -\frac{p}{\mu(T)}\sigma .\label{Gd}
\end{eqnarray}
Here $p=\rho T$ and $\mu(T)$ is the temperature-dependent viscosity coefficient. We adopt
the form $\mu(T)=\alpha T^{\gamma}$, where
$\gamma$ varies from $\gamma=1$ (Maxwell's molecules) to $\gamma=1/2$ (hard spheres) \cite{Chapman}.

Our goal is to compute the correction to the Navier--Stokes approximation of the
hydrodynamic invariant manifold, $\sigma_{\rm NS}=-(4/3)\mu\partial_x u$, for high values
of the velocity. Let us consider first the Burnett correction from
(\ref{Gradnl})-(\ref{Gd}):
\begin{equation}
\label{Burnett}
\sigma_{\rm B} =-\frac{4}{3}\mu\partial_x u +\frac{8(2-\gamma)}{9}\mu^2
p^{-1}(\partial_x u)^2-\frac{4}{3}\mu^2 p^{-1}\partial_x
(\rho^{-1}\partial_x p).
\end{equation}
Each further $n$th term of the Chapman--Enskog expansion contributes, among others, a
nonlinear term proportional to $(\partial_x u)^{n+1}$. Such terms can be named the {\it
high-speed} terms since they dominate the rest of the contributions in each  order of the
Chapman--Enskog expansion when the characteristic average velocity is comparable to the
thermal speed. Indeed, let $U$ be the characteristic velocity (the Mach number). Consider
the scaling  $u=U \widetilde{u} $, where $\widetilde{u}=O(1)$. This velocity scaling is
instrumental to the selection of the leading large gradient terms and the result below is
manifestly Galilean--invariant.

The term $(\partial_x u)^{n+1}$ includes the factor $U^{n+1}$ which is the highest
possible order of $U$ among the terms available in the $n$th order of the Chapman--Enskog
expansion. Simple dimensional analysis leads to the conclusion that such terms are of the
form $$\mu (p^{-1}\mu\partial_x u)^{n}\partial_x u =\mu g^{n}\partial_x u, $$ where
$g=p^{-1}\mu\partial_x u$ is dimensionless. Therefore, the Chapman--Enskog expansion for
the function $\sigma$ may be formally rewritten as:
\begin{equation}
\label{representation}
\sigma\!=\!-\!\mu\!\left\{\!\frac{4}{3}\!-\!\frac{8(2\!-\!\gamma)}{9}g\!
+\!
r_2g^2\!+\!
\dots\!+\!r_ng^n\!+\!\dots\!\right\}
\!\partial_xu\!+\!\dots\!
\end{equation}
The series in the brackets is the collection of the high-speed contributions of interest,
coming from {\it all} orders of the Chapman--Enskog expansion, while the dots outside the
brackets stand for the terms of other natures. Thus after summation the series of the
high-speed corrections to the Navier--Stokes approximation for the Grad equations
(\ref{Gradnl}) takes the form:
\begin{equation}
\label{R}\boxed{
\sigma_{\rm nl}=-\mu R(g)\partial_x u ,}
\end{equation}
where $R(g)$ is a yet unknown function represented by a formal subsequence of
Chapman--Enskog terms in the expansion (\ref{representation}). The function $R$ can be
considered as a dynamic modification of the viscosity $\mu$ due to the gradient of the
average velocity.

Let us write the invariance equation for the representation (\ref{R}). We first compute
the microscopic derivative of the function $\sigma_{\rm nl}$ by substituting (\ref{R})
into the right hand side of (\ref{Gd}):
\begin{equation}
\begin{split}
\label{microNL}
\partial_t^{\rm micro}\sigma_{\rm nl}&=
-u\partial_x \sigma_{\rm nl}-\frac{4}{3}p\partial_x u -\frac{7}{3}\sigma_{\rm nl}\partial_x u
-\frac{p}{\mu(T)}\sigma_{\rm nl} \\
&=\left\{-\frac{4}{3}+\frac{7}{3}gR+R\right\}p\partial_x u +\dots,
\end{split}
\end{equation}
where dots denote the terms irrelevant to the high speed approximation (\ref{R}).

Second, computing the macroscopic derivative of $\sigma_{\rm nl}$ due to (\ref{Ga}),
(\ref{Gb}), and (\ref{Gc}), we obtain:
\begin{equation}
\label{macroNL1}
\partial_t^{\rm macro}\sigma_{\rm nl}=-[ \partial_t \mu(T)]R\partial_x
u-\mu(T)\frac{\D R}{\D g}[\partial_t g]\partial_x u- \mu(T)R\partial_x [\partial_t u].
\end{equation}
In the latter expression, the time derivatives of the hydrodynamic variables should be
replaced with the right hand sides of (\ref{Ga}), (\ref{Gb}), and (\ref{Gc}), where, in
turn,  $\sigma$ should be replaced by  $\sigma_{\rm nl}$ (\ref{R}). We find:
\begin{equation}
\label{macroNL2}
\partial_t^{\rm macro}\sigma_{\rm nl}=\left\{gR+\frac{2}{3}(1-gR)\times
\left(\gamma gR+(\gamma-1)g^2 \frac{\D R}{\D g}\right)\right\}p\partial_x u +\dots
\end{equation}
Again we omit the terms irrelevant to the analysis of the leading terms.

Equating the relevant terms in (\ref{microNL}) and (\ref{macroNL2}), we obtain the
approximate invariance equation for the function $R$:
\begin{equation}\boxed{
\label{AnnPhysinvariance} (1-\gamma)g^2 \left(1-gR\right)\frac{\D R}{\D g}+\gamma g^2 R^2
+\left[\frac{3}{2}+g(2-\gamma)\right]R-2=0.}
\end{equation}
It is approximate because in the microscopic derivative many terms are omitted, and it
becomes more accurate when the velocities are multiplied by a large factor. When $g \to
\pm \infty$ then the viscosity factor (\ref{AnnPhysinvariance}) $R \to 0$.

For Maxwell's molecules ($\gamma=1$), (\ref{AnnPhysinvariance}) simplifies
considerably, and becomes the algebraic equation:
\begin{equation}
\label{Maxwell}
g^2 R^2 +\left(\frac{3}{2}+g\right)R-2=0.
\end{equation}
The solution recovers the Navier--Stokes relation in the limit of small $g$ and for an
arbitrary $g$ it reads:
\begin{equation}
\label{viscosity}
R_{\rm MM}=\frac{-3-2g+3\sqrt{1+(4/3)g+4g^2 }}{4g^2 }.
\end{equation}
The function $R_{\rm MM}$ (\ref{viscosity}) is plotted in Fig.~\ref{AnnPhysFig10}. Note
that $R_{\rm MM}$ is positive for all values of its argument $g$, as is appropriate for
the viscosity factor, while the Burnett approximation to the function $R_{\rm MM}$
violates positivity.

\begin{figure}[t]
\centering{\boxed{
\includegraphics[width=0.7 \textwidth]{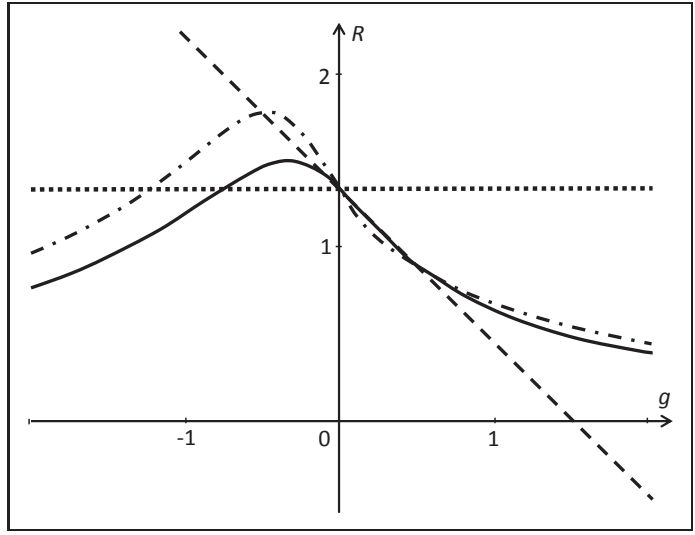}}
\caption{\label{AnnPhysFig10}Viscosity factor $R(g)$ (\ref{AnnPhysinvariance}):  solid - $R(g)$ for Maxwell
molecules; dash - the Burnett approximation of $R(g)$ for Maxwell
molecules; dots - the Navier--Stokes approximation; dash-dots - Viscosity factor $R(g)$ for hard spheres,
the first approximation (\ref{approximation}).}}
\end{figure}

For other models ($\gamma\ne 1$), the invariance equation (\ref{AnnPhysinvariance}) is a
 nonlinear ODE with the initial condition $R(0)=4/3$ (the
Navier--Stokes condition). Several ways to derive analytic results are possible. One
possibility is to expand the function $R$ into powers of $g$, around the point $g=0$.
This brings us back to the original sub-series of the Chapman--Enskog expansion
(\ref{representation})). Instead, we take advantage of the opportunity offered by the
parameter $\gamma$. Introduce another parameter $\beta=1-\gamma$, and consider the
expansion:
\[ R(\beta,g)=R_0 (g)+\beta R_1 (g)+\beta^2 R_2 (g)+ \dots .\]
Substituting this expansion into the invariance equation (\ref{AnnPhysinvariance}), we
derive $R_0 (g)=R_{\rm MM}(g)$,
\begin{equation}
\label{approximation} R_1 (g)=-g(1-gR_0 )\frac{R_0 +g(\D R_0 /\D g)}{2g^2 R_0 +g +(3/2)},
\end{equation}
etc. That is, the  solution for models different from Maxwell's molecules is constructed
in the  form of a series with the exact solution for the Maxwell molecules as the leading
term. For hard spheres ($\beta=1/2$), the result to the first-order term reads: $R_{\rm
HS}\approx R_{\rm MM}+(1/2)R_1$. The resulting approximate viscosity factor is shown in
Fig.~\ref{AnnPhysFig10} (dash-dots line). The features of the approximation obtained are
qualitatively the same as in the case of Maxwell molecules.

Precisely the same result for the nonlinear elongational viscosity obtained first from
the Grad equations \cite{KDNPRE97} was derived in \cite{Santos1,Santos3} from the
solution to the BGK kinetic equation in the regime of so-called homoenergetic extension
flow. This remarkable fact gives more credit to the derivation of hydrodynamic manifolds
from nonlinear Grad equations.

The approximate invariance equation (\ref{AnnPhysinvariance}) defines the relevant
physical solution to the viscosity factor for all values of $g$. The hydrodynamic
equations are now given by (\ref{Ga}), (\ref{Gb}), and (\ref{Gc}), where $\sigma$ is
replaced by $\sigma_{\rm nl}$ (\ref{R}). First, the correction concerns the nonlinear
regime, and, thus, the linearized form of the new equations coincides with the linearized
Navier--Stokes equations. Second, the solution (\ref{viscosity}) for Maxwell molecules
and the result of the approximation (\ref{approximation}) for other models and also the
numerical solution \cite{KGAnPh2002} suggest that the modified viscosity $\mu R$ vanishes
in the limit of very high values of the velocity gradients.  However, a cautious remark
is in order since the original ``kinetic'' description is Grad's equations
(\ref{Gradnl})-(\ref{Gd}) and not the Boltzmann equation. The first Newton iteration for
the Boltzmann equation gives a singularity of viscosity at a large negative value of
divergency (see below, Sec.~\ref{IMBoltzmann}).

\subsection{Approximate invariant manifold for the Boltzmann equation\label{IMBoltzmann}}

\subsubsection{Invariance equation}

We  begin with writing down the invariance condition for the hydrodynamic manifold of the Boltzmann equation.
A convenient point of departure is the Boltzmann equation (\ref{BOL0}) in a co-moving reference frame,

\begin{equation}\label{BPLTZ}
    D_t f=- (\vv-\uu) \cdot \nabla_x f+Q(f),
\end{equation}
where $D_t$ is the material time derivative,
$D_t=\partial_t+\uu\cdot\nabla_x .$
The macroscopic (hydrodynamic) variables are:
$$M= \left\{n;n\uu; \frac{3nk_{\rm B}T}{\mu}+nu^2\right\}=m[f]=\int \{1; \vv; v^2 \} f \, \D \vv,$$
where $n$ is number density, $\uu$ is the flow velocity, and $T$ is the temperature; $\mu$ is particle's mass and $k_{\rm B}$ is Boltzmann's constant.
These fields do not change in collisions, hence, the projection of the Boltzmann equation
on the hydrodynamic variables is
\begin{equation}\label{BoltzMomentProj}
D_t M = -  m[(\vv-\uu) \cdot\nabla_x f].
\end{equation}

For the given hydrodynamic fields $M$ the local Maxwellian $f^{\rm LM}_M$ (or just
$f^{\rm LM}$) is the only zero of the collision integral $Q(f)$.
\begin{equation}
f^{\rm LM}=n\left(\frac{2\pi k_{\rm B}T}{\mu}\right)^{-3/2} \exp \left(-\frac{\mu(\vv-\uu)^{2}}{2k_{\rm B}T}\right). \label{4.1}
\end{equation}
The local Maxwellian depends on space through the hydrodynamic fields.

We are looking for an invariant manifold $\ff_M$ in the space of distribution functions parameterized by the
hydrodynamic fields.
Such a manifold is represented by a lifting map $M \mapsto f_M$
that maps the hydrodynamic fields in 3D space, three functions of the space variables,
$M=\{n(\xx), \uu (\xx), T(\xx)\}$, into a function of
six variables $\ff_M(\xx, \vv)$. The consistency condition should hold:
\begin{equation}
m[f_{M}]=M.
\end{equation}
The differential of the lifting operator at the point $M$ is a linear map $(D_M \ff_M):\delta
M \to \delta f$.

It is straightforward to write down the invariance condition for the hydrodynamic manifold:
The microscopic time derivative of $f_M$ is given by the right hand side of the Boltzmann equation on the manifold,
$$D^{\rm micro} _t f_M = - (\vv-\uu) \cdot \nabla_x f_M + Q(f_M),$$
while the macroscopic  time derivative is defined by the chain rule:
$$D^{\rm macro} _t f_M = - (D_M f_M)m[(\vv-\uu) \cdot \nabla_x f_M].$$
The invariance equation  requires that, for any $M$, the outcome of two ways of taking the derivative should be the same:
\begin{equation}\label{invBE}\boxed{
-(D_M \ff_M)m[(\vv-\uu) \cdot \nabla_x \ff_M]=- (\vv-\uu) \cdot \nabla_x \ff_M + Q(\ff_M)}.
\end{equation}

One more field plays a central role in the study of invariant manifolds, the {\em defect of
invariance}:
\begin{equation}\label{Defect}
\begin{split}
\Delta_M&=D^{\rm macro} _t \ff_M-D^{\rm micro} _t \ff_M \\
&= -(D_M \ff_M)m[(\vv-\uu)
\cdot \nabla_x \ff_M] + (\vv-\uu) \cdot \nabla_x \ff_M.
\end{split}
\end{equation}
It measures the  ``non-invariance'' of a manifold $\ff_M$.

Let an approximation of the lifting operation $M\to \ff_M$ be given. The equation of the
first iteration for the unknown correction $\delta \ff_M$ of $\ff_M$ is obtained by the
linearization (We assume that the initial approximation, $\ff_M$, satisfies the
consistency condition and $m[\delta f]=0$.):
\begin{equation}\label{LinIEBoltzmann}
(D_M \ff_M)m[(\vv-\uu) \cdot \nabla_x \delta \ff_M]-
(\vv-\uu) \cdot \nabla_x \delta \ff_M + L\delta \ff_M=\Delta_M.
\end{equation}
Here, $L_M$ is a linearization of $Q$ at $f_M^{\rm LM}$. If $\ff_M$ is a local equilibrium
then, the integral operator $L_M$ at each point $\xx$ is symmetric with respect to the
entropic inner product (\ref{LinKin}). The equation of  iteration (\ref{LinIEBoltzmann})
is linear but with non-constant in space coefficients because both $(D_M \ff_M)$ and
$L_M$ depend on $\xx$.

It is necessary to stress that the standard Newton method does not work in these
settings. If $\ff_M$ is not a local equilibria then $L_M$ may be not  symmetric and we may
lose such instruments as the Fredholm alternative. Therefore, we use in the iterations
the linearized operators $L_M$ at the local equilibrium and not at the current
approximate distribution $\ff_M$ (the Newton--Kantorovich method).  We also do not
include the differential of the term $(D_M \ff_M)m$ in (\ref{LinIEBoltzmann}). The reason
for this {\em incomplete linearization} of the invariance equation (\ref{invBE}) is that
it provides convergence to the slowest invariant manifold (at least, for linear vector
fields), and other invariant manifolds are unstable in iteration dynamics. The complete
linearization does not have this property \cite{GKTTSP94,GorKarLNP2005}.

\subsubsection{Invariance correction to the local Maxwellian}

Let us choose the local Max\-wel\-lian $\ff_M=f^{\rm LM}$ (\ref{4.1}) as the initial approximation to the invariant
manifold in (\ref{LinIEBoltzmann}). In order to find the right hand side of this equation, we evaluate the
defect of invariance (\ref{Defect}) $\Delta_M=\Delta^{\rm LM}$:
\begin{equation}\label{LMdefect}
\Delta^{\rm LM}=f^{\rm LM}D,
\end{equation}
where
\begin{equation}\label{DLMdefect}\boxed{
\begin{aligned}
D=&\left(\frac{\mu(\vv-\uu)^2}{2k_{\rm B}T}-\frac{5}{2}\right)(\vv-\uu)\cdot\frac{\nabla_x T}{T}\\
&+\frac{\mu}{k_{\rm B}T}\left[(\vv-\uu)\otimes(\vv-\uu)-\frac{1}{3}\II(\vv-\uu)^2\right]:\nabla_x \uu.
\end{aligned}}
\end{equation}
Note that there is no ``smallness'' parameter
involved in the present consideration, the defect of invariance of the local Maxwellian is neither ``small'' or ``large''
by itself. We now proceed with finding a correction $\delta f$ to the local
Maxwellian on the basis of the linearized equation (\ref{LinIEBoltzmann})
supplemented with the consistency condition,
\begin{equation}\label{consistency1}
m[\delta f]=0.
\end{equation}
Note that, if we introduce the formal large parameter, $L\leftarrow \epsilon^{-1}L$ and look at the
leading-order correction $\delta f\leftarrow  \epsilon \delta f$,
disregarding all the rest in equation (\ref{LinIEBoltzmann}), we get a linear non-homogeneous integral equation,
\begin{equation}\label{CE1Boltzmann}
\begin{split}
\Lambda (\delta f/f^{\rm LM})=&\left(\frac{\mu(\vv-\uu)^2}{2k_{\rm B}T}-\frac{5}{2}\right)(\vv-\uu)\cdot\frac{\nabla_x T}{T} \\
&+\frac{\mu}{k_{\rm B}T}\left[(\vv-\uu)\otimes(\vv-\uu)-\frac{1}{3}\II(\vv-\uu)^2\right]:\nabla_x \uu,
\end{split}
\end{equation}
where
\begin{equation*}
\Lambda\varphi=\int w(\vv',\vv_1'|\vv,\vv_1)f^{\rm LM}(\vv_1)
[\varphi(\vv_1')+\varphi(\vv')-\varphi(\vv)-\varphi(\vv_1)]\D \vv_1'\D \vv'\D \vv_1
\end{equation*}
is the linearized Boltzmann collision operator ($w$ is the scattering kernel; standard
notation for the velocities before and after the binary encounter is used). It is readily
seen that (\ref{CE1Boltzmann}) is nothing but the standard equation of the first
Chapman-Enskog approximation, whereas the consistency condition (\ref{consistency1})
results in the unique solution (Fredholm alternative) to (\ref{CE1Boltzmann}). This leads
to the classical Navier-Stokes-Fourier equations of the Chapman-Enskog method.

Thus, the first iteration (\ref{LinIEBoltzmann}) for the solution of the invariance equation (\ref{invBE}) with the local Maxwellian as the initial approximation is matched to the first Chapman-Enskog correction to the local Maxwellian. However, equation (\ref{LinIEBoltzmann}) is much more complicated than its Chapman-Enskog limit: equation (\ref{LinIEBoltzmann}) is linear but integro-differential (rather than just the linear integral equation (\ref{CE1Boltzmann})), with coefficients varying in space through
both $(D_M f^{\rm LM})$ and $L$. We shall now describe a micro-local approach for solving (\ref{LinIEBoltzmann}).

\subsubsection{Micro-local techniques for the invariance equation}

Introducing $\delta f=f^{\rm LM}\varphi$, equation (\ref{LinIEBoltzmann}) for the local
Maxwellian initial approximation can be cast in the following form,
\begin{equation}\label{myform}
\Lambda^*\varphi-(\VV^*\cdot\nabla)\varphi=D,
\end{equation}
where the {\em enhanced linearized collision integral} $\Lambda^*$ and the {\em enhanced free flight operator} $(\VV^*\cdot\nabla)$ act as follows:
Let us denote $\Pi$ the projection operator ($\Pi^2=\Pi$),
\begin{equation}\label{LMproj}
\Pi g=\left(f^{\rm LM}\right)^{-1}D_Mf^{\rm LM}m[f^{\rm LM}g].
\end{equation}
Then in (\ref{myform}) we have:
\begin{eqnarray*}
\Lambda^*\varphi&=&\Lambda \varphi +(\Pi-1)(r\varphi),\label{enhancedLinColl}\\
r&=&(\vv-\uu)\cdot\frac{\nabla_x n}{n}+\frac{\mu}{k_{\rm B}T}(\vv-\uu)\otimes(\vv-\uu):\nabla_x \uu \\ &&+
\left(\frac{\mu(\vv-\uu)^2}{2k_{\rm B}T}-\frac{3}{2}\right)(\vv-\uu)\cdot\frac{\nabla_x T}{T}, \nonumber \\
(\VV^*\cdot\nabla)\varphi &=&(1-\Pi)((\vv-\uu)\cdot\nabla_x\varphi).\label{enhancedFreeFlight}
\end{eqnarray*}

The structure of the invariance equation (\ref{myform}) suggests the way of inverting the
enhanced operator $\Lambda^*-(\VV^*\cdot\nabla)$:
\begin{itemize}
\item {\em Step 1}: Discard the enhanced free flight operator. The resulting local in space linear integral equation, $\Lambda^*[\varphi]=D$, is similar to the Chapman-Enskog equation (\ref{CE1Boltzmann}), and has unique solution by the Fredholm alternative:
\begin{equation}
\varphi_{\rm loc}(\xx)=\left(\Lambda_{\textbf \xx}^*\right)^{-1}[D(\xx)].
\end{equation}
Here we have explicitly indicated the space variables in order to stress the fact of locality. (For a given $\xx$, both $D(\xx)$ and $\varphi_{\rm loc}(\xx)$ are functions of $\xx$ and $\vv$ and $\Lambda_{\textbf \xx}$ is an integral in $\vv$ operator.)
\item {\em Step 2}: Fourier-transform the local solution:
\begin{equation}
\hat{\varphi}_{\rm loc}(\kk)=\int e^{-i{\textbf \kk\cdot \textbf \xx}}\varphi_{\rm loc}(\xx) \D \xx.
\end{equation}
\item {\em Step 3}: Replace the Fourier-transformed enhanced free flight operator with its main symbol and solve the linear integral equation:
\begin{equation}
[\Lambda_{\textbf  \xx}^*+i(\VV_{\textbf  \xx}^*\cdot\kk)][\hat{\varphi}(\xx,\kk)]=\hat{D}(\xx,\kk),
\end{equation}
where
\begin{equation}
\hat{D}(\xx,\kk)=\Lambda_{\textbf  \xx}^*[\hat{\varphi}_{\rm loc}(\kk)].
\end{equation}
\item {\em Step 4}: Back-transform the result:
\begin{equation}
\varphi=(2\pi)^{-3}\int e^{i{\textbf \kk\cdot \textbf \xx}}\hat{\varphi}(\xx,\kk) \D \kk;
\end{equation}
the resulting $\varphi$ is a function of $\xx$ and $\vv$.
\end{itemize}
Several comments are in order here. The above approach to solving the invariance equation
is the realization of the Fourier integral operator and parametrix expansion techniques
\cite{Shubin2001,Treves1980}. The equation appearing in Step 3 is in fact the first term of
the parametrix expansion. At each step of the algorithm, one needs to solve linear
integral equations of the type familiar from the standard literature on the Boltzmann
equation. Solutions at each step are unique by the Fredholm alternative. In practice, a
good approximation for such linear integral equations is achieved by a projection on a
finite-dimensional basis.  Even with these approximations, evaluation of the correction to
the local Maxwellian remains rather involved. Nevertheless, several results in limiting
cases were obtained, and are reviewed below.

\begin{figure}
\centering{
\includegraphics[width=0.5 \textwidth]{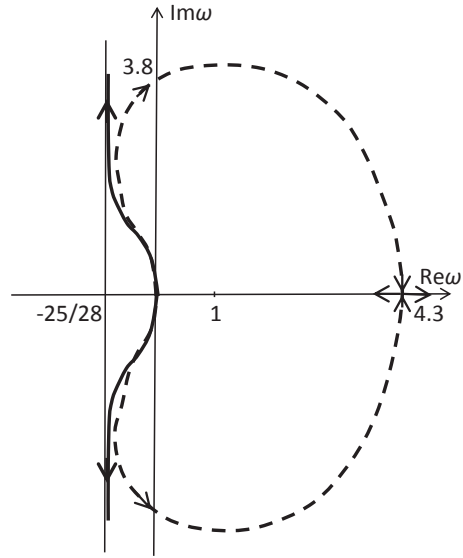}
\caption {\label{FigDisp} Acoustic dispersion curves for the frequency-response nonlocal
approximation (\ref{4.70}) (solid line) and for the Burnett
approximation of the  Chapman-Enskog expansion  \cite{Bob} (dashed
line). Arrows indicate the direction of increase of $k^{2}$.}}
\end{figure}

For the unidirectional flow near the global equilibrium ($n=n_0, \uu=0, T=T_0$) for Maxwell's molecules  the
iteration gives the following expressions for the $xx$ component of the stress tensor
$\sigma$ and the $x$ component of the heat flux $q$ for 1D solutions (in the
corresponding dimensionless variables (\ref{AnnPhysvar})):
\begin{equation}\boxed{
\begin{aligned}
\sigma & =  -\frac{2}{3}n_{0}T_{0}\left( 1 - \frac{2}{5}{\partial_x ^{2}} \right)^{-1}
 \left(2{\partial_x u}-
3 {\partial_x ^{2}T} \right);  \\
q& = -\frac{5}{4}n_{0} T^{3/2}_{0}\left( 1 - \frac{2}{5}{\partial_x ^{2}} \right)^{-1}
\left( 3  {\partial_x T}-\frac{8}{5}{\partial_x ^{2}u} \right) . \label{4.70}
\end{aligned}}
\end{equation}
The corresponding dispersion curves are presented in Fig.~\ref{FigDisp}, where the
saturation effect is obvious.

Already at the first iteration the nonlinear terms are strongly coupled with the nonlocality
in expressions for $\sigma$ in $q$ (see \cite{GKTTSP94,GorKarLNP2005}).
Viscosity tends to positive infinity for high speed of compression (large negative ${\rm div}
\uu$). In other words, the  flow  becomes  ``infinitely viscous" when ${\partial_x u}$
approaches the critical negative value $-u_x^*$. This  infinite viscosity
threshold prevents  a transfer of  the   flow   into nonphysical region of negative
viscosity if ${\partial_x u} < -u_x^*$  because
of the  ``infinitely  strong  damping"  at $-u_x^*$.

The large {\it positive} values of $\partial_x u$ means that the gas diverges rapidly,
and the flow becomes nonviscid because the particles  retard  to exchange their momentum.
On the contrary, its {\it negative} values (near $-u^*_x$ ) describe an extremely  strong
compression of the flow, which results in a `solid jet' limit with an infinite
viscosity \cite{GKsolidliq1995}.

As an example, we present the result of the above micro-local analysis for the part of the stress tensor
$\sigma$ which does not vanish when $T$ and $n$ are fixed:
\begin{equation}\boxed{
\begin{aligned}
\sigma(x)&=-\frac{1}{ 6 \pi } n(x) \int^{+\infty}_{-\infty } \D y\int^{+\infty }_{-\infty}  \D k \exp
(ik(x-y))\frac{2}{3} {\partial_y u(y)} \\
&\times  \left[ \left( n(x)\lambda _{3}+\frac{11}{9} {\partial_x u(x)} \right) \left(n(x)\lambda _{4}+\frac{27}{ 4} {\partial_x u(x)} \right) +
\frac{k^2 v^2_{T}(x)}{9} \right]^{-1} \\
&\times  \left[ \left( n(x)\lambda _{3}+\frac{11}{ 9}
{\partial_x u(x)} \right) \left(n(x)\lambda _{4}+\frac{27}{ 4} {\partial_x u(x)} \right)
 \right.
\\ & \left. + \frac{4}{ 9} \left(n(y)\lambda _{4}+\frac{27}{ 4} {\partial_y u(y)} \right)
v^{-2}_{T}(x)(u(x)-u(y))^{2}{\partial_x u(x)}\right. \\
&\left. -\frac{2}{ 3}ik(u(x)-u(y)){\partial_x u(x)}  \right]  \left(n(y)\lambda _{3}+\frac{11}{9} {\partial_y u(y)} \right)^{-1}  \\
& + O \left( {\partial_x \ln T(x)}, {\partial_x \ln  n(x)} \right).
\label{4.74b}
\end{aligned}}
\end{equation}
The answer in this form does not depend on the detailed collision model. Only the general
properties like conservation laws, $H$-theorem and Fredholm's alternative for the
linearized collision integral are used. All the specific information about the collision
model is collected in the positive numbers $\lambda_{3,4}$. They are  represented by
quadratures in \cite{GKTTSP94,GorKarLNP2005}. The `residual' terms describe the part of
the stress tensor governed by the temperature and density gradients.

The simplest local approximation to this singularity in $\sigma$ has the form
\begin{equation}\label{LocSingSigma}\boxed{
\sigma =-\mu_0(T) n \left(1+\frac{\partial_x u}{u_x^*} \right)^{-1} {\partial_x u } .}
\end{equation}

For the viscosity factor $R$ (\ref{R}) this approximation gives (compare to
(\ref{viscosity}) and Fig.~\ref{AnnPhysFig10}).
\begin{equation}
R=\frac{const}{1+{\partial_x u}/{u_x^*} }
\end{equation}

The approximations with singularities similar to (\ref{LocSingSigma}) with $u_x^*=3/7$
have been also obtained by the partial summation of the Chapman--Enskog series
\cite{GKJETP91,GKTTSP92}.

As we can see,  the invariance correction results in a strong  coupling between  non-locality and non-linearity, and is
far from the conventional Navier--Stokes and Euler equation or other truncations of the
Chapman--Enskog series. Results of the micro-local correction to the local Maxwellian are quite similar to the summation of the
selected main terms of the Chapman--Enskog expansion.
In general, the question about the hydrodynamic invariant manifolds for the
Boltzmann equation remains less studied so far because the coupling between the non-linearity and the
non-locality brings about new challenges in calculations and proofs.
There is hardly a reason to expect that
the invariant manifolds for the genuine Boltzmann equation will have a nice analytic form
similar to the exactly solvable reduction problem for the linearized Grad equations.
Nevertheless, some effects persist: the saturation of dissipation for high frequencies
and the nonlocal character of the hydrodynamic equations.

\section{The projection problem and the entropy equation\label{Sec:Projector}}

The exact invariant manifolds inherit many properties of the original systems:
conservation laws, dissipation inequalities (entropy growth) and hyperbolicity of the
exactly reduced system follow from these properties of the original system. The
reason for this inheritance is simple: the vector field of the original system is tangent
to the invariant manifold and if $M(t)$ is a solution to the exact hydrodynamic equations
then, after the lifting operation, $f_{M(t)}$ is a solution to the original kinetic
equation.

In real-world applications, we very rarely meet the exact reduction from kinetics to
hydrodynamics and should work with the {\em approximate} invariant manifolds. If $\ff_M$
is not an exact invariant manifold then a special {\em projection problem} arises
\cite{GK1,UNIMOLD,Radkevich2006}: how should we define the projection of the vector field
on the manifold $\ff_M$ in order to preserve the most important properties, the
conservation laws (first law of thermodynamics) and the positivity of entropy production
(second law of thermodynamics). For hydrodynamics, the existence of the `natural' moment
projection $m$ (\ref{BoltzMomentProj}) masks the problem.

The problem of dissipativity preservation
attracts much attention in the theory of shock waves.
For strong shocks it is necessary to use the kinetic representation, for rarefied gases
the Boltzmann kinetic equation gives the framework for studying the structure of strong
shocks \cite{Cercignani}.  One of the common heuristic ways to use the Boltzmann equation
far from local equilibrium consists of three steps:
\begin{enumerate}
\item{Construction of a specific ansatz for the distribution function for a given
    physical problem;}
\item{Projection of the Boltzmann equation on the ansatz;}
\item{Estimation and correction of the ansatz (optional).}
\end{enumerate}
The first and, at the same time, the  most successful ansatz for the distribution
function in the shock layer was invented in the middle of the twentieth century. It is
the bimodal Tamm--Mott-Smith approximation (see, for example, the book
\cite{Cercignani}):
\begin{equation}\label{TMS}
f(\vv, \xx)=f_{\rm TMS}(\vv, z) = a_-(z)f_-(\vv)+a_+(z)f_+(\vv),
\end{equation}
where $z$ is the space coordinate in the direction of the shock wave motion, $f_\pm(\vv)$
are the downstream and the upstream Maxwellian distributions, respectively. The
macroscopic variables for the Tamm--Mott-Smith approximation are the coefficients
$a_{\pm}(z)$, the lifting operation is given by (\ref{TMS}) but is remains unclear how to
project the Boltzmann equation onto the linear manifold (\ref{TMS}) and create the
macroscopic equation.

To respect second law of thermodynamics and provide positivity of entropy
production, Lampis \cite{Lampis} used the entropy density $s$ as a new variable.
The entropy density is defined as a functional of $f(\vv)$, $s(\xx)=-\int f(\xx,\vv) \ln f(\xx,\vv) \, \D^3 \vv$.
For each distribution $f$ the time derivative of $s$ is defined by the Boltzmann equation and the chain rule:
\begin{equation}\label{entropyEq}
\partial_t s= -\int  \ln f \partial_t f \, \D^3 \vv=\mbox{ entropy flux }+\mbox{ entropy production }.
\end{equation}
The distribution $f$ in (\ref{entropyEq}) is defined by the Tamm--Mott-Smith approximation:
\begin{enumerate}
\item{Calculate the density $n$ and entropy density $s$ on the Tamm--Mott-Smith approximation (\ref{TMS}) as
functions of $a_\pm$, $n=n(a_+,a_-)$, $s=s(a_+,a_-)$; }
\item{Find the inverse transformation $a_{\pm}(n,s)$;}
\item{The lifting operation in the variables $n$ and $s$ is
$$f_{(n,s)}(\vv)=a_-(n,s)f_-(\vv)+ a_{+}(n,s)f_+(\vv).$$}
\end{enumerate}
This combinational of the natural projection (\ref{entropyEq}) and the Tamm--Mott-Smith
lifting operation provides the approximate equations on the Tamm--Mott-Smith manifold
with positive entropy production. Several other projections have been tested
computationally \cite{Naka}. All of them violate second law of thermodynamics because
for some initial conditions the entropy production for them becomes negative at some
points. Indeed, introduction of the entropy density as an independent variable with the
natural projection of the kinetic equation on this variable seems to be an attractive and
universal way to satisfy the second law of thermodynamics on smooth solutions but
near the equilibria this change of variables becomes singular.

Another universal solution works near equilibria (and local equilibria). The advection operator does not change
entropy. Let us consider a linear approximation to a space--uniform kinetic equation near
equilibrium $f^*(\vv)$: $\partial_t \delta f =K f$. The second differential of entropy
generates a positive quadratic form
\begin{equation}\label{EntLinProd}
\langle \varphi,\psi\ \rangle_{f^*}=-(D^2 S) (\varphi,\psi)=\int \frac{\varphi \psi}{f^*}\, \D^3 \vv .
\end{equation}
The quadratic approximation to the entropy production is non-negative:
\begin{equation}\label{posentprolin}
-\langle \varphi,K \varphi\ \rangle_{f^*}\geq 0.
\end{equation}
Let $T$ be a closed linear subspace in the space of distributions. There is a unique
projector $P_T$ onto this subspace which does not violate the positivity of entropy
production for any bounded operator $K$ with property (\ref{posentprolin}): If $-\langle
P_T \varphi, P_T K P_T \varphi \rangle_{f^*}\geq 0$ for all $\varphi$, $\psi$ and all
bounded $K$ with property (\ref{posentprolin}), then $P_T$ is an orthogonal projector
with respect to the entropic inner product  (\ref{EntLinProd}) \cite{InChLANL,UNIMOLD}.
This projector acts on functions of $\vv$. For a local equilibrium $f^*(\xx,\vv)$ the projector
is constructed for each $\xx$ and acts on functions $\varphi(\xx,\vv)$ point-wise at each point $\xx$.
Liu and Yu \cite{LiuYu2004CMP} also used this projector in a vicinity of local equilibria
for the micro--macro decomposition in the analysis of the shock profiles and for the
study nonlinear stability of the global Maxwellian states \cite{LiuYu2004PhD}. Robertson
studied the projection onto manifolds constructed by the conditional maximization of the entropy
and the micro--macro decomposition in the vicinity of such manifolds \cite{Robertson}. He obtained the
orthogonal projectors with respect to the entropic inner product and called this result ``the
equation of motion for the generalized canonical density operator''.

The general case can be considered as a `coupling' of the above two examples: the introduction of
the entropy density as a new variable, and the orthogonal projector with respect to entropic inner
product. Let us consider all smooth vector fields with non-negative entropy production.
The projector which preserves the nonnegativity of the entropy production for {\em
all} such fields turns out to be unique. This is the so-called {\em thermodynamic projector}
\cite{GK1,InChLANL,UNIMOLD,GorKarLNP2005}. Let us describe this projector $P$ for a given state $f$,
closed subspace $T_f = {\rm im P_T}$, and the differential $(DS)_f$ of the entropy $S$ at
$f$. For each state $f$ we use the entropic inner product (\ref{EntLinProd}) at
 $f^*=f$. There exists a unique vector $g(f)$ such that $\langle g, \varphi
\rangle_f =(DS)_f(\varphi )$ for all $\varphi $. This is nothing but the Riesz representation of the
linear functional $D_xS$ with respect to entropic scalar product. If $g \neq 0$ then the
thermodynamic projector of the vector field $J$ is
\begin{equation}\label{TermProj}\boxed{
P_T(J)=P^{\bot} (J) + \frac{ g^{\|}}{\langle g^{\|}|g^{\|}\rangle_{f}} \langle
g^{\bot}|J \rangle_{f},}
\end{equation}
where $P_T^{\bot}$ is the orthogonal projector onto $T_f$ with respect to the entropic
scalar product, and the vector $g$ is split onto tangent and orthogonal components:
$$g = g^{\|} + g^{\bot}; \; g^{\|}= P^{\bot} g;  \: g^{\bot} =
(1-P^{\bot}) g .$$ This projector is defined if  $g^{\|}\neq 0$. If $g^{\|} = 0$ (the
equilibrium point) then $J=0$ and  $P(J)= P^{\bot} (J)=0$.

\begin{figure}[t]
\begin{centering}\boxed{
\includegraphics[width=0.8 \textwidth]{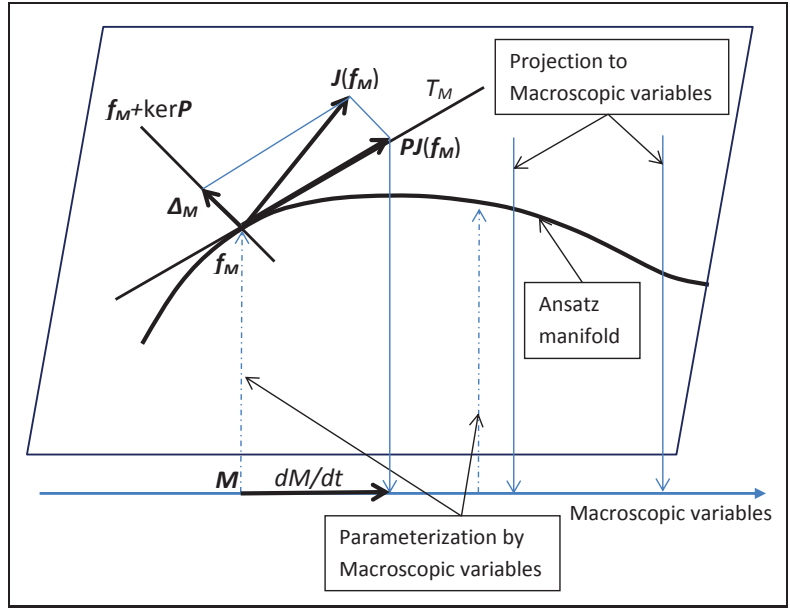}}
\caption {\label{InManProj} The main geometrical structures of model
reduction with an approximate invariant manifold (the {\em ansatz} manifold):  $J(f)$ is the vector field of
the system under consideration, $\partial_t f = J(f)$,  the lifting map $M \mapsto \ff_M$ maps
a macroscopic field $M$ into the corresponding point $\ff_M$ on the
ansatz manifold, $T_M$ is the tangent space to the ansatz manifold
at point $\ff_M$, $P$ is the thermodynamic projector onto $T_M$ at point $\ff_M$,
$PJ(\ff_M)$ is the projection of the vector
$J(\ff_M)$ onto tangent space $T_M$, the vector field $\D M / \D t$
describes the induced dynamics on the space of macroscopic variables, $\Delta_M
= (1-P)J(\ff_M)$ is the defect of invariance, the affine subspace $\ff_M+
\ker P$ is the plane of fast motions,  and $\Delta_M \in \ker P$.
The invariance equation is $\Delta_M=0$.}
\end{centering}
\end{figure}

The selection of the projector in the form (\ref{TermProj}) guaranties preservation of
entropy production.  The thermodynamic projector can be applied for the projection of the
kinetic equation onto the tangent space to the approximate invariant manifold if the
differential of the entropy does not annihilate the tangent space to this manifold.
(Compare to the relative entropy approach in \cite{Saint-RaymondCompanion}.)

Modification of the projector changes also the a simplistic picture of the separation of motions
(Fig.~\ref{fig1SLw}). The modified version is presented in Fig.~\ref{InManProj}. The main
differences are:
\begin{itemize}
\item{The projection of the vector field $J$ on the macroscopic variables $M$  goes
    in two steps, $$J(\ff_M) \mapsto PJ(\ff_M)\mapsto m(PJ(\ff_M)),$$ the first
    operation $J(\ff_M) \mapsto PJ(\ff_M)$ projects $J$ onto the tangent plane $T_M$ to
    the ansatz manifold  at point $\ff_M$ and the second
    is the standard projection onto macroscopic variables
    $m$. Therefore, the macroscopic equations are $\partial_t M=m(PJ(\ff_M))$ instead
    of (\ref{projectedMIM}).}
\item{The plane of {\em fast motion} is now $\ff_M + \ker P$ instead of  $\ff_M +
    \ker m$ from Fig.~\ref{fig1SLw}. }
\item{The entropy maximizer on $\ff_M + \ker P$  is $\ff_M$, exactly as the local
    Maxwel\-lians $\ff_M^{\rm LM}$ are the entropy maximizers on $\ff_M^{\rm LM}+ \ker
    m$. Thus, the entropic projector allows us to represent an ansatz manifold as a
    collection of the conditional entropy maximizers.}
\end{itemize}

For  details of the thermodynamic projector construction we refer to \cite{UNIMOLD,GorKarLNP2005}. Some examples with
construction of the thermodynamic projector with preservation of linear conservation laws
are presented in \cite{GorGorKar2004}.

Another possible modification is a modification of the entropy functional. Recently, Grmela
\cite{GrmelaAdv2010,GrmelaCAMWA2013} proposed to modify the entropy functional after each
step of the Chapman--Enskog expansion in order to transform the approximate invariant
manifold into the manifold of the conditional entropy maximizers. This idea is very similar to
the thermodynamic projector in the following sense: any point $\varphi$ on the
approximate invariant manifold is the conditional entropy maximum on the linear manifold
$\varphi+\ker P_T$, where $T=T_{\varphi}$ is the tangent subspace to the manifold at
point $\varphi$. Both modifications represent the approximate invariant manifold as a set
of conditional maximizers of the entropy.

\section{Conclusion}

{\bf It is useful to solve the invariance equation.}  This is a particular case of the
Newton's famous  sentence: ``It is useful to solve differential equations'' (``Data
{\ae}quatione quotcunque fluentes quantit{\ae} involvente fluxiones invenire et vice
versa,'' translation published by V.I. Arnold \cite{ArnoldGeomMet1983}). The importance of
the invariance equation has been recognized in mechanics by Lyapunov in  his thesis
(1892) \cite{Lya1992}.  The problem of persistence and bifurcations of invariant
manifolds under perturbations is one of the most seminal problems in dynamics
\cite{KAM,KAM1,KAMAr2,KAMAr3,HiPuSh,IneManTe88}.

Several approaches to the computation of invariant manifolds have been developed: Lyapunov
series \cite{Lya1992}, methods of geometric singular perturbation theory
\cite{Fenichel0,Fenichel9,JonesGSP1995} and various power series expansions
\cite{Ens,Chapman,Beyn1}. The graph transformation approach was invented by Hadamard in
1901 \cite{Hadamard1901} and developed further by many authors
\cite{HiPuSh,Fontish1990,GuVlad2004,Kornev2007}. The Newton-type direct iteration methods
in various forms \cite{RouFra,GK1,GorKarTTSPreg1993,GKTTSP94,Lam} proved their efficiency
for model reduction and calculation of slow manifolds in kinetics. There is also a series
of numerical methods based on the analysis of motion of an embedded manifold along the
trajectories with subtraction of the motion of the manifold `parallel to themselves'
\cite{Kev,CMIM,NafeMaas2002,GorKarLNP2005}.

The Chapman--Enskog method \cite{Ens,Chapman} was proposed in 1916. This method aims
to construct the invariant manifold for the Boltzmann equation in the form of a series in
powers of a  small parameter, the Knudsen number $K\!n$. This invariant manifold is
parameterized by the hydrodynamic fields (density, velocity, temperature). The
zeroth-order term of this series is the corresponding local equilibrium. This form of the
solution (the power series and the local equilibrium zeroth term) is, at the same time, a
selection rule that is necessary to choose the hydrodynamic (or Chapman--Enskog) solution
of the invariance equation.

If we truncate the Chapman--Enskog series at the zeroth term then we get the Euler
hydrodynamic equations, the first term gives the Navier-Stokes hydrodynamics but already
the next term (Burnett) is singular and gives negative viscosity for large divergence of the
flow and instability of short waves. Nevertheless, if we apply, for example, the
Newton--Kantorovich method \cite{GKTTSP94,CMIM,GorKarLNP2005} then all these
singularities vanish (Sec.~\ref{IMBoltzmann}).

The Chapman--Enskog expansion appears as the Taylor series for the solution of the
invariance equation. Truncation of this series may approximate the hydrodynamic
invariant manifold in some limit cases such as the long wave limit or a vicinity of the
global equilibrium. Of course, the results of the invariant manifold approach should
coincide with the proven hydrodynamic limits of the Boltzmann kinetics
\cite{BardosGolLever1991,LionsMasmud1999,GolseS-R2004,Saint-RaymondBook,Saint-RaymondCompanion}
`at the end of relaxation'.

In general, there is no reason to hope that a few first terms of the Taylor series
give an appropriate global approximation of solutions of the invariance equation
(\ref{generalIE}). This is clearly demonstrated by the exact solutions
(Sec.~\ref{Sec:Exact}, \ref{Sec:ExactLinKin}).

The invariant manifold idea was present implicitly in the original Enskog and Chapman works and in
most subsequent publications and textbooks. An explicit formulation of the invariant
manifold programme for the derivation of fluid mechanics and hydrodynamic limits from the
Boltzmann equation was published by McKean \cite{McKean1965} (see Fig.~\ref{McKeanDiag}
in Sec~\ref{Sec:IdeaIMkinetics}). At the same time, McKean noticed that the problem
of the invariant manifold for kinetic equations does not include the small parameter because
by the rescaling of the space dependence of the initial conditions we can remove the
coefficient in front of the collision integral: there is no difference between the
Boltzmann equations with different $K\!n$. Now we know that the formal `small' parameter
is necessary for the selection of the hydrodynamic branch of the solutions of the
invariance equation because this equation can have many more solutions. (For example,
Lyapunov used for this purpose analyticity of the invariant manifold and selected the
zeroth approximation in the form of the invariant subspace of the linear approximation.)

The simplest example of  invariant manifold is a trajectory (invariant curve). Therefore, the method of invariant manifold
may be used for the construction and analysis of the trajectories. This simple idea is useful and the method of invariant manifold was applied
for solution of the following problems:
\begin{itemize}
\item{For analysis and correction of the Tamm--Mott-Smith approximation of  strong
    shock waves far from local equilibrium \cite{GK1}, with the Newton iterations for
    corrections;}
\item{For analysis of reaction kinetics \cite{ChiGorKar2007} and reaction--diffusion
    equations \cite{MenPow2013};}
\item{For lifting of shock waves from the piece-wise solutions of the Euler equation to the solutions
of the Boltzmann equation hear local equilibrium for small $K\!n$ \cite{LiuYu2004CMP};}
\item{For analytical approximation of the relaxation trajectories \cite{GorKarNonZmi1996}. (The me\-thod is tested
for the space-independent Boltzmann equation with various collisional mechanisms.)}
\end{itemize}

{\bf The invariant manifold approach to the kinetic part of the 6th Hilbert's Problem}
concerning the limit transition from the Boltzmann kinetics to mechanics of continua was
invented by Enskog almost a century ago, in 1916 \cite{Ens}. From a physical perspective,
it remains the main method for the construction of macroscopic dynamics from dissipative
kinetic equations. Mathematicians, in general, pay less attention to this approach
because usually in its formulation the solution procedure (the algorithm for the
construction of the bulky and singular Chapman--Enskog series) is not separated from the
problem statement (the hydrodynamic invariant manifold). Nevertheless, since the 1960s
the invariant manifold statement of the problem has been clear for some researchers
\cite{McKean1965,GKTTSP94,GorKarLNP2005}.

Analysis of the simple kinetic models with algebraic hydrodynamic invariant manifolds
(Sec.~\ref{Sec:Exact}) shows that the {\em hydrodynamic invariant manifolds may exist
globally} and the divergence of the Chapman--Enskog series does not mean the
non-existence or non-analyticity of this manifold.

The invariance equation for the more complex Grad kinetic equations (linearized) is also
obtained in an algebraic form (see Sec.~\ref{Sec:GradDestroy} and
\cite{KGAnPh2002,GorKarLNP2005} for 1D and Sec.~\ref{Sec:Grad3DDestroy} and
\cite{ColKarKroe2007_2} for 3D space). An analysis of these polynomial equations shows that
the real-valued solution of the invariance equation in the $k$-space may break down for very
short waves. This effect is caused by the so-called entanglement of hydrodynamic and
non-hydrodynamic modes.

The linearized equation with the BGK collision model \cite{BGK} includes the genuine free
flight advection operator and is closer to the Boltzmann equation in the hierarchy of
simplifications. For this equation, there are numerical indications that the hydrodynamic modes are separated from the
non-hydrodynamic ones and the calculations show that the hydrodynamic invariant manifold may
exist globally (for all values of the wave vector $k$) \cite{KarColKroe2008PRL}.

It seems more difficult to find a nonlinear Boltzmann equation with exactly solvable
invariance equation and summarize the Chapman--Enskog series for a nonlinear kinetic
equation exactly. Instead of this, we select in each term of the series the terms of the
main order in the power of the Mach number $M\!a$ and exactly summarize the resulting
series for the simple nonlinear 1D Grad system (Sec.~\ref{Sec:NlinVis},
\cite{KGAnPh2002,GorKarLNP2005}). This expansion gives the dependence of the viscosity on the
velocity gradient (\ref{R}), (\ref{AnnPhysinvariance}), (\ref{viscosity}).

The exact hydrodynamics projected from the invariant manifolds inherits many useful
properties of the initial kinetics: conservation laws, dissipation inequalities, and (for
the bounded lifting operators) hyperbolicity (Sec.~\ref{Sec:Hyperbol}). Also, the
existence and uniqueness theorem  may be valid in the projections if it is valid for the
original kinetics. In applications, for the approximate hydrodynamic invariant manifolds,
the projected equation may violate many important properties. In this case, the change of
the projector  operator solves some of these problems  (Sec.~\ref{Sec:Projector}). The
construction of the thermodynamic projector guarantees the positivity of entropy production
even in very rough approximations \cite{GK1,UNIMOLD}.

{\bf At the present time, Hilbert's 6th Problem is not completely solved in its kinetic
part.} More precisely, there are several hypotheses we can prove or refute. {\em The
Hilbert hypothesis has not been unambiguously formulated} but following his own works in
the Boltzmann kinetics we can guess that he expected to receive the Euler and
Navier--Stokes equations as an ultimate hydrodynamic limit of the Boltzmann equation.

Now, the Euler limit is proven for the  limit  $K\!n, M\!a \to 0$, $M\!a \ll K\!n$, and
the Navier--Stokes limit is proven for $K\!n, M\!a \to 0$, $M\!a \sim K\!n$. In these
limits, the flux is extremely slow and the gradients are extremely small (the velocity,
density and temperature do not change significantly over a long distance). The system is
close to the global equilibrium. Of course, after rescaling these solutions restore some
dynamics but this rescaling erases some physically important effects. For example, it is
a simple exercise to transform an attenuation curve with saturation from
Fig.~\ref{Attenuation} into a parabola (Navier--Stokes) or even into a horizontal straight
line (no attenuation, the Euler limit) with arbitrary accuracy by the rescaling of space
and time.

We can state at present that beyond this limit, the Euler and Navier--Stokes
hydrodynamics do not provide the proper hydrodynamic limit of the Boltzmann equation. A
solution of the Boltzmann equation relaxes to the equilibrium \cite{DescvilVill2005} and,
on its way to equilibrium, the classical hydrodynamic limit will be achieved as an
intermediate asymptotic (after the proper rescaling). This recently proven result fills
an important gap in our knowledge about the Boltzmann equation but from the physics
perspective this is still the limit $K\!n, M\!a \to 0$ (with the proof that this limit
will be achieved on the path to equilibrium).

{\em The invariant manifold hypothesis} was formulated clearly by McKean
\cite{McKean1965} (see Sec.~\ref{Sec:IdeaIMkinetics}, Fig.~\ref{McKeanDiag} and
Sec.~\ref{Sec:e=1Cat}): the kinetic equation admits  an invariant manifold parameterized
by the hydrodynamic fields, and the Chapman--Enskog series are the Taylor series for this
manifold. {\em Nothing is expected to be small and no rescaling is needed}. After the
publication of the McKean work (1965), this hypothesis was supported by exactly solved
reduction problems, explicitly calculated algebraic forms of the invariance equation and
direct numerical solutions of these equations for some cases like the linearized BGK
equation.

In addition to the existence of the hydrodynamic invariant manifold some stability
conditions of this manifold are needed in practice. Roughly speaking, the relaxation to
this manifold should be faster than the motion along it. An example of such a condition
gives the separation of the hydrodynamic and non-hydrodynamic modes for linear kinetic
equations (see the examples in Sec.~\ref{Sec:ExactLinKin} and \ref{Sec:Exact}). It should
be stressed that the strong separation of the relaxation times (Fig.~\ref{fig1SLw}) is
impossible without a small parameter. For the ``$\varepsilon=1$ approach'' we can expect only
some dominance of the relaxation towards the hydrodynamic manifold over the relaxation
along it.

{\em The capillarity hypothesis} was proposed very recently by Slemrod
\cite{SlemQuaterly2012,SlemCAMWA2013}. He advocated the $\varepsilon=1$ approach and
studied the exact sum of the Chapman--Enskog series obtained in
\cite{GKPRL96,KGAnPh2002}. Slemrod  demonstrated that in the balance of
the kinetic energy (\ref{EnergyXfin}) a viscosity term appears (\ref{CapillIdea}) and the
saturation of dissipation can be represented as the interplay between viscosity and
capillarity (Sec.~\ref{Sec:caplillarity}).

On the basis of this idea and some heuristics about the relation between the moment (Grad)
equations and the genuine Boltzmann equation, Slemrod suggested that the proper exact
hydrodynamic equation should have the form of the Korteweg hydrodynamics
\cite{Korteweg1901,Dunn1985,Slem1} rather than of Euler or Navier--Stokes ones.

The capillarity--like terms appear, indeed, in the energy balance for all hydrodynamic
equations found as a projection of the kinetic equations onto the exact or approximate
invariant hydrodynamic manifolds. In that (`wide') sense, the capillarity hypothesis is
plausible. In the more narrow sense, as does the validity of the Korteweg hydrodynamics, the
capillarity hypothesis requires some efforts for reformulation. The interplay between
nonlinearity and nonlocality on the hydrodynamic manifolds seems to be much more complex
than in the Korteweg equations (see, for example, Sec.~\ref{IMBoltzmann}, eq.
(\ref{4.74b}), or \cite{GKTTSP94,GorKarLNP2005}). For a serious consideration of this
hypothesis we have to find out for which asymptotic assumption we expect it to be valid (if
$\varepsilon=1$ then this question is non-trivial).

{\bf In the context of the exact solution of the invariance equations, three problems
become  visible:}
\begin{enumerate}
\item{To prove the existence of the hydrodynamic invariant manifold for the
    linearized Boltzmann equation.}
\item{To prove the existence of the analytic hydrodynamic invariant manifold for the
    Boltzmann equation.}
\item{To match the low-frequency, small gradient asymptotics of the invariant
    manifold with the high-frequency, large gradient asymptotics and prove the
    universality of the matched asymptotics in some limits.}
\end{enumerate}

The first problem seems to be not extremely difficult. For its positive solution, the
linearized collision operator should be bounded and satisfy the spectral gap condition.

For the nonlinear Boltzmann equation, the existence of the analytic invariant manifold seems
to be plausible but the singularities in the first Newton--Kantorovich approximation
(Sec.~\ref{IMBoltzmann}) may give a hint about the possible difficulties in the highly
non-linear regions. In this first approximation,  flows with very high negative
divergence cannot appear in the evolution of flows with lower divergence because the
viscosity tends to infinity. This  `solid jet' \cite{GKsolidliq1995} effect can be
considered as a sort of phase transition.

The idea of an exact hydrodynamic invariant manifold is  attractive and the approximate
solutions of the invariance equation can be useful but the possibility of elegant
asymptotic solution is very attractive too. Now we know that we do not know how to state
the proper problem. Can the observable hydrodynamic regimes be considered as solutions of
a simplified hydrodynamic equation? Here a new, yet non-mathematical notion appears, ``the
observable hydrodynamic regimes''. We can speculate now, that when the analytic invariant
manifold exists, then together with the low-frequency, low-gradient Chapman--Enskog
asymptotics the high--frequency and high--gradient asymptotics of the hydrodynamic
equations are also achievable in a constructive simple form (see examples in
Sec.~\ref{HighFreqAs} and Sec.~\ref{Sec:NlinVis}). The bold hypothesis \#3 means that in
some asymptotic sense only the extreme cases are important and the behavior of the
invariant manifold between them may be substituted by matching asymptotics. We still do
not know an exact formulation of this hypothesis and can only guess how the behaviour of
the hydrodynamic solutions becomes dependent only on the extreme cases. Some hints may be
found in recent works about the universal asymptotics of solutions of PDEs with small
dissipation \cite{Dubrovin2012} (which develop the ideas of Il'in proposed in the analysis of
boundary layers \cite{Ilin1992}).

We hope that problem \#1 about the existence  of hydrodynamic invariant manifolds for the
linearized Boltzmann equation will be solved soon, problem \#2 about the full
nonlinear Boltzmann equation may be approached and solved after the first one. We expect
that the answer will be positive: hydrodynamic invariant manifolds do exist under the
spectral gap condition.

Once the first two problems will be solved then the entire object, the hydrodynamic
invariant manifold will be outlined. For this manifold,  the various asymptotic
expansions could be produced, for low frequencies and gradients, for high frequencies,
and for large gradients. Matching of these expansions and analysis of the resulting
equations may give material for the exploration of hypothesis \#3. Some guesses about the
resulting equations may be formulated now, on the basis of the known results. For example
we can expect that non-locality may be reduced to the substitution of the time derivative
$\partial_t$ in the system of fluid dynamic equations by $(1-W\Delta)\partial_t$, where
$\Delta$ is the Laplace operator and $W$ is a positive definite matrix (compare to
Sec.~\ref{HighFreqAs}). It seems interesting and attractive that the resulting equations
may be new and, at the same time, simple and beautiful hydrodynamic equations.

From the mathematical perspective, the approach based on the invariance equation now creates
more questions than answers. It changes the problem statement and the exact solutions
give us some hints about the possible answers.

\section*{Acknowledgements}
We are grateful to M. Gromov for stimulating discussion, to M. Slemrod for inspiring
comments and ideas and to L. Saint-Raymond for useful comments.
IK gratefully acknowledges support by the European Research Council
(ERC) Advanced Grant 291094-ELBM.

\section*{About the authors}
Professor Alexander N. Gorban holds a personal chair in Applied
Mathematics at the University of Leicester since 2004. He  worked
for Russian Academy of Sciences, Siberian Branch (Krasnoyarsk, Russia)
and ETH Z\"urich (Switzerland), was a visiting professor and
research scholar at Clay Mathematics Institute (Cambridge, US), IHES
(Bures--sur-Yvette, \^Ile de France), Courant Institute of
Mathematical Sciences (NY, US) and Isaac Newton Institute for
Mathematical Sciences (Cambridge, UK). Main research interests:
Dynamics of systems of physical, Chemical and biological kinetics;
Biomathematics; Data mining and model reduction problems.

Professor Ilya Karlin is Faculty Member at the Department of Mechanical and
Process Engineering, ETH Zurich, Switzerland. He was Alexander von Humboldt
Fellow at the University of Ulm (Germany), CNR Fellow at the Institute of
Applied Mathematics CNR ``M. Picone'' (Rome, Italy), and Senior Lecturer
in Multiscale Modeling at the University of Southampton (England).
Main research interests: Exact and non-perturbative results in kinetic theory;
Fluid dynamics; Entropic lattice Boltzmann method;
Model reduction for combustion systems.


\end{document}